\documentclass{lmcs}

\keywords{alternating-time temporal logic, mean-payoff games, concurrent
  games, strategic ability, model checking, multi-agent systems}

\usepackage{hyperref}
\usepackage{amsmath}
\usepackage{amssymb}
\usepackage{booktabs}
\usepackage{placeins}
\usepackage{tikz}
\usetikzlibrary{automata,positioning,arrows.meta}

\newcommand{\appref}[1]{Appendix~\ref{#1}}

\newcommand{\ATL}{\mathrm{ATL}}
\newcommand{\ATLs}{\mathrm{ATL}^{*}}
\newcommand{\ATLmp}{\mathrm{ATL}_{\mathsf{mp}}}
\newcommand{\ATLsmp}{\mathrm{ATL}^{*}_{\mathsf{mp}}}
\newcommand{\LTL}{\mathrm{LTL}}
\newcommand{\GR}{\mathrm{GR}(1)}
\newcommand{\ATLsmpGR}{\ATLsmp[\GR]}

\newcommand{\game}{\mathcal{G}}
\newcommand{\arena}{\mathcal{A}}
\newcommand{\St}{\mathit{St}}
\newcommand{\Ac}{\mathit{Ac}}
\newcommand{\Av}{\mathit{Av}}
\newcommand{\acv}{\vec{ac}}
\newcommand{\tr}{\mathit{tr}}
\newcommand{\lab}{\mathit{lab}}
\newcommand{\AP}{\mathit{AP}}
\newcommand{\sinit}{s_{\mathit{init}}}
\newcommand{\pay}{\mathit{pay}}
\newcommand{\mpf}{\mathit{mp}}

\newcommand{\Core}{\mathrm{Core}}
\newcommand{\coal}[1]{\langle\!\langle #1 \rangle\!\rangle}
\newcommand{\dcoal}[1]{[\![ #1 ]\!]}
\newcommand{\ext}[1]{[\![#1]\!]}
\newcommand{\out}{\pi}
\newcommand{\proj}{\mathit{pr}}
\newcommand{\pre}{\mathit{pre}}
\newcommand{\pri}{\mathit{pri}}
\newcommand{\dpa}{\mathcal{D}}
\newcommand{\dm}{\mathit{dim}}
\newcommand{\PR}{\mathrm{pr}}
\newcommand{\FM}{\mathrm{fm}}
\newcommand{\ML}{\mathrm{ml}}
\newcommand{\sub}{\mathit{sub}}
\newcommand{\size}[1]{|#1|}
\newcommand{\Xop}{\mathbf{X}}
\newcommand{\Uop}{\mathbin{\mathbf{U}}}
\newcommand{\Fop}{\mathbf{F}}
\newcommand{\Gop}{\mathbf{G}}
\newcommand{\Dominated}{\textsc{Dominated}}

\newcommand{\MPG}{\textsc{Mpg}}
\newcommand{\MPP}{\textsc{Mpp}}
\newcommand{\NP}{\textsc{NP}}
\newcommand{\coNP}{\textsc{coNP}}
\newcommand{\PSPACE}{\textsc{PSpace}}
\newcommand{\PTIME}{\mathrm{P}}
\newcommand{\TwoExp}{2\textsc{Exptime}}

\newcommand{\hW}{\widehat{W}}

\begin{document}

\title[ATL with Mean-Payoff Guarantees]{Alternating-Time Temporal
  Logic\texorpdfstring{\\}{ }with Mean-Payoff Guarantees}

\author[M.~Najib]{Muhammad Najib\lmcsorcid{0000-0002-6289-5124}}
\address{Heriot-Watt University, Edinburgh, UK}
\email{m.najib@hw.ac.uk}

\begin{abstract}
  Alternating-time temporal logic and its extensions provide several ways of combining strategic and quantitative reasoning. We study a particular combination: whether a coalition has a single strategy that enforces a temporal objective while guaranteeing given long-run mean-payoff thresholds. We introduce $\ATLsmp$, an extension of $\ATLs$ over weighted concurrent game structures in which each strategic modality carries a conjunctive mean-payoff constraint. The temporal and quantitative requirements must hold against every behaviour of the remaining agents, and the existence of such a strategy cannot in general be reduced to the two requirements considered separately. For one-dimensional constraints, model checking is $\TwoExp$-complete under both perfect-recall and finite-memory semantics, matching $\ATLs$. For the pure quantitative fragment and fragments restricted to $\ATL$ or $\GR$ temporal objectives, model checking has lower complexity. With multi-dimensional conjunctive constraints, model checking under finite-memory semantics remains $\TwoExp$-complete. We show that memoryless, finite-memory, and perfect-recall abilities form a strict hierarchy, while finite-memory strategies still achieve every threshold strictly below the perfect-recall supremum. We give tight linear upper and lower bounds on the required memory as a function of the denominator of the threshold, even when the game and temporal monitor are fixed. We give several examples of properties expressible in the logic, including temporal synthesis with performance guarantees and aggregate and multi-criteria objectives. We also relate the logic to cooperative rational verification, showing that it can express beneficial deviations from fixed payoff baselines, but not directly reproduce the standard $\ATLs$ encoding of the core for dichotomous preferences.
\end{abstract}

\maketitle

\section{Introduction}\label{sec:intro}

Alternating-time temporal logic ($\ATL$) and its more expressive extension
$\ATLs$ \cite{AlurHK02} are among the most widely used formalisms for
reasoning about multi-agent systems. Their central construct is the
\emph{strategic modality} $\coal{C}\psi$, which states that the coalition $C$
has a strategy enforcing the temporal property $\psi$, whatever the remaining
agents do. This modality underlies much of the work on verification and
synthesis for open and multi-agent systems, as well as more expressive
languages such as Strategy Logic \cite{ChatterjeeHP10,MogaveroMPV14}.

$\ATL$ and $\ATLs$ cannot express the long-run performance associated with
a coalition strategy. The formula $\coal{C}\Gop\,\mathit{safe}$ states that
$C$ can keep the system safe, but says nothing about the energy consumed, the
throughput maintained or the reward obtained. Mean-payoff games
\cite{EhrenfeuchtM79,ZwickP96} express such long-run quantities, but do not
combine them with temporal requirements. We study whether a coalition has a
single strategy that enforces a temporal objective and a mean-payoff
threshold against every counter-strategy. This is stronger than requiring the
two objectives to be enforceable separately, as \autoref{prop:nondecomp}
shows.

We introduce $\ATLsmp$, a conservative extension of
$\ATLs$ interpreted over weighted concurrent game structures in which each
strategic modality carries a mean-payoff constraint,
\[
  \coal{C}_{\Lambda}\,\psi ,
  \qquad
  \Lambda ::= \top \mid \mpf_j \geq q \mid \Lambda \wedge \Lambda .
\]
Setting $\Lambda = \top$ recovers $\ATLs$ exactly, while setting $\psi = \top$
yields a query about the payoff vectors a coalition can enforce, in the sense
of multi-mean-payoff games \cite{VelnerCDHRR15,ChatterjeeRR14}. The combined
case has both a non-trivial temporal objective and a non-trivial mean-payoff
condition. For example,
\[
  \coal{\{1,2\}}_{\mpf_1 \geq 0\, \wedge\, \mpf_2 \geq 0}
    \bigl(\Gop(r_1 \to \Fop d_1) \wedge \Gop(r_2 \to \Fop d_2)\bigr) ,
\]
states that the two agents have a joint policy that responds to every request
of either type. The same policy must also keep both long-run rewards non-negative against every behaviour of the environment.

\paragraph{Contributions.}
Our contributions are as follows.

\begin{enumerate}[(1)]
\item \emph{Logic and semantics.} We define weighted concurrent game
  structures and the logic $\ATLsmp$ over them (\autoref{sec:model}), and
  show that it conservatively extends $\ATLs$. The combined modality cannot
  in general be reduced to the conjunction of an $\ATLs$ ability and a
  multi-mean-payoff ability:
  \[
    \coal{C}_{\Lambda}\psi \;\not\equiv\;
    \coal{C}\psi \wedge \coal{C}_{\Lambda}\top .
  \]
  Conjoining the corresponding separate queries therefore does not capture
  the requirement that a single coalition strategy enforce both objectives
  (\autoref{prop:nondecomp}).

\item \emph{Game characterisation.} The usual sequentialisation of a
  concurrent game introduces an intermediate state after each coalition move.
  This preserves mean payoff and stutter-invariant temporal objectives, but
  not arbitrary $\LTL$ formulae containing $\Xop$. We give a
  \emph{round-preserving} sequentialisation in which the temporal automaton
  advances once per concurrent round, and establish strategy correspondence
  under perfect-recall and finite-memory semantics (\autoref{thm:seq} and
  \autoref{thm:fm-correspondence}).

\item \emph{Model checking.} We give a bottom-up algorithm which, for each
  strategic subformula, builds the product of the round-preserving
  sequentialisation with a deterministic parity automaton for the path
  formula and solves the resulting mean-payoff parity game
  \cite{ChatterjeeHJ05,ChatterjeeD12}. With one-dimensional constraints,
  model checking is $\TwoExp$-complete under both perfect-recall and
  finite-memory semantics (\autoref{thm:main}), matching $\ATLs$
  \cite{AlurHK02}. We give sharper bounds for restricted temporal fragments
  in \autoref{sec:fragments}. We then show that model checking with arbitrary
  conjunctive constraints remains $\TwoExp$-complete under finite-memory
  semantics; the corresponding perfect-recall problem remains open
  (\autoref{sec:multi}).

\item \emph{Memory requirements.} We show that memoryless, finite-memory and
  perfect-recall abilities form a strict hierarchy on positive formulae
  (\autoref{thm:sep}), while every threshold strictly below the
  perfect-recall supremum is achievable with finite memory
  (\autoref{thm:approx}). The required memory may grow linearly in the
  denominator of the threshold, and hence exponentially in its binary
  encoding; \autoref{thm:memory-bounds} gives matching bounds.

\item \emph{Applications and expressive limits.} We give several examples
  of properties expressible in $\ATLsmp$, including temporal synthesis with
  performance guarantees and aggregate and multi-criteria objectives. We also
  relate the logic to cooperative rational verification: beneficial
  deviations from fixed payoff baselines are expressible, but the standard
  $\ATLs$ encoding of the core for dichotomous preferences does not extend
  directly to mean-payoff preferences, since the relevant deviation
  thresholds depend on the payoff of the candidate profile
  (\autoref{sec:apps}).
\end{enumerate}

\autoref{tab:main} summarises the principal model-checking bounds. More
refined bounds for restricted one-dimensional temporal fragments are given in
\autoref{tab:summary}. For a fixed number of dimensions, the quantitative,
$\ATLmp$ and $\ATLsmpGR$ fragments are moreover pseudo-polynomial under
finite-memory semantics (\autoref{prop:fixed-dim}).

\begin{table}[!ht]
  \centering
  \caption{Main model-checking complexity results, where ``-c'' abbreviates
    ``-complete''. The entries for the full logic concern arbitrary nesting. Under memoryless semantics the
    $\PSPACE$ bound holds for arbitrary conjunctive constraints, and
    hardness already holds when every constraint is $\top$. The
    multi-dimensional quantitative entries concern strategic atoms
    $\coal{C}_{\Lambda}\top$; under finite-memory and perfect-recall
    semantics, flat Boolean combinations of such atoms are in
    $\PTIME^{\NP}_{\parallel}$.}
  \label{tab:main}
  \small
  \setlength{\tabcolsep}{5pt}
  \begin{tabular}{lccc}
    \toprule
    & \multicolumn{2}{c}{Full $\ATLsmp$} & Quantitative atoms \\
    \cmidrule(lr){2-3}\cmidrule(lr){4-4}
    Semantics & one-dimensional & multi-dimensional & multi-dimensional \\
    \midrule
    Memoryless
      & \multicolumn{2}{c}{$\PSPACE$-c (Thm.~\ref{thm:memoryless})}
      & $\NP$-c (Prop.~\ref{prop:pure-multi}) \\
    Finite memory
      & $\TwoExp$-c (Thm.~\ref{thm:main})
      & $\TwoExp$-c (Thm.~\ref{thm:multi-full})
      & $\coNP$-c (Prop.~\ref{prop:pure-multi}) \\
    Perfect recall
      & $\TwoExp$-c (Thm.~\ref{thm:main})
      & open (Open Prob.~\ref{oprob:multi-pr})
      & $\coNP$-c (Prop.~\ref{prop:pure-multi}) \\
    \bottomrule
  \end{tabular}
\end{table}

\FloatBarrier

\paragraph{Related work.}
Bulling and Goranko \cite{BullingG13,BullingG22} study concurrent games with
accumulated utilities and utility-dependent action guards. They identify
arithmetic path constraints interpreted over mean payoffs as a direction for
future work \cite{BullingG22}. In contrast, the conditions considered here
are prefix-independent. Resource-bounded variants of $\ATL$
\cite{AlechinaLNR17,BullingF10,DellaMonicaNM11} constrain a finite budget
along a play. Energy constraints originate in weighted timed systems
\cite{BouyerFLMS08}; in graph games, energy objectives constrain the running
balance \cite{ChatterjeeD12,ChatterjeeDHR10}. Resource constraints and energy objectives both depend on
the preceding prefix. Boker et al.\ \cite{Boker0KKS14} study the related
setting of $\LTL$ with accumulative values and identify its decidability
boundary. At the algorithmic level, \autoref{lem:fm-energy} uses the
corresponding relationship between finite-memory mean-payoff parity
objectives and energy parity objectives.

Quantitative variants that assign a degree of satisfaction to a formula,
rather than evaluating a quantity along an outcome, have been studied for
Strategy Logic \cite{BouyerKMMMP23} and for $\ATLs$ \cite{MuranoNZ23}.
Murano, Neider and Zimmermann give a multi-valued semantics for the strategic
and temporal operators of $\ATL$ and $\ATLs$, designed to make satisfaction
robust to small violations of assumptions. Model checking
$\mathrm{rATL}^{*}$ is also $\TwoExp$-complete, but its quantitative
component measures degrees of satisfaction, whereas $\ATLsmp$ constrains the
mean payoff of an outcome. Their analysis
does not consider the distinctions between memoryless, finite-memory and
perfect-recall strategies studied here.

$\LTL$ synthesis with mean-payoff objectives has been studied in two-player
turn-based games \cite{BloemCHJ09,BohyBFJR13}. The problem reduces to
mean-payoff parity games and is $\TwoExp$-complete. For mean-payoff parity
games, exact optimality may require infinite memory, while finite-memory
strategies suffice for $\varepsilon$-approximation \cite{ChatterjeeHJ05};
this property carries over to $\LTL$ synthesis with mean-payoff objectives
\cite{BohyBFJR13}. Our setting allows arbitrary coalitions and
nested strategic modalities, and the underlying game is concurrent rather
than turn-based. \autoref{sec:seq} extends the corresponding game reduction
to concurrent games and arbitrary coalitions.

Rational verification asks which temporal properties hold in the stable outcomes of a multi-agent system \cite{GutierrezHW17,GutierrezNPW20,AbateGMPSSW21}; most of that work uses Nash equilibrium and its refinements, with equilibria in limit-average games studied in \cite{UmmelsW11} and
rational-verification problems for memoryless mean-payoff games with
$\omega$-regular specifications in \cite{GutierrezSW21}. We compare $\ATLsmp$ with the cooperative, core-based formulations of \cite{GutierrezKW23,GutierrezLNSW24} in \autoref{sec:rv}.

Quantitative languages \cite{ChatterjeeDH10} provide an automata-theoretic account of long-run values without strategic modalities. Strategy Logic \cite{ChatterjeeHP10,MogaveroMPV14} permits strategies to be quantified, named, and bound explicitly, but its standard form contains no payoff terms or mean-payoff objectives. Combining named strategy profiles with such payoff terms would allow the payoffs of a deviation to be compared with those of an enclosing profile, as required by the rational-verification application discussed in \autoref{sec:rv}.

\section{A motivating example}\label{sec:example}

The following example illustrates the combined temporal and quantitative
requirement, the resulting threshold trade-off, and the need for memory.

\begin{exa}[Coordinated warehouse robots]\label{exa:service}
  Two mobile robots, $1$ and $2$, jointly transport oversized pallets from
  two loading bays to a dispatch area. Transporting a pallet requires both
  robots to select the same loading bay, $L$ or $R$. The two bays feed
  different production lines, and both must be served repeatedly. An environment agent
  $e$ represents other warehouse traffic and determines whether the shared
  transport aisle is clear. The arena is shown in \autoref{fig:service}.

  At the staging state $m$ each robot chooses between $L$ and $R$, while the
  environment chooses between $\mathit{clear}$ and $\mathit{blocked}$. If the
  robots agree on $L$ and the aisle is clear, they complete a delivery from
  the left bay and the play moves to $\ell$; agreement on $R$ similarly
  completes a delivery from the right bay and leads to $r$. If the robots
  choose different bays, or if the aisle is blocked, the system enters the
  recovery state $c$. From $\ell$ and $r$ the robots return to the staging
  state. At $c$ the warehouse safety controller guarantees that the aisle is
  cleared before the next attempt, so the environment has no
  $\mathit{blocked}$ action there. Thus a blockage may delay a delivery but
  cannot prevent deliveries indefinitely.

  States are labelled over $\AP = \{d_1, d_2\}$, where $d_1$ and $d_2$
  record completion of a delivery from the left and right bays respectively:
  $\lab(\ell) = \{d_1\}$, $\lab(r) = \{d_2\}$ and
  $\lab(m) = \lab(c) = \emptyset$. The two weight dimensions measure the
  long-run service obtained by the two production lines. Completing a
  delivery contributes $4$ to the corresponding dimension, and entering
  recovery incurs a unit penalty in both:
  \[
    w(m) = (0,0), \quad
    w(\ell) = (4,0), \quad
    w(r) = (0,4), \quad
    w(c) = (-1,-1).
  \]
\end{exa}

\begin{figure}
  \centering
  \begin{tikzpicture}[
      ->, >={Stealth[round]}, shorten >=1pt, semithick,
      every state/.style={draw, minimum size=8.5mm, inner sep=1pt},
      lbl/.style={font=\footnotesize, inner sep=1.5pt, fill=white}]
    \node[state] (m) at (0,1.3)   {$m$};
    \node[state] (c) at (0,-1.3)  {$c$};
    \node[state] (l) at (4.0,1.3) {$\ell$};
    \node[state] (r) at (4.0,-1.3){$r$};

    \path
      (m) edge node[lbl,above]            {$(L,L,\mathit{clear})$} (l)
      (m) edge node[lbl,pos=.30]          {$(R,R,\mathit{clear})$} (r)
      (m) edge node[lbl,left,align=right]
            {$(*,*,\mathit{blocked})$\\ or disagreement}           (c)
      (c) edge node[lbl,pos=.80]          {$(L,L)$}             (l)
      (c) edge[bend right=42] node[lbl,below]            {$(R,R)$}             (r)
      (c) edge[loop below] node[lbl,below]{disagreement}        (c)
      (l) edge[bend right=42] node[lbl,above] {$*$}             (m)
      (r) edge[bend left=42]  node[lbl,below] {$*$}             (m);
  \end{tikzpicture}
  \caption{The warehouse arena of \autoref{exa:service}. An edge label
    $(a_1,a_2,a_e)$ records the actions of the two robots and the
    environment. The symbol $*$ matches every available robot action, and
    ``disagreement'' abbreviates the two profiles in which the robots choose
    different loading bays. At $c$ the aisle has been cleared, so the
    environment has only the action $\mathit{clear}$ and the outgoing edge
    labels specify only the robots' actions. Weights are $w(m)=(0,0)$, $w(\ell)=(4,0)$,
    $w(r)=(0,4)$ and $w(c)=(-1,-1)$.}
  \label{fig:service}
\end{figure}

Write $C = \{1,2\}$ for the coalition of robots, and consider
\[
  \psi \;=\; \Gop\Fop d_1 \wedge \Gop\Fop d_2
  \qquad\text{and}\qquad
  \Lambda_q \;=\; \mpf_1 \geq q \wedge \mpf_2 \geq q .
\]
The temporal objective requires both production lines to be served infinitely
often, while $\Lambda_q$ places a lower bound on their long-run service
levels once recovery penalties are taken into account.

Alternating between the two loading bays satisfies $\psi$, while repeatedly
serving one bay can provide a high payoff in the corresponding dimension.
These properties may therefore be witnessed by different policies. The
formula $\coal{C}_{\Lambda_q}\psi$ instead requires one joint policy to
satisfy both, which by \autoref{prop:nondecomp} cannot in general be
decomposed into separate qualitative and quantitative abilities.

The achievable thresholds depend on how service is divided between the two
bays. Suppose the aisle is blocked whenever the environment is able to block
it, that is, at every visit to $m$. A joint policy that always agrees and
alternates between the two bays then produces the play
$(m\, c\, \ell\, m\, c\, r)^{\omega}$, whose weights sum over one cycle to
$(4,4) - (2,2) = (2,2)$ in six steps, so both mean payoffs equal
$\tfrac13$. Hence $\game, m \models \coal{C}_{\Lambda_{1/3}}\psi$. No larger
common threshold can be guaranteed against this environment. Every delivery
is then preceded by a visit to $c$, so each delivery contributes at most
$4 - 2 = 2$ to the sum of the two dimensions over at least three states, and
the long-run average of that sum is at most $\tfrac23$. Since
$\liminf_n a_n + \liminf_n b_n \leq \liminf_n (a_n + b_n)$, we get
$\mpf_1(\out) + \mpf_2(\out) \leq \tfrac23$ for every play $\out$, so both
mean payoffs exceeding $\tfrac13$ is impossible. Thus
$\game, m \not\models \coal{C}_{\Lambda_{q}}\psi$ for every
$q > \tfrac13$. More generally, the enforceable threshold
vectors are downward closed. Their Pareto boundary describes the trade-off
between the two service levels. As \autoref{sec:memory} shows, a boundary
threshold need not be attained by a finite-memory strategy.

No memoryless joint policy satisfies $\psi$. Such a policy fixes the robots'
choices at both $m$ and $c$. If they agree at $m$, the environment can always
play $\mathit{clear}$, so only the bay selected at $m$ is ever served. If
they disagree at $m$, every visit to $m$ leads to $c$, where the fixed joint
action completes at most one of the two deliveries. Either way the
environment can prevent one of $d_1$ and $d_2$ from recurring. A two-state
controller suffices: its memory records the bay served most recently.

\section{Models and logic}\label{sec:model}

Given any set $X$, we use $X^{*}$, $X^{\omega}$ and $X^{+}$ for,
respectively, the sets of finite, infinite, and non-empty finite sequences of
elements in $X$. For $Y \subseteq X$ we write $X_{-Y}$ for $X \setminus Y$
and $X_{-i}$ if $Y = \{i\}$. We extend this notation to tuples
$\vec{x} = (x_1, \dots, x_n)$, writing $\vec{x}_{C}$ for the restriction of
$\vec{x}$ to the indices in $C$ and $\vec{x}_{-C}$ for its restriction to the
complement. For a sequence $v$ we write $v[t]$ or $v^{t}$ for the element in
position $t+1$, so that $v[0] = v^{0}$ is the first element, and $v^{\geq k}$
for the suffix $v[k]v[k{+}1]\cdots$. Given two vectors
$\vec{a}, \vec{b} \in \mathbb{Q}^{d}$, the notations $\vec{a} \geq \vec{b}$
and $\vec{a} > \vec{b}$ denote componentwise inequality. Rational numbers are represented in the
usual binary encoding of numerator and denominator, and $\|q\|$ denotes the
size of that encoding.

\paragraph{Mean-Payoff.}
For an infinite sequence of real numbers
$r_0 r_1 r_2 \cdots \in \mathbb{R}^{\omega}$, its \emph{mean payoff} is
\[
  \mpf(r) \;=\; \liminf_{n \to \infty}\frac{1}{n}\sum_{t<n} r_t .
\]
We use the lower long-run average throughout, as in the mean-payoff game
models considered below \cite{ZwickP96,ChatterjeeHJ05,VelnerCDHRR15}. Mean
payoff is prefix-independent: $\mpf(r) = \mpf(r^{\geq k})$ for every
$k \geq 0$.

\paragraph{Temporal Logics.}
We use $\LTL$ \cite{Pnueli77} with the usual temporal operators $\Xop$
(``next'') and $\Uop$ (``until''), and the derived operators $\Gop$
(``always'') and $\Fop$ (``eventually''). We also use $\GR$
\cite{BloemJPPS12}, the fragment of $\LTL$ consisting of formulae of the form
$(\Gop\Fop\psi_1 \wedge \cdots \wedge \Gop\Fop\psi_m) \to
 (\Gop\Fop\vartheta_1 \wedge \cdots \wedge \Gop\Fop\vartheta_k)$,
where each $\psi_\ell$ and $\vartheta_r$ is a Boolean combination of atomic
propositions. We write $\alpha \models \varphi$ to indicate that the infinite
sequence $\alpha \in (2^{\AP})^{\omega}$ satisfies $\varphi$ under the
standard semantics.

\subsection{Weighted concurrent game structures}\label{sec:wcgs}

\begin{defi}[Arena]\label{def:arena}
  An \emph{arena} is a tuple
  \[
    \arena = \langle N, \{\Ac_i\}_{i \in N}, \{\Av_i\}_{i \in N}, \St,
      \sinit, \tr, \lab \rangle
  \]
  where $N = \{1,\dots,n\}$ is a finite non-empty set of players, $\Ac_i$ is
  the finite non-empty set of actions of player $i$, and $\St$ is a finite
  non-empty set of states with initial state $\sinit \in \St$;
  $\Av_i : \St \to 2^{\Ac_i}\setminus\{\emptyset\}$ is the \emph{protocol
  function} of player $i$, assigning to each state the non-empty set of
  actions available to $i$ there; $\lab : \St \to 2^{\AP}$ is a labelling
  function; and $\tr$ is a transition function mapping each pair consisting
  of a state $s \in \St$ and an action profile
  $\acv \in \vec{\Ac}(s) := \Av_1(s) \times \cdots \times \Av_n(s)$ to a
  successor state $\tr(s,\acv) \in \St$. For $C \subseteq N$ we write
  $\vec{\Ac}_C(s) = \prod_{i \in C}\Av_i(s)$ for the set of \emph{joint
  actions} of $C$ at $s$, and $\vec{\Ac}_{-C}(s)$ for those of the
  complement.
\end{defi}

As in standard concurrent game structures \cite{AlurHK02}, protocol
functions allow the available actions to depend on the state. In
\autoref{exa:service}, for instance, the environment cannot
play $\mathit{blocked}$ at $c$. Turn-based games can be represented by giving
every player other than the owner of a state a single available action there.

A \emph{play} from $s \in \St$ is an infinite sequence
$\out = s^{0}s^{1}s^{2}\cdots \in \St^{\omega}$ with $s^{0} = s$ such that
for every $t \geq 0$ there is $\acv^{t} \in \vec{\Ac}(s^{t})$ with
$\tr(s^{t},\acv^{t}) = s^{t+1}$. A \emph{history} is a non-empty finite
prefix of a play, and $\mathit{last}(h)$ denotes its last state. A play
$\out$ induces the sequence of labels
$\lab(\out) = \lab(s^{0})\lab(s^{1})\cdots$.

\begin{defi}[Weighted concurrent game structure]\label{def:wcgs}
  A \emph{weighted concurrent game structure} (WCGS) is a tuple
  $\game = \langle \arena, w \rangle$ where $\arena$ is an arena and
  $w : \St \to \mathbb{Z}^{d}$ is a weight function, for some $d \geq 1$. We
  write $w_j$ for the $j$-th component of $w$, $D = \{1,\dots,d\}$ for the
  set of \emph{dimensions}, and $W = \max\{|w_j(s)| : j \in D, s \in \St\}$.
  Given a play $\out$ and a dimension $j$, we write
  $\mpf_j(\out) = \mpf(w_j(\out))$ where
  $w_j(\out) = w_j(s^{0})w_j(s^{1})\cdots$. We write $\size{\game}$ for the
  size of a standard explicit representation of $\game$, including its
  transition table; in particular $\size{\vec{\Ac}(s)} \leq \size{\game}$ for
  every $s$, and the weights are encoded in binary, so $W$ may be exponential
  in $\size{\game}$.
\end{defi}

The number $d$ of dimensions need not coincide with the number $n$ of agents. A dimension may represent an individual utility or a system-level quantity such as energy consumption, latency, throughput, or social welfare. We attach weights to states, whereas mean-payoff games are often presented with edge weights \cite{ZwickP96,ChatterjeeHJ05}. A state-weighted game can be transformed straightforwardly into a polynomial-size edge-weighted game by assigning $w(s)$ to every edge leaving $s$, preserving the mean payoff of every play.

\begin{defi}[Mean-payoff constraint]\label{def:constraint}
  A \emph{mean-payoff constraint} over $D$ is generated by the grammar
  \[
    \Lambda ::= \top \;\mid\; \mpf_j \geq q \;\mid\; \Lambda \wedge \Lambda ,
  \]
  where $j \in D$ and $q \in \mathbb{Q}$. We write $\dm(\Lambda)$ for the set
  of dimensions occurring in $\Lambda$, call $\Lambda$
  \emph{one-dimensional} if $|\dm(\Lambda)| \leq 1$, and write
  $\|\Lambda\|$ for the size of its binary encoding. Satisfaction by a play
  is defined by: $\out \models \top$ always; $\out \models \mpf_j \geq q$ iff
  $\mpf_j(\out) \geq q$; and $\out \models \Lambda_1 \wedge \Lambda_2$ iff
  $\out$ satisfies both.
\end{defi}

For a threshold $q = a/b$ in lowest terms with $b > 0$, we write
$W_{q} = b\cdot W + |a|$ for the largest absolute weight occurring after the
normalisation that replaces $w_j$ by $b\cdot w_j - a$. This normalisation
turns $\mpf_j \geq q$ into $\mpf_j \geq 0$ and is used whenever we quote a
pseudo-polynomial bound. For a constraint $\Lambda$ we define
$W_{\Lambda} = \max\{W_{q} : (\mpf_j \geq q) \text{ occurs in } \Lambda\}$,
with $W_{\top} = 0$. In running-time bounds we write
$\hW_{\Lambda} = \max\{1, W_{\Lambda}\}$, so that the unconstrained case
$\Lambda = \top$ does not collapse a multiplicative bound to zero, and
similarly $\hW_{\varphi} = \max\{1, W_{\varphi}\}$.

Constraints are conjunctions of non-strict lower bounds. A robust upper
bound
\[
  \limsup_{n \to \infty}\frac{1}{n}\sum_{t<n} w_j(s^{t}) \;\leq\; q
\]
can be represented by adding the negated weight dimension and requiring its
lower mean payoff to be at least $-q$. A cost can therefore be represented
by negating the corresponding weight dimension. An upper bound on the lower mean payoff itself is not
expressible in this way.

The restriction to conjunctions reflects the intended reading of a constraint
as a collection of guarantees that must all hold on every outcome. Allowing
disjunction would give a different semantics: a coalition could satisfy
$\Lambda_1 \vee \Lambda_2$ by satisfying different disjuncts against
different counter-strategies. Consequently
$\coal{C}_{\Lambda_1 \vee \Lambda_2}\psi$ is not equivalent to
$\coal{C}_{\Lambda_1}\psi \vee \coal{C}_{\Lambda_2}\psi$, and may hold when
neither disjunct does.

We also exclude strict inequalities. For fixed-baseline deviations under
finite-memory semantics, \autoref{prop:fixed-baseline} shows that they can be
replaced by non-strict rational thresholds: a fixed finite-memory strategy
induces a finite graph whose cycles determine a positive rational margin. The
same argument does not apply to optimal values, since an optimum need not be
attained by any cycle.

\subsection{Strategies and memory}\label{sec:strategies}

\begin{defi}[Strategies]\label{def:strategies}
  A \emph{strategy} for player $i$ is a function
  $\sigma_i : \St^{+} \to \Ac_i$ with
  $\sigma_i(h) \in \Av_i(\mathit{last}(h))$ for every history $h$. Such a
  strategy is called \emph{perfect-recall}, and we write $\Sigma_i^{\PR}$ for
  the set of them.
  A strategy $\sigma_i$ is a \emph{finite-memory} strategy if it can be
  represented by a finite-state transducer
  $\sigma_i = (Q_i, q_i^{0}, \delta_i, \tau_i)$, where $Q_i$ is a finite
  non-empty set of internal states with initial state $q_i^{0}$,
  $\delta_i : Q_i \times \St \to Q_i$ is a deterministic update function, and
  $\tau_i : Q_i \times \St \to \Ac_i$ is an action function with
  $\tau_i(q,s) \in \Av_i(s)$. The transducer is read as a Mealy machine:
  along a play $s^{0}s^{1}\cdots$ the internal state evolves by
  $q_i^{t+1} = \delta_i(q_i^{t}, s^{t})$ from $q_i^{0}$, and the action taken
  at step $t$ is $\tau_i(q_i^{t}, s^{t})$. The \emph{size} of the strategy is
  $|Q_i|$, and $\Sigma_i^{\FM}$ denotes the set of finite-memory strategies.
  A \emph{memoryless} strategy is one with $|Q_i| = 1$, equivalently a
  function $\sigma_i : \St \to \Ac_i$ respecting $\Av_i$;\footnote{Under the Mealy convention, the action may depend on the current state, so one-state transducers coincide with state-based memoryless strategies.} we write
  $\Sigma_i^{\ML}$ for these. Clearly
  $\Sigma_i^{\ML} \subseteq \Sigma_i^{\FM} \subseteq \Sigma_i^{\PR}$.
\end{defi}

For $C \subseteq N$, a \emph{$C$-strategy} is a tuple
$\vec{\sigma}_C = (\sigma_i)_{i \in C}$ and a \emph{counter-strategy} is a
$(N \setminus C)$-strategy $\vec{\sigma}_{-C}$. A pair
$(\vec{\sigma}_C, \vec{\sigma}_{-C})$ is a \emph{strategy profile}
$\vec{\sigma}$. Since $\tr$ and all strategies are deterministic, a profile
$\vec{\sigma}$ and a state $s$ determine a unique play, denoted
$\out(\vec{\sigma}, s)$; we write $\out(\vec{\sigma})$ for
$\out(\vec{\sigma}, \sinit)$. We write
$\Sigma_C^{\kappa} = \prod_{i \in C}\Sigma_i^{\kappa}$ for
$\kappa \in \{\PR,\FM,\ML\}$.

Since all agents have perfect information, a $C$-strategy can equivalently be
represented by a single function
$\vec{\sigma}_C : \St^{+} \to \bigcup_{s}\vec{\Ac}_C(s)$ prescribing a
joint action after each history. We use the two representations
interchangeably and measure finite memory by the size of the corresponding
joint transducer.

\subsection{Syntax and semantics}\label{sec:syntax}

We now extend $\ATLs$ by allowing each coalition modality to carry a mean-payoff constraint.

\begin{defi}\label{def:syntax}
  Fix a finite set $\AP$ of atomic propositions, a set $N$ of agents and a
  set $D$ of dimensions. The \emph{state formulae} $\varphi$ and \emph{path
  formulae} $\psi$ of $\ATLsmp$ are given by the mutual grammar
  \begin{align*}
    \varphi &::= p \;\mid\; \neg\varphi \;\mid\; \varphi \wedge \varphi
      \;\mid\; \coal{C}_{\Lambda}\,\psi , \\
    \psi &::= \varphi \;\mid\; \neg\psi \;\mid\; \psi \wedge \psi
      \;\mid\; \Xop\psi \;\mid\; \psi \Uop \psi ,
  \end{align*}
  where $p \in \AP$, $C \subseteq N$ and $\Lambda$ is a mean-payoff
  constraint over $D$. Formulae of $\ATLsmp$ are state formulae. We use the
  usual abbreviations $\vee, \to, \Fop\psi = \top \Uop \psi$ and
  $\Gop\psi = \neg\Fop\neg\psi$, and write $\coal{C}\psi$ for
  $\coal{C}_{\top}\psi$ and $\dcoal{C}\psi$ for $\neg\coal{C}_{\top}\neg\psi$.
  We use $\dcoal{C}$ only for the qualitative case: for a non-trivial $\Lambda$, negating $\dcoal{C}_{\Lambda} \psi$ yields a condition in which either the temporal objective or the mean-payoff bounds is violated. The latter introduces a disjunction of strict upper bounds, which is not expressible by the constraint grammar of \autoref{def:constraint}.
  
  An occurrence of a strategic modality is \emph{positive} if it lies under
  an even number of negations, and a formula is \emph{positive} if all of
  its strategic modality occurrences are positive.
\end{defi}

We attach mean-payoff constraints to coalition modalities rather than admitting mean-payoff comparisons as path atoms. This keeps the quantitative component as a side condition on the outcomes of the chosen coalition strategy. Since mean payoff is prefix-independent, applying temporal operators directly to such a comparison would add no distinction on a fixed play:
\[
\Fop(\mpf_j \geq q),\qquad
\Gop(\mpf_j \geq q),\qquad\text{and}\qquad
\mpf_j \geq q
\]
would all be equivalent. Thus, \(\coal{C}_{\Lambda}\psi\) states directly that one coalition strategy must enforce both the temporal objective \(\psi\) and the mean-payoff constraint \(\Lambda\).

\begin{defi}\label{def:size}
  The \emph{size} $\size{\varphi}$ of a formula is the number of its
  subformulae plus the total size of the encodings of the constraints
  occurring in it. We write $\sub(\varphi)$ for the set of state subformulae
  of $\varphi$, and $W_{\varphi}$ for the maximum of $W_{\Lambda}$ over the
  constraints $\Lambda$ occurring in $\varphi$, with $W_{\varphi} = 0$ if
  every such constraint is $\top$.
\end{defi}

Satisfaction is parameterised by a strategy class $\kappa$ for the coalition
and $\lambda$ for its opponents. We write $\models^{\kappa,\lambda}$, and
abbreviate
\[
  \models \;=\; \models^{\PR,\PR} , \qquad
  \models_{\FM} \;=\; \models^{\FM,\FM} , \qquad
  \models_{\ML} \;=\; \models^{\ML,\PR} .
\]
In the memoryless semantics only the proponent coalition is restricted,
following the standard convention for memoryless $\ATL$ and $\ATLs$
\cite{Schobbens04,BullingDJ10,BullingJ14}. Restricting the opponents as well
can change the truth of a formula, since defeating a memoryless coalition
strategy may require memory.

\begin{defi}\label{def:semantics}
  Let $\game$ be a WCGS. State formulae are interpreted at states and path
  formulae at plays. The clauses below define $\models^{\kappa,\lambda}$;
  the superscripts are omitted where no ambiguity arises. For states,
  \begin{align*}
    \game, s &\models p
      &&\text{iff}\quad p \in \lab(s) ; \\
    \game, s &\models \neg\varphi
      &&\text{iff}\quad \game, s \not\models \varphi ; \\
    \game, s &\models \varphi_1 \wedge \varphi_2
      &&\text{iff}\quad \game, s \models \varphi_1
         \text{ and } \game, s \models \varphi_2 ;
  \end{align*}
  and, writing $\out_{s}(\vec{\sigma}_C, \vec{\sigma}_{-C})$ for
  $\out((\vec{\sigma}_C, \vec{\sigma}_{-C}), s)$,
  \begin{align*}
    \game, s \models \coal{C}_{\Lambda}\psi
    \quad&\text{iff}\quad
    \exists \vec{\sigma}_C \in \Sigma_C^{\kappa}\;
    \forall \vec{\sigma}_{-C} \in \Sigma_{-C}^{\lambda} : \\
    &\qquad
    \game, \out_{s}(\vec{\sigma}_C, \vec{\sigma}_{-C}) \models \psi
    \ \text{ and }\
    \out_{s}(\vec{\sigma}_C, \vec{\sigma}_{-C}) \models \Lambda .
  \end{align*}
  For a play $\out$,
  \begin{align*}
    \game, \out &\models \varphi
      &&\text{iff}\quad \game, \out[0] \models \varphi
         \text{, for a state formula } \varphi ; \\
    \game, \out &\models \neg\psi
      &&\text{iff}\quad \game, \out \not\models \psi ; \\
    \game, \out &\models \psi_1 \wedge \psi_2
      &&\text{iff}\quad \game, \out \models \psi_1
         \text{ and } \game, \out \models \psi_2 ; \\
    \game, \out &\models \Xop\psi
      &&\text{iff}\quad \game, \out^{\geq 1} \models \psi ; \\
    \game, \out &\models \psi_1 \Uop \psi_2
      &&\text{iff}\quad \exists k \geq 0 :
         \game, \out^{\geq k} \models \psi_2
         \text{ and } \forall\, 0 \leq t < k :
         \game, \out^{\geq t} \models \psi_1 .
  \end{align*}
  We write $\game \models \varphi$ for $\game, \sinit \models \varphi$.
\end{defi}

The temporal and quantitative conditions are evaluated on the same outcomes
of the same coalition strategy, chosen before the counter-strategy. Taking
$\Lambda = \top$ makes the second conjunct vacuous and leaves the $\ATLs$
clause of \cite{AlurHK02}, so $\ATLsmp$ is a conservative extension of
$\ATLs$.

Mean-payoff constraints are prefix-independent. Hence the truth of a
strategic formula at $s$ is independent of the history leading to $s$, and
state subformulae can be evaluated bottom-up. This would not hold for
accumulated payoff or energy constraints, whose value depends on the
preceding history \cite{BullingG22,AlechinaLNR17}.

The semantics requires the temporal and quantitative conditions to be
enforced by the same coalition strategy. The following proposition shows that
this requirement cannot in general be recovered by checking the two
abilities separately.

\begin{prop}[Non-decomposability]\label{prop:nondecomp}
  There exist a WCGS $\game$, a state $s$, a coalition $C$, a path formula
  $\psi$ and a one-dimensional constraint $\Lambda$ such that
  \[
    \game, s \models \coal{C}\psi
    \quad\text{and}\quad
    \game, s \models \coal{C}_{\Lambda}\top ,
    \quad\text{but}\quad
    \game, s \not\models \coal{C}_{\Lambda}\psi .
  \]
  Hence the combined modality is not equivalent to the conjunction of the
  corresponding qualitative and quantitative modalities.
\end{prop}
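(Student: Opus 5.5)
The plan is to exhibit a very small explicit counterexample, in which the coalition has full control and the opponents are irrelevant. Take a one-player WCGS ($N = \{1\}$, $C = \{1\}$, so $-C$ is empty and the outcome of a $C$-strategy is a single play) with one dimension and two states $s$ and $t$. Label $\lab(s) = \emptyset$ and $\lab(t) = \{p\}$, and set $w(s) = 1$ and $w(t) = 0$. Transitions are $s \to s$ and $s \to t$ (two actions at $s$), and $t \to t$ (a single action at $t$). Choose $\psi = \Fop p$ and $\Lambda = (\mpf_1 \geq 1)$.

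We then verify the three claims in turn.
\begin{enumerate}[(i)]
\item The strategy moving from $s$ to $t$ yields the play $s\,t^{\omega}$, which satisfies $\Fop p$. Hence $\game, s \models \coal{C}\psi$.
\item The strategy that always stays at $s$ yields $s^{\omega}$ with mean payoff $1$. Hence $\game, s \models \coal{C}_{\Lambda}\top$.
\item For the failure of the combined modality, fix any strategy of player $1$; it induces a unique play from $s$. If that play never reaches $t$, it violates $\Fop p$. If it reaches $t$, it remains in $t$ forever since $t$ is absorbing. By prefix independence of mean payoff, its mean payoff is therefore $w(t) = 0 < 1$, violating $\Lambda$. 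So no single strategy satisfies both, and $\game, s \not\models \coal{C}_{\Lambda}\psi$.
\end{enumerate}
All strategies used in (i) and (ii) are memoryless, so the example works uniformly under $\models$, $\models_{\FM}$ and $\models_{\ML}$.

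There is essentially no obstacle. The only point needing care is the prefix-independence argument in the third step, which was noted explicitly after the definition of mean payoff.

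If one prefers a multi-agent illustration, the example can be padded with a dummy opponent having a single action everywhere, so the outcomes are unchanged. Alternatively, one can point to the warehouse example with a threshold like $\mpf_1 \geq 4/3$: always serving the left bay guarantees this, $\psi$ is enforceable by alternation, but serving both bays caps $\mpf_1$. The simple two-state game is cleaner, though, and is the one I would present.
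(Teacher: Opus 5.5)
Your two-state counterexample is correct and is essentially the paper's argument. The paper also uses a one-player game with $\mpf_1 \geq 1$, but $s$ branches into two absorbing states and the objective is the safety formula $\Gop p$, so only two plays need checking. Your reachability version needs the short case split on whether $t$ is ever reached.

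Drop the warehouse aside, though. If the environment blocks at every visit to $m$, always serving the left bay yields $(m\,c\,\ell)^{\omega}$ with $\mpf_1 = 1$, so $\mpf_1 \geq 4/3$ is not enforceable at all. Moreover, under perfect recall no threshold on $\mpf_1$ separates there. Serving the right bay only once after every $2^{i}$ left deliveries satisfies $\psi$ and still guarantees $\mpf_1 \geq 1$, which is the best value available even without $\psi$.
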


\begin{proof}
  Let $N = C = \{1\}$, $d = 1$ and $\AP = \{p\}$. The game has three states
  $s, u, v$. At $s$, agent~$1$ chooses between actions $a$ and $b$, leading
  respectively to the absorbing states $u$ and $v$; at $u$ and $v$ only one
  action is available. Let
  \[
    \lab(s) = \lab(u) = \{p\}, \qquad \lab(v) = \emptyset ,
  \]
  and $w(s) = w(u) = 0$, $w(v) = 1$. Take $\psi = \Gop p$ and
  $\Lambda = (\mpf_1 \geq 1)$.

  Playing $a$ produces the play $s\,u^{\omega}$, which satisfies $\Gop p$,
  while playing $b$ produces $s\,v^{\omega}$, whose mean payoff is $1$.
  Hence
  $\game, s \models \coal{\{1\}}\Gop p$ and
  $\game, s \models \coal{\{1\}}_{\mpf_1 \geq 1}\top$. These are the only two
  plays from $s$; the first has mean payoff $0$, and the second falsifies
  $\Gop p$ from position~$1$. No single strategy therefore satisfies both
  requirements, even in a one-player game.
\end{proof}

\section{Round-preserving sequentialisation}\label{sec:seq}

To evaluate $\coal{C}_{\Lambda}\psi$ we view $C$ as player~$1$ and its
complement as player~$2$. The coalition selects a joint action first, after
which the complement chooses its response. This standard sequentialisation
requires modification, because the inserted intermediate states affect
temporal operators containing $\Xop$. The difficulty does not arise in the
synthesis setting of \cite{BloemCHJ09,BohyBFJR13}, whose models are
turn-based.

The reduction is stated for $\LTL$ objectives over atomic propositions. For a
nested formula, state subformulae are evaluated from the inside out and
replaced by fresh propositions labelling their extensions. The following
lemma shows that this preserves satisfaction along
every play.

\begin{lem}[Substitution]\label{lem:subst}
  Let $\psi$ be a path formula with maximal state subformulae
  $\varphi_1,\dots,\varphi_k$. Let $p_1,\dots,p_k$ be fresh atomic
  propositions, extend $\lab$ by $p_i \in \lab(s)$ iff
  $s \in \ext{\varphi_i}$, and let $\psi'$ be obtained from $\psi$ by
  replacing each $\varphi_i$ by $p_i$. Then for every play $\out$ of $\game$,
  $\game, \out \models \psi$ if and only if $\lab(\out) \models \psi'$ in the
  ordinary $\LTL$ sense.
\end{lem}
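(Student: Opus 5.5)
We prove the claim by structural induction on the path formula $\psi$. The statement must be strengthened to cover all suffixes of a play: for every play $\out$ and every $t \geq 0$, $\game, \out^{\geq t} \models \psi$ iff $\lab(\out)^{\geq t} \models \psi'$. The generalisation to arbitrary $t$ is needed because the $\Xop$ and $\Uop$ clauses move to suffixes. Since $\out^{\geq t}$ is itself a play and $\lab(\out^{\geq t}) = \lab(\out)^{\geq t}$, it is in fact enough to prove the statement for all plays simultaneously, and then apply it to suffixes. The induction is over the syntax tree of $\psi$, where maximal state subformulae are treated as leaves. An atomic proposition $p$ that occurs in $\psi$ outside any strategic modality is itself a state formula. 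Hence it is either one of the $\varphi_i$ or contained in one; we may take the $\varphi_i$ to be the maximal state subformula occurrences, so atoms are covered by the leaf case as well.

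\emph{Leaf case.} Suppose $\psi = \varphi_i$, so that $\psi' = p_i$. By the path semantics, $\game, \out \models \varphi_i$ iff $\game, \out[0] \models \varphi_i$, that is, iff $\out[0] \in \ext{\varphi_i}$. By the extended labelling, this holds iff $p_i \in \lab(\out[0])$, which is exactly $\lab(\out) \models p_i$.

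The key observation is that $\ext{\varphi_i}$ is a well-defined set of states. As noted after Definition~\ref{def:semantics}, the truth of a state formula, including a strategic one, depends only on the current state and not on the history leading to it, because mean-payoff constraints are prefix-independent and state formulae are evaluated at states. This is the only place where anything specific to $\ATLsmp$ is used. The argument is identical for each parameter pair $\models^{\kappa,\lambda}$, provided $\ext{\varphi_i}$ is computed under the same semantics.

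\emph{Inductive cases.} For $\neg$ and $\wedge$ the clauses match directly. For $\Xop\psi_1$ we have $\game, \out \models \Xop\psi_1$ iff $\game, \out^{\geq 1} \models \psi_1$. By the induction hypothesis applied to the play $\out^{\geq 1}$, this holds iff $\lab(\out)^{\geq 1} \models \psi_1'$, i.e.\ iff $\lab(\out) \models \Xop\psi_1'$. For $\psi_1 \Uop \psi_2$ we apply the induction hypothesis to each suffix $\out^{\geq k}$ and $\out^{\geq t}$ appearing in the existential and universal quantifiers, and match them with the $\LTL$ until clause.

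There is no real obstacle here. The only point needing care is that the notion of a "maximal state subformula" is consistent with the grammar. Since path formulae include state formulae, a subformula such as $p \wedge \coal{C}_\Lambda\psi_0$ is itself a state formula, so the leaves are exactly those occurrences not strictly inside a larger state subformula. Replacing each such occurrence by its fresh proposition yields a pure $\LTL$ formula over $\AP \cup \{p_1,\dots,p_k\}$.
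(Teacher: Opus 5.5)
Your proposal is correct and takes essentially the same approach as the paper: structural induction on $\psi$, with the leaf case settled by the extended labelling together with the fact that a state subformula's truth depends only on the current state, and the Boolean and temporal cases immediate. As in the paper's ``commutes with taking suffixes'', your induction hypothesis is stated for all plays, so it applies to suffixes, which are themselves plays. You spell out the suffix and maximal-subformula bookkeeping more explicitly, but the argument is the same.
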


\begin{proof}
  By induction on $\psi$. For $\psi = \varphi_i$ the claim is that
  $\game, \out \models \varphi_i$ iff $p_i \in \lab(\out[0])$, which holds by
  \autoref{def:semantics} and the definition of the extended labelling; the
  substitution is sound because the truth of a state subformula at a state
  does not depend on the history leading to it.
  The cases of $\neg, \wedge, \Xop, \Uop$ are immediate, since the semantics
  of these operators coincides with their $\LTL$ semantics on the label
  sequence and commutes with taking suffixes.
\end{proof}

\subsection{Coalition sequentialisation}

\begin{defi}\label{def:seq}
  Let $\game = \langle \arena, w\rangle$ be a WCGS and let $C \subseteq N$.
  The \emph{sequentialisation of $\game$ with respect to $C$} is the
  turn-based two-player weighted arena
  $\game^{C} = \langle V_1, V_2, E, \hat{w}, \hat{\lab}\rangle$ where
  \begin{align*}
    V_1 &= \St , &
    V_2 &= \{(s,\alpha) : s \in \St,\ \alpha \in \vec{\Ac}_C(s)\} ,
  \end{align*}
  and $E$ consists of the edges
  $(s, (s,\alpha))$ for every $s \in \St$ and $\alpha \in \vec{\Ac}_C(s)$,
  together with the edges $((s,\alpha), s')$ such that
  $\tr(s, (\alpha,\beta)) = s'$ for some $\beta \in \vec{\Ac}_{-C}(s)$.
  The weight and labelling functions are lifted by
  $\hat{w}(s) = \hat{w}(s,\alpha) = w(s)$ and
  $\hat{\lab}(s) = \hat{\lab}(s,\alpha) = \lab(s)$. Player~$1$ owns $V_1$ and
  player~$2$ owns $V_2$. When $C = N$ the set $\vec{\Ac}_{-C}(s)$ is a
  singleton and player~$2$ has no choice.
\end{defi}

The construction is polynomial: $|V_1| + |V_2| \leq \size{\game}^{2}$, and
$E$ is read off the transition table. Plays of $\game^{C}$ from a vertex in
$V_1$ alternate strictly between $V_1$ and $V_2$, so they have the shape
$\rho = s^{0}(s^{0},\alpha^{0})s^{1}(s^{1},\alpha^{1})s^{2}\cdots$.

\begin{defi}\label{def:proj}
  The \emph{projection} of a play $\rho = v^{0}v^{1}v^{2}\cdots$ of
  $\game^{C}$ starting in $V_1$ is the sequence
  $\proj(\rho) = v^{0}v^{2}v^{4}\cdots \in \St^{\omega}$ obtained by deleting
  the $V_2$-vertices.
\end{defi}

\begin{prop}\label{prop:projection}
  Let $s \in \St$. Projection maps every play of $\game^{C}$ from $s$ to a
  play of $\game$ from $s$, and every play of $\game$ from $s$ is the
  projection of some play of $\game^{C}$ from $s$. Moreover, projection
  preserves every mean-payoff dimension:
  \[
    \mpf(\hat{w}_j(\rho)) = \mpf_j(\proj(\rho))
  \]
  for every play $\rho$ of $\game^{C}$ and every dimension $j$. Projection
  need not be injective; however, after fixing a player~$1$ strategy, every
  play of $\game$ has at most one preimage consistent with that strategy.
\end{prop}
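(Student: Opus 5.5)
The proof has four parts: projection yields plays of $\game$, every play of $\game$ lifts, weights are preserved, and preimages are unique once a player~$1$ strategy is fixed. Each part is checked directly against \autoref{def:seq} and \autoref{def:proj}.

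\emph{Projection yields plays.} Let $\rho = s^{0}(s^{0},\alpha^{0})s^{1}(s^{1},\alpha^{1})\cdots$ be a play of $\game^{C}$ from $s$; it alternates strictly between $V_1$ and $V_2$. The edge $((s^{t},\alpha^{t}),s^{t+1}) \in E$ means that $s^{t+1} = \tr(s^{t},(\alpha^{t},\beta^{t}))$ for some $\beta^{t} \in \vec{\Ac}_{-C}(s^{t})$. The profile $\acv^{t} = (\alpha^{t},\beta^{t})$ lies in $\vec{\Ac}(s^{t})$, so $\proj(\rho) = s^{0}s^{1}\cdots$ is a play of $\game$ from $s$.

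\emph{Every play lifts.} Conversely, let $s^{0}s^{1}\cdots$ be a play of $\game$ with witnessing profiles $\acv^{t}$. Set $\alpha^{t} = \acv^{t}_{C}$ and take $\rho = s^{0}(s^{0},\alpha^{0})s^{1}\cdots$. Both edges $(s^{t},(s^{t},\alpha^{t}))$ and $((s^{t},\alpha^{t}),s^{t+1})$ are in $E$, with $\beta = \acv^{t}_{-C}$ as the witness for the second. Hence $\rho$ is a play of $\game^{C}$ and projects to the given play.

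\emph{Mean payoff is preserved.} The weight sequence of $\rho$ is each weight of the projection written twice: $\hat{w}_j(\rho) = x_0x_0x_1x_1\cdots$ with $x_t = w_j(s^{t})$. Let $S_n = \sum_{t<n} x_t$ and $|x_t| \le W$. The even partial averages are $2S_n/(2n) = S_n/n$. The odd ones are $(2S_n + x_n)/(2n+1)$, which differs from $S_n/n$ by $O(W/n)$. A sequence interleaving two subsequences whose difference tends to $0$ has liminf equal to the liminf of either subsequence. So the liminf of the doubled sequence equals $\liminf S_n/n = \mpf_j(\proj(\rho))$. This boundedness argument is the only step that needs an actual estimate, and it is routine.

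\emph{Uniqueness of preimages.} Fix a player~$1$ strategy $\sigma$ in $\game^{C}$ and let $\rho$ be consistent with $\sigma$ and project to $s^{0}s^{1}\cdots$. We show by induction on $t$ that the prefix of $\rho$ up to $s^{t}$ is determined. For $t=0$ the prefix is just $s^{0}$. Suppose the prefix $h$ ending at $s^{t}$ is determined. Consistency with $\sigma$ forces the next vertex to be $\sigma(h) = (s^{t},\alpha^{t})$. The vertex after it is $s^{t+1}$, which is read off the projection. Thus any two consistent preimages agree on all prefixes and hence coincide. Non-injectivity in general is witnessed when two coalition joint actions $\alpha \ne \alpha'$ at $s$ can both lead to the same $s'$. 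Then $s(s,\alpha)s'\cdots$ and $s(s,\alpha')s'\cdots$ have the same projection.
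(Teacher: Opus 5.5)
Your proof is correct and follows the paper's argument essentially step for step: legality of the projection via the witnessing $\beta^{t}$, lifting a play by retaining the $C$-component of each witnessing profile, the doubled weight sequence with an $O(W/n)$ bound on the discrepancy at odd prefix lengths, and induction on rounds for uniqueness once a player~$1$ strategy is fixed. Your explicit witness for non-injectivity, namely two coalition joint actions at $s$ leading to the same successor, is a small addition; the paper only asserts that projection need not be injective.
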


\begin{proof}
  Legality of the projection follows from the definition of $E$: consecutive
  edges $s^{t} \to (s^{t},\alpha^{t}) \to s^{t+1}$ witness some
  $\beta^{t} \in \vec{\Ac}_{-C}(s^{t})$ with
  $\tr(s^{t},(\alpha^{t},\beta^{t})) = s^{t+1}$. A preimage is obtained by
  choosing, for each original transition, a witnessing action profile and
  retaining its $C$-component. Each original state weight is duplicated
  exactly twice in the sequentialised play, so the two sequences of running
  averages have the same limit inferior; the calculation is in
  \appref{app:seq}. Uniqueness under a fixed player~$1$ strategy follows by
  induction, since the intermediate joint action is then determined by the
  preceding history.
\end{proof}

\autoref{prop:projection} shows that the usual construction suffices for
the quantitative fragment. It does not preserve arbitrary temporal
objectives. Each original state is duplicated by an inserted vertex carrying
the same label, so the sequentialised play is a stuttered copy of the
original. Evaluating $\Xop p$ at $s^{0}$ in $\game$ tests $p$ at $s^{1}$,
whereas evaluating it on the sequentialised play tests $p$ at the
intermediate vertex $(s^{0},\alpha^{0})$, which is labelled like $s^{0}$.
Similarly, $\Xop\Xop p$ tests $p$ at $s^{1}$ rather than at $s^{2}$. $\LTL$
without $\Xop$ is stutter-invariant, so those formulae are unaffected; the
full logic requires the temporal specification to advance once per concurrent
\emph{round} rather than once per product vertex.

\subsection{Round-preserving product}

We therefore evaluate the automaton on the projected play, realising the
projection inside an automaton product.
Recall that every $\LTL$ formula $\psi$ over $\AP$ can be translated into a
deterministic parity automaton (DPA)
$\dpa_{\psi} = (Q, 2^{\AP}, q^{0}, \delta, \pri)$ with
$L(\dpa_{\psi}) = \{\xi \in (2^{\AP})^{\omega} : \xi \models \psi\}$,
$|Q| \in 2^{2^{O(\size{\psi})}}$ and $|\pri(Q)| \in 2^{O(\size{\psi})}$
\cite{Safra88,Piterman07,EsparzaKRS22}. A run is accepting if the least
priority occurring infinitely often is even.

\begin{defi}[Round-preserving product]\label{def:rpp}
  Let $\game$ be a WCGS, $C \subseteq N$, $\psi$ an $\LTL$ formula over
  $\AP$, and $\dpa_{\psi} = (Q, 2^{\AP}, q^{0}, \delta, \pri)$ a DPA for
  $\psi$. The \emph{round-preserving product}
  $\game^{C} \otimes \dpa_{\psi}$ is the two-player weighted parity arena
  $\langle U_1, U_2, F, \bar{w}, \bar{\pri}\rangle$ with
  $U_1 = \St \times Q$ and $U_2 = V_2 \times Q$, and with transitions
  \begin{align*}
    \bigl((s,q),\, ((s,\alpha),q)\bigr) &\in F
      &&\text{for all } \alpha \in \vec{\Ac}_C(s) , \\
    \bigl(((s,\alpha),q),\, (s', \delta(q, \lab(s)))\bigr) &\in F
      &&\text{whenever } ((s,\alpha), s') \in E ,
  \end{align*}
  where $\bar{w}(u) = w(s)$ for any vertex $u$ whose $\St$-component is $s$,
  and $\bar{\pri}(s,q) = \bar{\pri}((s,\alpha),q) = \pri(q)$. Player~$1$ owns
  $U_1$ and player~$2$ owns $U_2$; the initial vertex for a state $s$ is
  $(s,q^{0})$.
\end{defi}

The automaton is updated once per concurrent round, on the label of the
original state in which the round began. Its priority is copied to both
product vertices of that round. The parity condition on a product play
therefore agrees with acceptance of the automaton run on the projected
play.

\begin{lem}[Projection lemma]\label{lem:projection}
  Let $\bar{\rho}$ be a play of $\game^{C} \otimes \dpa_{\psi}$ from
  $(s, q^{0})$, let $\rho$ be its projection to $\game^{C}$ and let
  $\out = \proj(\rho)$. Then:
  \begin{enumerate}[\em(1)]
  \item the sequence of automaton components of $\bar{\rho}$, read at the
    $U_1$-vertices, is exactly the run of $\dpa_{\psi}$ on $\lab(\out)$;
  \item $\bar{\rho}$ satisfies the parity condition given by $\bar{\pri}$ if
    and only if $\out \models \psi$;
  \item $\mpf(\bar{w}_j(\bar{\rho})) = \mpf_j(\out)$ for every dimension $j$.
  \end{enumerate}
\end{lem}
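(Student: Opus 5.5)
The plan is to fix notation for the shape of product plays and then handle the three parts in order, with (1) carrying the real content. By \autoref{def:rpp}, every play of $\game^{C}\otimes\dpa_{\psi}$ from $(s,q^{0})$ alternates strictly between $U_1$ and $U_2$, so we can write
\[
  \bar{\rho} = (s^{0},q^{0})\,((s^{0},\alpha^{0}),q^{0})\,(s^{1},q^{1})\,((s^{1},\alpha^{1}),q^{1})\cdots .
\]
Two structural facts about this shape follow directly from the transition rules. The automaton component is unchanged on the edge from $U_1$ to $U_2$. On the edge from $U_2$ to $U_1$ it is updated by $q^{t+1}=\delta(q^{t},\lab(s^{t}))$. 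Deleting the automaton components gives a play $\rho$ of $\game^{C}$, because the edges of $F$ project onto edges of $E$. Its projection is therefore $\out=\proj(\rho)=s^{0}s^{1}s^{2}\cdots$, which is a play of $\game$ by \autoref{prop:projection}.

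For (1), I will show $q^{t}=\delta^{*}(q^{0},\lab(s^{0})\cdots\lab(s^{t-1}))$ by induction on $t$. The base case $t=0$ holds because the play starts at $(s,q^{0})$. For the inductive step, the update rule above is exactly one step of the run of $\dpa_{\psi}$ on $\lab(\out)$. So the components $q^{0}q^{1}\cdots$ read at the $U_1$-vertices form the run of $\dpa_{\psi}$ on $\lab(\out)$. Since the automaton is deterministic, this run is unique.

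For (2), I first note that the priority sequence of $\bar{\rho}$ is $\pri(q^{0})\pri(q^{0})\pri(q^{1})\pri(q^{1})\cdots$. This is because both vertices of round $t$ carry priority $\pri(q^{t})$. Duplicating each entry of a sequence leaves unchanged the set of values occurring infinitely often, so the least such value is the same. Hence $\bar{\rho}$ satisfies the parity condition if and only if the run from (1) is accepting. That holds if and only if $\lab(\out)\in L(\dpa_{\psi})$, which is equivalent to $\lab(\out)\models\psi$.

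For (3), $\bar{w}$ assigns to each product vertex the weight of its $\St$-component. So $\bar{w}_j(\bar{\rho})$ coincides with $\hat{w}_j(\rho)$, and \autoref{prop:projection} gives $\mpf(\hat{w}_j(\rho))=\mpf_j(\out)$. The only delicate step is this doubling argument for the liminf of the running averages, which \autoref{prop:projection} already covers. Let $b_n$ denote the running average of the doubled sequence at length $n$ and $a_m$ that of the original at length $m$. Then $b_{2m}=a_m$, and $b_{2m+1}$ differs from $a_m$ by $O(W/m)$, so the two sequences have the same limit inferior. Everything else in the proof is routine bookkeeping.
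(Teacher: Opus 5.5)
Your proposal is correct and follows essentially the same route as the paper. Part~(1) is an induction on rounds using the update $q^{t+1}=\delta(q^{t},\lab(s^{t}))$ on the player~$2$ edge, part~(2) uses the fact that duplicating each priority leaves the least priority occurring infinitely often unchanged, and part~(3) reduces to the doubling calculation of \autoref{prop:projection}, whose odd-prefix estimate matches your $O(W/m)$ bound.
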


\begin{proof}
  The automaton component is updated once per round, on the player~$2$ move,
  and reads $\lab(s^{t})$, so the $U_1$-vertices of $\bar{\rho}$ carry
  exactly the run of $\dpa_{\psi}$ on $\lab(\out)$, giving (1). Priorities
  are duplicated by the passage from $U_1$- to $U_2$-vertices but none is
  introduced or removed, so the least priority occurring infinitely often
  agrees with that of the automaton run; since $\dpa_{\psi}$ is deterministic
  and recognises the models of $\psi$, this gives (2). Weights depend only on
  the $\St$-component, so (3) follows from \autoref{prop:projection}.
  Details are in \appref{app:seq}.
\end{proof}

\subsection{Strategy correspondence}

Let $\mathcal{A}$ be a turn-based two-player arena whose vertices carry a
priority function $\bar{\pri}$ and a weight function
$\bar{w} : V \to \mathbb{Z}^{d}$. We write $\mathrm{Par}$ for the set of
plays of $\mathcal{A}$ whose least priority occurring infinitely often is
even, and, for a mean-payoff constraint $\Lambda$ over the $d$ dimensions,
$\mathrm{Par}\wedge\Lambda$ for the set of plays satisfying both. A
\emph{mean-payoff parity objective} for player~$1$ is an objective of this
form; the product $\game^{C}\otimes\dpa_{\psi}$ of \autoref{def:rpp} is
the instance used here.

\begin{thm}[Strategy correspondence]\label{thm:seq}
  Let $\game$ be a WCGS, $s \in \St$, $C \subseteq N$, $\Lambda$ a
  mean-payoff constraint and $\psi$ an $\LTL$ formula over $\AP$. Then,
  under perfect-recall semantics,
  \[
    \game, s \models \coal{C}_{\Lambda}\psi
    \quad\text{iff}\quad
    \text{player~$1$ wins } \game^{C} \otimes \dpa_{\psi} \text{ from }
    (s, q^{0}) \text{ for } \mathrm{Par}\wedge\Lambda .
  \]
\end{thm}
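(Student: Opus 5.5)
We prove the two implications separately. Both rest on translating strategies between $\game$ and the product $\game^{C}\otimes\dpa_{\psi}$, and on the Projection lemma (\autoref{lem:projection}), which says that a product play satisfies $\mathrm{Par}\wedge\Lambda$ if and only if its projected play $\out$ of $\game$ satisfies both $\psi$ and $\Lambda$. Parts (2) and (3) of that lemma give the parity and mean-payoff parts respectively. Throughout we use the joint representation of $C$-strategies and counter-strategies from \autoref{sec:strategies}.

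\emph{From $\game$ to the product.} Suppose $\vec{\sigma}_C$ witnesses $\game, s \models \coal{C}_{\Lambda}\psi$. Player~$1$ plays as follows. At a $U_1$-vertex $(s^{t},q)$ reached by a product history $\bar h$, she moves to $((s^{t},\vec{\sigma}_C(h)),q)$, where $h = \proj(\bar h)$ is the projected history in $\St^{+}$. This is well defined because the automaton component is a deterministic function of the projected history. Now take any play $\bar\rho$ consistent with this strategy, and let $\out = \proj(\bar\rho) = s^{0}s^{1}\cdots$. Each player~$2$ move $((s^{t},\alpha^{t}),q)\to(s^{t+1},\cdot)$ is witnessed by some $\beta^{t}\in\vec{\Ac}_{-C}(s^{t})$ with $\tr(s^{t},(\alpha^{t},\beta^{t}))=s^{t+1}$. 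Define a counter-strategy $\vec{\sigma}_{-C}$ that plays $\beta^{t}$ after the prefix $s^{0}\cdots s^{t}$ of $\out$, and arbitrary available actions after other histories. Since $\alpha^{t} = \vec{\sigma}_C(s^{0}\cdots s^{t})$, we get $\out = \out_s(\vec{\sigma}_C,\vec{\sigma}_{-C})$. Hence $\out\models\psi$ and $\out\models\Lambda$, and by \autoref{lem:projection} the play $\bar\rho$ satisfies $\mathrm{Par}\wedge\Lambda$.

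\emph{From the product to $\game$.} Suppose $\bar\sigma$ is a winning player~$1$ strategy from $(s,q^{0})$. Define $\vec{\sigma}_C(h)$ for $h\in\St^{+}$ starting in $s$ by induction along $h$. We maintain the unique product history $\bar h$ consistent with $\bar\sigma$ whose projection is $h$, together with the proof that it exists. Then $\vec{\sigma}_C(h)$ is the $\alpha$ chosen by $\bar\sigma$ at $\bar h$. To extend from $h$ to $h s'$, note that if $s'$ arises as $\tr(\mathit{last}(h),(\alpha,\beta))$ for some $\beta$, then the edge $((\mathit{last}(h),\alpha),q)\to(s',\delta(q,\lab(\mathit{last}(h))))$ exists, so the extension exists. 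Uniqueness is the uniqueness clause of \autoref{prop:projection}. Histories not of this form are off-path, and there we choose arbitrary available actions. Now fix any counter-strategy $\vec{\sigma}_{-C}$ and let $\out = \out_s(\vec{\sigma}_C,\vec{\sigma}_{-C})$. Every prefix of $\out$ has a consistent product lift, and these lifts are nested. Their limit is a play $\bar\rho$ consistent with $\bar\sigma$ whose projection is $\out$. Since $\bar\rho$ is winning, \autoref{lem:projection} gives $\out\models\psi$ and $\out\models\Lambda$.

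\emph{Expected obstacle.} The main difficulty is the bookkeeping in the second direction. We must check that the lift is well defined, that it is legal (the edge relation of \autoref{def:seq} exactly matches transitions compatible with the chosen joint action), and that the automaton state recorded in the lift is the one the Projection lemma presupposes. That last point is handled by part (1) of \autoref{lem:projection}. Because perfect-recall strategies may be arbitrary functions of histories, no memory bound has to be preserved here, so both directions are straightforward inductions. The finite-memory analogue is treated separately.
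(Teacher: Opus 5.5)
Your proposal is correct and follows essentially the same route as the paper. In one direction player~1 copies $\vec{\sigma}_C$ on projected histories, and each product play is realised by a counter-strategy defined on the prefixes of the projected play, which are pairwise distinct. In the other direction $\vec{\sigma}_C$ reads off the joint action at the unique lift of each history that is consistent with the winning strategy, and the Projection lemma transfers $\mathrm{Par}\wedge\Lambda$ to $\psi\wedge\Lambda$. The only difference is cosmetic: you obtain the lifted play as the limit of nested lifts, whereas the paper uses an explicitly induced player~2 strategy, and the two are equivalent.
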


\begin{proof}
  Write $\Pi = \game^{C}\otimes\dpa_{\psi}$. By \autoref{lem:projection}, a
  play of $\Pi$ satisfies $\mathrm{Par}\wedge\Lambda$ iff its projection
  satisfies $\psi\wedge\Lambda$. It therefore remains to relate the
  strategies in the two games.

  \emph{From $\game$ to $\Pi$.} Let $\vec{\sigma}_C$ witness
  $\game, s \models \coal{C}_{\Lambda}\psi$. At each product history,
  player~$1$ plays the joint action prescribed by $\vec{\sigma}_C$ on its
  projection. Any player~$2$ strategy then induces a counter-strategy in
  $\game$: on each prefix of the projected play, choose an action profile of
  the complement witnessing the corresponding transition, and define the
  strategy arbitrarily elsewhere. The projected play is therefore an outcome
  of $\vec{\sigma}_C$ and satisfies $\psi\wedge\Lambda$.

  \emph{From $\Pi$ to $\game$.} Let $\sigma_1$ be winning in $\Pi$ from
  $(s,q^{0})$. By \autoref{prop:projection}, each history of $\game$ has
  at most one preimage consistent with $\sigma_1$. On such a history, let
  $\vec{\sigma}_C$ play the joint action prescribed by $\sigma_1$, and define
  it arbitrarily otherwise. Every counter-strategy in $\game$ induces a
  player~$2$ strategy in $\Pi$ whose outcome projects to the corresponding
  outcome of $\vec{\sigma}_C$. Since $\sigma_1$ is winning,
  \autoref{lem:projection} implies that this projected play satisfies
  $\psi\wedge\Lambda$.

  The constructions are given in full in \appref{app:seq}.
\end{proof}

\begin{rem}[Quantifier order]\label{rem:quantifier}
  Although player~$2$ observes the coalition action $\alpha^{t}$ in
  $\game^{C}$, this does not add information unavailable to the complement in
  $\game$. The counter-strategy is quantified after $\vec{\sigma}_C$ and can
  therefore compute $\alpha^{t} = \vec{\sigma}_C(h^{t})$ from the history.
  The argument relies on deterministic strategies, deterministic transitions
  and perfect information; it does not extend directly to randomised
  strategies.
\end{rem}

Finite-memory strategies require some additional care. A strategy in the
product updates its memory at both vertices of each sequentialised round,
whereas a coalition strategy in $\game$ updates only once per original state.
We must also show that, against a fixed finite-memory coalition strategy,
finite-memory counter-strategies suffice.

\begin{thm}[Finite-memory correspondence]\label{thm:fm-correspondence}
  Let $\Lambda$ be a mean-payoff constraint, let $\psi$ be an $\LTL$ formula,
  and let $\Pi = \game^{C}\otimes\dpa_{\psi}$, where $\dpa_{\psi}$ has state
  space $Q$. Then:
  \begin{enumerate}[\em(1)]
  \item if a finite-memory $C$-strategy of size $m$ enforces
    $\psi\wedge\Lambda$ from $s$ against every counter-strategy, then
    player~$1$ has a finite-memory strategy of size $m$ winning
    $\mathrm{Par}\wedge\Lambda$ in $\Pi$ from $(s,q^{0})$;
  \item if player~$1$ has a finite-memory strategy of size $m$ winning
    $\mathrm{Par}\wedge\Lambda$ in $\Pi$ from $(s,q^{0})$, then $C$ has a
    finite-memory strategy of size at most $m\cdot|Q|$ enforcing
    $\psi\wedge\Lambda$ from $s$ against every counter-strategy;
  \item for every finite-memory $C$-strategy, if some counter-strategy
    produces an outcome from $s$ violating $\psi\wedge\Lambda$, then some
    finite-memory counter-strategy does.
  \end{enumerate}
  Consequently $\models^{\FM,\FM}$ and $\models^{\FM,\PR}$ agree on every
  $\ATLsmp$ formula, and $\game, s \models_{\FM}\coal{C}_{\Lambda}\psi$ iff
  player~$1$ has a finite-memory strategy winning
  $\mathrm{Par}\wedge\Lambda$ in $\Pi$ from $(s,q^{0})$.
\end{thm}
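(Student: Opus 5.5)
We follow the perfect-recall proof of \autoref{thm:seq}, tracking memory sizes, and add a finite-graph argument for part~(3). The only extra care is the memory-update convention: a transducer on $\Pi$ updates at both vertices of a round, while one on $\game$ updates once per original state.

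For (1), let $\vec{\sigma}_C=(M,m^0,\delta_C,\tau_C)$ with $|M|=m$ witness the property in $\game$. On $\Pi$ we use memory $M$. At a $U_1$-vertex $(s,q)$ with memory $x$, player~$1$ moves to $((s,\tau_C(x,s)),q)$. The update is $\delta_C(x,s)$ on reading a $U_1$-vertex and the identity on $U_2$-vertices. By induction, the memory at a $U_1$-vertex is exactly the state that $\vec{\sigma}_C$ reaches on the projected history. The argument of \autoref{thm:seq} then shows that every consistent play projects to an outcome of $\vec{\sigma}_C$. By \autoref{lem:projection} such a play satisfies $\mathrm{Par}\wedge\Lambda$.

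For (2), let $\sigma_1$ on $\Pi$ have memory $M$ with $|M|=m$. The $C$-strategy uses memory $M\times Q$ and simulates the product online. In state $(x,q)$ at original state $s$, it plays $\alpha=\tau_1(x,(s,q))$. On leaving $s$ it cannot see $\alpha$ directly, but it can recompute it from $(x,q,s)$. It updates $x$ through the two product steps, $x'=\delta_1(\delta_1(x,(s,q)),((s,\alpha),q))$, and sets $q'=\delta(q,\lab(s))$. The update function is thus a function of $((x,q),s)$ alone. The simulated product history is exactly the unique preimage from \autoref{prop:projection}, so the argument of \autoref{thm:seq} applies and yields size $m\cdot|Q|$.

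For (3), fix a finite-memory $\vec{\sigma}_C$ of size $m$ and form the finite graph $H$ on $\St\times M\times Q$ that results from fixing it. Here the automaton $Q$ is that of the original objective, and the opponent chooses among responses at each vertex. Outcomes of arbitrary counter-strategies correspond to infinite paths of $H$ from $(s,m^0,q^0)$. A violating path either has odd minimal recurring parity or has $\mpf_j<q$ for some conjunct of $\Lambda$.

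\begin{itemize}
\item For a parity violation, take a lasso: a reachable cycle whose minimum priority is odd.
\item For a mean-payoff violation, take the reachable strongly connected set $S$ that the path eventually stays in. If the liminf average in dimension $j$ is $<q$, then some simple cycle inside $S$ has average $<q$. This holds because every long segment decomposes into simple cycles plus a bounded remainder, and the cycle averages are convex combinations.
\end{itemize}

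Repeating such a cycle forever gives an ultimately periodic violating path. An ultimately periodic path is realised by a finite-memory counter-strategy that tracks its position along the lasso and the memory of $\vec{\sigma}_C$. The main obstacle is exactly this step: a violating infinite path need not be periodic. Bounding the liminf by a single simple cycle needs the cycle-decomposition argument applied to the specific prefixes on which the average dips below $q$.

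Finally, (3) shows that a finite-memory $C$-strategy wins against all counter-strategies iff it wins against finite-memory ones. Hence $\models^{\FM,\FM}$ and $\models^{\FM,\PR}$ agree on every strategic subformula, and by induction on every formula, using \autoref{lem:subst}. Combining this with (1) and (2) gives the stated characterisation of $\game,s\models_{\FM}\coal{C}_\Lambda\psi$.
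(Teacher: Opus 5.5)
Your proposal is correct and follows essentially the same route as the paper in every part. This covers the identity memory update at $U_2$-vertices for (1), the $M\times Q$ transducer that recomputes $\alpha$ and composes the two product updates for (2), and in (3) the fixed-strategy finite graph with a lasso for a parity violation and a below-threshold cycle via cycle decomposition for a mean-payoff violation. Building $H$ directly on $\St\times M\times Q$ rather than on $\Pi[\sigma_1]$, and restricting to the eventually-visited strongly connected set, are only cosmetic differences. As in the paper, the final claim follows by structural induction via \autoref{lem:subst}.
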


\begin{proof}
  Items~(1) and~(2) refine the strategy translations of \autoref{thm:seq}.

  For~(1), let $\vec{\sigma}_C$ be represented by a transducer of size $m$.
  At a $U_1$-vertex whose state component is $s'$, the induced player~$1$
  strategy plays the joint action prescribed by $\vec{\sigma}_C$ at $s'$ and
  performs the corresponding memory update. At the following $U_2$-vertex it
  leaves its memory unchanged. Its memory at each $U_1$-vertex therefore
  agrees with the memory of $\vec{\sigma}_C$ on the projected history, and no
  additional memory is required.

  For~(2), let $\sigma_1$ be represented by a transducer with memory set $M$.
  The translated coalition strategy stores a pair $(r,q) \in M\times Q$,
  where $r$ is the memory state of $\sigma_1$ and $q$ is the current
  automaton state. At an original state $s'$ it plays the joint action
  $\alpha$ prescribed by $\sigma_1$ at $(s',q)$. It then simulates, in one
  update, the two product updates at $(s',q)$ and at $((s',\alpha),q)$, and
  stores the resulting memory state together with $\delta(q,\lab(s'))$. This
  yields a transducer of size at most $m\cdot|Q|$. In both directions,
  correctness follows from \autoref{lem:projection}.

  For~(3), fix a finite-memory $C$-strategy and the induced player~$1$
  strategy in $\Pi$. Fixing this strategy yields a finite graph $H$. If some
  counter-strategy produces an outcome violating $\psi\wedge\Lambda$, its
  corresponding path in $H$ satisfies
  $\neg\mathrm{Par} \vee \neg\Lambda$. A parity violation in a finite graph
  has an ultimately periodic witness. Otherwise some conjunct
  $\mpf_j \geq q_j$ of $\Lambda$ is violated. If every reachable cycle of $H$
  had mean at least $q_j$ in dimension $j$, the usual decomposition of a
  finite path into cycles and a residual simple path would imply that every
  infinite path has $\liminf$ mean at least $q_j$. Hence $H$ contains a
  reachable cycle of mean below $q_j$, and a lasso reaching and repeating
  that cycle violates $\Lambda$. In either case a violating lasso is realised
  by a finite-memory player~$2$ strategy. Tracking this lasso together with
  the memory of the fixed coalition strategy gives a finite-memory
  counter-strategy in $\game$.

  The agreement of $\models^{\FM,\FM}$ and $\models^{\FM,\PR}$ for arbitrary
  formulae follows by structural induction. For a strategic formula, the
  induction hypothesis gives the same extensions to its maximal state
  subformulae under both semantics; \autoref{lem:subst} therefore gives the
  same substituted $\LTL$ objective, and item~(3) applies. The final
  equivalence follows from items~(1) and~(2).

  Full details are given in \appref{app:seq}.
\end{proof}

\section{Model checking}\label{sec:mc}

The model-checking problem is to decide, given a WCGS $\game$, a state $s$,
and an $\ATLsmp$ formula $\varphi$, whether $\game, s \models \varphi$.

\subsection{Model-checking procedure}

The model-checking procedure follows the standard bottom-up evaluation of
$\ATLs$ \cite{AlurHK02}. Strategic subformulae are evaluated by solving the
games obtained from the construction of \autoref{sec:seq}. We first treat
perfect-recall and finite-memory semantics, which use the same product
construction but different game-solving characterisations; memoryless
semantics requires a separate argument and is considered in
\autoref{sec:mc-ml}.

Let $\varphi$ be an $\ATLsmp$ formula. We compute the extension of each state
subformula in increasing order of nesting depth, handling Boolean
subformulae by complement and intersection. For a strategic subformula
$\chi = \coal{C}_{\Lambda}\psi$:
\begin{enumerate}[(1)]
\item replace each maximal state subformula of $\psi$ by a fresh atomic
  proposition labelling its previously computed extension;
\item translate the resulting $\LTL$ formula $\psi'$ into a deterministic
  parity automaton $\dpa_{\psi'}$;
\item construct the round-preserving product
  $\Pi = \game^{C}\otimes\dpa_{\psi'}$ of \autoref{def:rpp};
\item compute the winning region for $\mathrm{Par}\wedge\Lambda$, requiring
  a finite-memory winning strategy under finite-memory semantics;
\item set the extension of $\chi$ to the states $s$ such that $(s,q^{0})$ is
  winning for player~$1$.
\end{enumerate}
After all state subformulae have been evaluated, the formula holds at $s$ iff
$s \in \ext{\varphi}$. 

\paragraph{Correctness.}
The proof is by induction on the nesting depth of strategic modalities. The
atomic and Boolean cases are immediate. Consider
$\chi = \coal{C}_{\Lambda}\psi$, and suppose that the extensions of the
maximal state subformulae of $\psi$ have already been computed. By
\autoref{lem:subst}, a play $\out$ satisfies $\psi$ iff its label sequence
satisfies the substituted $\LTL$ formula $\psi'$. By
\autoref{lem:projection}, the plays of $\Pi = \game^{C}\otimes\dpa_{\psi'}$
satisfying $\mathrm{Par}\wedge\Lambda$ are exactly those whose projections
satisfy $\psi\wedge\Lambda$. The strategy correspondence of
\autoref{thm:seq} therefore identifies the winning region of $\Pi$ with the
extension of $\chi$. Under finite-memory semantics, the same conclusion
follows from \autoref{thm:fm-correspondence}.

\subsection{One-dimensional game solving}\label{sec:mc-1d}

For one-dimensional constraints, each strategic subformula reduces to a
one-dimensional mean-payoff parity game. We recall the bounds needed for the
complexity analysis. One-dimensional mean-payoff games are memorylessly
determined, and their threshold problem lies in $\NP\cap\coNP$
\cite{EhrenfeuchtM79,ZwickP96}. For mean-payoff parity games, deciding
whether player~$1$ wins $\mathrm{Par}\wedge(\mpf \geq q)$ is polynomially
equivalent to solving an energy parity game and also lies in $\NP\cap\coNP$
\cite[Theorem~4 and Corollary~1]{ChatterjeeD12}. Player~$2$ has memoryless
optimal strategies, whereas player~$1$ may require infinite memory
\cite{ChatterjeeHJ05}. The winning region can be computed in time
\begin{equation}\label{eq:mpp-time}
  O\bigl(|E|\cdot c\cdot|V|^{c+2}\cdot \hW_{q}\cdot(|V|+1)\bigr) ,
\end{equation}
where $c$ is the number of priorities \cite[Corollary~2]{ChatterjeeD12} and
$\hW_{q} = \max\{1,W_q\}$ for the normalised weight bound $W_q$ of
\autoref{sec:model}. The bound is
therefore pseudo-polynomial in the encoding of the threshold as well as of
the weights.

A one-dimensional constraint may contain several bounds on the same
dimension. If $\Lambda \neq \top$, it is equivalent to $\mpf_j \geq q^{*}$,
where $q^{*}$ is the largest threshold occurring in $\Lambda$. We assume this
normal form throughout this section.

We use this pseudo-polynomial bound below.\footnote{A pseudo-quasi-polynomial
algorithm is given in \cite{DaviaudJL18}, but the bound in \autoref{lem:fm-energy} suffice for our purposes.} The finite-memory case requires a
different characterisation.

\begin{lem}\label{lem:fm-energy}
  Let $\game^{\bullet}$ be a weighted parity game with one weight dimension,
  $|V|$ vertices, $c$ priorities and largest absolute weight $W$. Player~$1$
  has a \emph{finite-memory} strategy winning $\mathrm{Par}\wedge(\mpf\geq0)$
  from $v$ if and only if there exists a finite initial credit from which
  player~$1$ wins the energy parity game on the same arena from $v$.
  Consequently the
  finite-memory threshold problem is in $\NP\cap\coNP$, and whenever a
  finite-memory winning strategy exists there is one of memory size
  $O(|V|\cdot c\cdot W)$.
\end{lem}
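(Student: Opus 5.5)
We prove the two directions of the equivalence separately and then read off the complexity and memory consequences from the known results on energy parity games \cite{ChatterjeeD12}. Throughout, the weights are already normalised, so the threshold is $0$. We use the standard facts about energy parity games: they are determined; player~$1$ has winning strategies with memory $O(|V|\cdot c\cdot W)$ whenever some finite initial credit suffices; player~$2$ has memoryless winning strategies; and the unknown-initial-credit problem lies in $\NP\cap\coNP$.

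\emph{Energy to mean payoff.} Suppose player~$1$ wins the energy parity game from $v$ with some initial credit $c_0$. Take a finite-memory winning strategy $\sigma$ of size $O(|V|\cdot c\cdot W)$. Every outcome of $\sigma$ satisfies the parity condition and has running sums bounded below by $-c_0$. Dividing by $n$ shows that its lower mean payoff is at least $0$. Hence $\sigma$ is a finite-memory strategy winning $\mathrm{Par}\wedge(\mpf\geq 0)$, of the stated size. This direction also gives the memory bound, provided the converse below holds.

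\emph{Mean payoff to energy.} Let $\sigma$ be a finite-memory strategy with memory $M$ winning $\mathrm{Par}\wedge(\mpf\geq0)$ from $v$. Fixing $\sigma$ yields a finite graph $H$ on $V\times M$ restricted to the part reachable from $v$, in which only player~$2$ chooses.

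\begin{enumerate}[(a)]
\item \emph{Every reachable cycle of $H$ has non-negative total weight.} Suppose some reachable cycle is negative. Player~$2$ can reach it and then loop on it forever. The resulting lasso is an outcome of $\sigma$ with negative mean payoff, a contradiction. This is the same cycle argument used in item~(3) of \autoref{thm:fm-correspondence}.
\item \emph{Energy levels are bounded below.} Decompose any finite path of $H$ into simple cycles plus an acyclic residual of length below $|H|$. Each cycle contributes a non-negative amount, so the total weight is at least $-|V|\cdot|M|\cdot W$. An initial credit of $|V|\cdot|M|\cdot W$ therefore keeps the energy non-negative along every outcome.
\item \emph{Parity is preserved.} The parity condition holds on every outcome because $\sigma$ is winning for the conjunction.
\end{enumerate}

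Together, (a)--(c) show that $\sigma$ itself wins the energy parity game with a finite initial credit.

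The main obstacle is step (a). It needs a specific reason why some outcome realises a cycle of negative weight, namely that player~$2$ can deliberately traverse it. It also needs the check that the cycle decomposition in (b) is valid when cycles are taken in $H$ rather than in the arena; this holds because $H$ is finite. Note that the infinite-memory phenomenon of \cite{ChatterjeeHJ05} is exactly what fails here. Strategies achieving mean payoff $0$ only in the limit can have unbounded energy drops, which is why the finiteness of $M$ is essential.

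Finally, combining the equivalence with \cite[Theorem~4]{ChatterjeeD12}, deciding the existence of a finite initial credit is in $\NP\cap\coNP$. The normalisation that precedes it is polynomial, so the finite-memory threshold problem is in $\NP\cap\coNP$.
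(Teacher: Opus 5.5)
Your proposal is correct and follows the paper's proof essentially step for step. In the energy-to-mean-payoff direction you divide the bounded-below running sum by $n$ and import the $O(|V|\cdot c\cdot W)$ memory bound; in the converse you fix the finite-memory strategy, rule out reachable negative cycles because player~$2$ could loop on them, and bound the energy drop by $|V|\cdot|M|\cdot W$ via cycle decomposition. The only slip is in the citation: the $\NP\cap\coNP$ bound for the unknown-initial-credit energy parity problem is Theorem~2 of \cite{ChatterjeeD12}, whereas Theorem~4 there is the shifted-weight reduction from mean-payoff parity games, which is exactly the shift your argument shows is unnecessary for finite memory.
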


\begin{proof}
  Suppose first that player~$1$ wins the energy parity game with initial
  credit $c_0$. Along every consistent play each partial sum is at least
  $-c_0$, and hence the $\liminf$ mean payoff is at least $0$; the same
  strategy satisfies the parity condition. Moreover, whenever the energy
  parity game is winning, player~$1$ has a strategy of memory size
  $O(|V|\cdot c\cdot W)$ \cite[Theorem~1]{ChatterjeeD12}.

  Conversely, fix a finite-memory strategy winning
  $\mathrm{Par}\wedge(\mpf\geq0)$ and consider the finite graph obtained by
  fixing that strategy. Every reachable cycle has non-negative total weight:
  otherwise player~$2$ could reach and repeat a negative cycle, producing a
  consistent play with negative mean payoff. The running weight along every
  path is therefore bounded below, since each finite path decomposes into
  non-negative cycles and a residual simple path. The strategy thus wins the
  energy parity game from some finite initial credit.

  The $\NP\cap\coNP$ bound and the stated memory bound now follow from the
  corresponding results for energy parity games
  \cite[Theorems~1 and~2]{ChatterjeeD12}. Details are in
  \appref{app:energy}.
\end{proof}

The equivalence in \autoref{lem:fm-energy} holds at threshold $0$, without
shifting the weights. For unrestricted mean-payoff parity games, the positive
shift in \cite[Theorem~4]{ChatterjeeD12} is needed because an exact threshold
may be achievable only with infinite memory. In the absence of a parity
condition, the zero-threshold equivalence extends to multiple dimensions
\cite[Theorem~3]{ChatterjeeDHR10}.

\begin{thm}\label{thm:main}
  Model checking $\ATLsmp$ with one-dimensional constraints is
  $\TwoExp$-complete. This holds under both perfect-recall and
  finite-memory semantics.
\end{thm}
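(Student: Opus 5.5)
The proof splits into a lower bound and an upper bound.

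For the lower bound, we take $\Lambda = \top$ in every modality. As noted after \autoref{def:semantics}, $\ATLsmp$ is then exactly $\ATLs$, and $\ATLs$ model checking is $\TwoExp$-hard \cite{AlurHK02}. The hardness proof of \cite{AlurHK02} goes through $\LTL$ realisability, which needs only one strategic modality over a turn-based structure. On such objectives perfect-recall and finite-memory abilities coincide, since $\omega$-regular games are won with finite memory. Hardness therefore holds under both semantics. If the cited reduction is not directly in this form, we can reduce $\LTL$ synthesis to $\coal{\{1\}}\psi$ on a two-player arena. By finite-memory determinacy of parity games, the two semantics agree on that formula.

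For the upper bound, we analyse the bottom-up procedure of \autoref{sec:mc}. Its correctness is already established via \autoref{lem:subst}, \autoref{thm:seq} and \autoref{thm:fm-correspondence}. The formula $\varphi$ has at most $\size{\varphi}$ strategic subformulae, and for each state $s$ we solve one game. For a subformula $\coal{C}_{\Lambda}\psi$, the substituted formula $\psi'$ has size at most $\size{\varphi}$. Hence $\dpa_{\psi'}$ has $|Q| \in 2^{2^{O(\size{\varphi})}}$ states and $c \in 2^{O(\size{\varphi})}$ priorities. By the remark after \autoref{def:seq}, the product $\Pi$ has $|V| \le \size{\game}^2\cdot|Q|$ vertices and $|E| \le |V|^2$ edges. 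We first normalise $\Lambda$ to $\mpf_j \ge q^*$ and then rescale the weights to threshold $0$. This makes the weight bound $W_{q^*} \le 2^{O(\size{\game}+\size{\varphi})}$, so $W_{q^*}$ is at most singly exponential.

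Under perfect recall, we plug these quantities into \eqref{eq:mpp-time}. The dominant factor is $|V|^{c+2}$, which is
\[
  \bigl(\size{\game}^2\cdot 2^{2^{O(\size{\varphi})}}\bigr)^{2^{O(\size{\varphi})}} = 2^{2^{O(\size{\varphi})}\cdot O(\log\size{\game})} .
\]
This is doubly exponential in $\size{\varphi}$ and polynomial-exponential in $\log\size{\game}$, hence within $\TwoExp$. The remaining factors $|E|$, $c$, $\hW$ and $|V|+1$ contribute at most another doubly exponential multiplier.

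Under finite-memory semantics, \autoref{lem:fm-energy} reduces the problem to an energy parity game on $\Pi$ with unknown initial credit. We solve that game with the pseudo-polynomial algorithm of \cite{ChatterjeeD12}, whose running time has the same shape, polynomial in $|V|,|E|,W$ and exponential only in $c$. The same arithmetic then gives $\TwoExp$.

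Multiplying by the $\size{\varphi}\cdot|\St|$ game instances, and adding the time to build each DPA (doubly exponential), keeps the total in $\TwoExp$.

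The main obstacle is the complexity arithmetic: the exponent $c$ in \eqref{eq:mpp-time} is itself exponential, so we must check that $|V|^{c}$ stays doubly exponential rather than triply. It does, because $\log|V| = 2^{O(\size{\varphi})} + O(\log\size{\game})$ and $c = 2^{O(\size{\varphi})}$, so $c\log|V|$ is still $2^{O(\size{\varphi})}$ up to the $\log\size{\game}$ factor. The secondary point to verify is that the energy parity algorithm used in the finite-memory case has a bound of the same form as \eqref{eq:mpp-time}. If it does not, we would instead use the \NP\ membership from \autoref{lem:fm-energy}, applied to the doubly exponential game, which yields a nondeterministic doubly exponential guess. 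Rather than accept that, we would first cite the deterministic bound in \cite[Theorem~2]{ChatterjeeD12}.
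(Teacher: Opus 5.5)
Your proposal is correct and follows essentially the same route as the paper. Both arguments use bottom-up evaluation with a DPA having doubly exponentially many states and exponentially many priorities, and solve the round-preserving product via \eqref{eq:mpp-time} under perfect recall and via the energy-parity characterisation of \autoref{lem:fm-energy} under finite memory. Hardness is obtained in both by setting every constraint to $\top$ and noting that perfect-recall and finite-memory abilities coincide for $\omega$-regular objectives. You are also right that the fallback through $\NP$ membership on the doubly exponential product would not give a $\TwoExp$ bound. The deterministic pseudo-polynomial energy-parity algorithm is the one to use, as the paper does.
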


\begin{proof}
  \emph{Upper bound.} Put $m = \size{\varphi}$ and
  $n_{\game} = \size{\game}$. The procedure evaluates at most $m$ strategic
  subformulae, and the labelling is extended by at most $m$ fresh
  propositions, so the arena never grows.

  Fix one strategic subformula $\coal{C}_{\Lambda}\psi$ and let $\psi'$ be as
  by substitution, with $\size{\psi'} \leq m$. Determinisation produces
  a DPA with
  $|Q| \leq 2^{2^{O(m)}}$ states and $c \leq 2^{O(m)}$ priorities
  \cite{Safra88,Piterman07,EsparzaKRS22}. The round-preserving product has
  $|U_1| + |U_2| \leq n_{\game}^{2}\cdot|Q|$ vertices. By
\eqref{eq:mpp-time}, solving it takes time
  \[
    O\Bigl(\bigl(n_{\game}^{2}\,2^{2^{O(m)}}\bigr)^{c+O(1)}
      \cdot \hW_{\varphi}\Bigr)
    \;=\; 2^{\,2^{O(m)} \,+\, 2^{O(m)}\log n_{\game}}
      \cdot \mathrm{poly}(\hW_{\varphi}) ,
  \]
  using $c \leq 2^{O(m)}$ and
  $\bigl(2^{2^{O(m)}}\bigr)^{2^{O(m)}} = 2^{2^{O(m)}}$. The product arena is
  already doubly exponential, but it is raised only to a singly exponential
  number of priorities, and
  $\bigl(2^{2^{am}}\bigr)^{2^{bm}} = 2^{2^{(a+b)m}}$. Although
  $n_{\game}$ is raised to the number $c$ of priorities, the resulting bound
  is doubly exponential in the combined input size $n$. Indeed
  $c \leq 2^{O(m)}$, $\log n_{\game} \leq n$ and
  $\log \hW_{\varphi} \leq O(n)$. Summing over at most $m$ strategic
  subformulae preserves the bound.

  \emph{Finite memory.} By \autoref{thm:fm-correspondence} the truth of $\coal{C}_{\Lambda}\psi$ at $s$
  under $\models_{\FM}$ is equivalent to player~$1$ having a finite-memory
  winning strategy in $\Pi$. By \autoref{lem:fm-energy} that question is
  decided by solving an energy parity game on $\Pi$. The arena and priority
  function are unchanged, and that problem has the same pseudo-polynomial
  dependence on $W_{q}$, so the same doubly exponential bound applies.

  \emph{Lower bound.} Set every constraint to $\top$ and every weight to
  $0$. By conservativity this gives a linear reduction from $\ATLs$ model
  checking over concurrent game structures. For such formulae, perfect-recall
  and finite-memory semantics coincide, since an $\omega$-regular objective
  that can be enforced over a finite game has a finite-memory witness
  \cite{AlurHK02,BullingJ14}. As $\ATLs$ model checking is $\TwoExp$-hard
  \cite{AlurHK02}, the $\TwoExp$ lower bound holds under both semantics.
\end{proof}

\subsection{Memoryless semantics}\label{sec:mc-ml}

The product construction does not preserve memorylessness. A positional
strategy in $\game^{C}\otimes\dpa_{\psi}$ may choose different actions at
$(s,q)$ and $(s,q')$, whereas a memoryless strategy in $\game$ must choose
the same action whenever the current game state is $s$. Translating such a
product strategy therefore generally introduces finite memory. We instead
give a direct model-checking procedure.

\begin{thm}\label{thm:memoryless}
  Model checking $\ATLsmp$ under memoryless semantics is $\PSPACE$-complete,
  already when every constraint is $\top$. The $\PSPACE$ upper bound holds
  for arbitrary conjunctive mean-payoff constraints.
\end{thm}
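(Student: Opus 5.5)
We handle the upper and lower bounds separately. For the upper bound we keep the bottom-up labelling procedure. At each strategic subformula $\coal{C}_{\Lambda}\psi$ we replace the game solving by a direct guess-and-check, and we use $\mathrm{NPSPACE}=\PSPACE$ to remove the nondeterminism.

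Fix a strategic subformula $\coal{C}_{\Lambda}\psi$ and a state $s$. By \autoref{lem:subst}, the path formula $\psi$ becomes an $\LTL$ formula $\psi'$ over atoms whose extensions are already computed.

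\textbf{Guess.} A memoryless $C$-strategy is a map $s'\mapsto\alpha(s')\in\vec{\Ac}_C(s')$. It has polynomial size, so we guess it; a deterministic $\PSPACE$ procedure can instead enumerate all such maps. Fixing it yields a finite graph $H$ on $\St$. There is an edge $s'\to s''$ iff $\tr(s',(\alpha(s'),\beta))=s''$ for some $\beta\in\vec{\Ac}_{-C}(s')$. The opponents are perfect-recall, so the outcomes are exactly the infinite paths of $H$ from $s$.

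\textbf{Check.} We must decide whether every infinite path of $H$ from $s$ satisfies $\psi'\wedge\Lambda$. We split this into the two conjuncts.

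\emph{Temporal conjunct.} Every path satisfies $\psi'$ iff $H$, viewed as a Kripke structure, satisfies the $\LTL$ formula $\psi'$ at $s$. This is $\LTL$ model checking, which is in $\PSPACE$. Concretely, we search on the fly for a lasso in $H\times\mathcal{B}_{\neg\psi'}$ that is accepted by the Büchi automaton, without ever building the exponential automaton.

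\emph{Quantitative conjunct.} Every path satisfies $\Lambda$ iff, for each conjunct $\mpf_j\geq q$, every cycle of $H$ reachable from $s$ has mean weight at least $q$ in dimension $j$. The justification is the cycle-decomposition argument of item~(3) of \autoref{thm:fm-correspondence}. Conjuncts are checked independently because the opponent may target the violated dimension. This reduces to a minimum mean-cycle computation (Karp) on the reachable part of $H$ with weight $w_j$, which runs in polynomial time.

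\textbf{Combining.} Each strategic subformula is therefore decided in $\mathrm{NPSPACE}\cdot$-free polynomial space: a polynomial enumeration wrapper around $\PSPACE$ checks. There are at most $\size{\varphi}$ subformulae, each needing the label sets of the inner ones, which are stored in polynomial space. Hence the whole procedure is in $\PSPACE$.

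The step needing most care is the interaction of temporal and quantitative conditions. We must justify checking them separately once the coalition strategy is fixed. Separate checking is valid because the universally quantified opponent makes "all paths satisfy $\psi'\wedge\Lambda$" equal to "all paths satisfy $\psi'$" and "all paths satisfy $\Lambda$". This is exactly where memoryless semantics differs from \autoref{prop:nondecomp}: here the strategy is fixed before the check, so no single strategy has to be found that serves both conditions at once.

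\textbf{Lower bound.} We reduce from model checking $\ATLs$ under memoryless proponent semantics, which is $\PSPACE$-hard. That problem is already hard for $\LTL$ model checking, i.e.\ for $\coal{\emptyset}\psi$, or for a one-agent formula with a single available action. In the reduction every constraint is $\top$ and every weight is $0$. By conservativity, $\models_{\ML}$ coincides with memoryless $\ATLs$ semantics (opponents perfect recall), so the reduction is immediate. In particular, $\LTL$ model checking of a Kripke structure becomes $\coal{\emptyset}_{\top}\psi$ over a one-agent game whose actions encode the successors. This yields hardness already when all constraints are $\top$.
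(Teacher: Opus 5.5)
Your proposal is correct and follows the paper's proof essentially step for step: guess (or enumerate) a memoryless joint strategy, check universal $\LTL$ satisfaction on the induced graph, check that every reachable cycle has non-negative normalised weight in each dimension (via Karp or Bellman--Ford), and conclude with $\mathrm{NPSPACE}=\PSPACE$; hardness comes from $\coal{\emptyset}\psi$ as universal $\LTL$ model checking. The one slip is your parenthetical claim that hardness already holds for a one-agent game with a single available action: such a game has only one lasso-shaped play, on which $\LTL$ can be checked in polynomial time. Hardness must instead come, as in your actual reduction, from opponent choices that encode the successors.
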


\begin{proof}
  \emph{Upper bound.} Evaluate state subformulae bottom-up. For a strategic
  subformula $\coal{C}_{\Lambda}\psi$, replace the maximal state subformulae
  of $\psi$ as in \autoref{lem:subst}, obtaining an $\LTL$ formula $\psi'$.
  Guess a memoryless joint strategy $f_C(s) \in \vec{\Ac}_C(s)$ for each
  state $s$; this strategy has a representation polynomial in
  $\size{\game}$.

  Let $\game[f_C]$ be the graph with an edge $s \to s'$ whenever
  $\tr(s,(f_C(s),\beta)) = s'$ for some action $\beta$ of the complementary
  coalition. Its paths from $s$ are exactly the outcomes of $f_C$: every
  counter-strategy induces such a path, and every such path is realised by
  choosing a witnessing $\beta$ after each prefix, which is legitimate by the
  quantifier-order argument of \autoref{rem:quantifier}.

  It remains to check that every path from $s$ satisfies
  $\psi' \wedge \Lambda$. Universal $\LTL$ model checking is in $\PSPACE$
  \cite{SistlaC85}. For each conjunct $\mpf_j \geq q_j$, normalise the
  threshold to $0$ by replacing $w_j$ with $b\cdot w_j - a$ for $q_j = a/b$.
  Every path then has $\liminf$ mean at least $0$ in dimension $j$ iff every
  cycle reachable from $s$ has non-negative total weight in that dimension:
  a negative such cycle traversed forever is a violating lasso, and
  conversely the running sum along any prefix is bounded below by the
  contribution of a simple path. For each dimension $j$, the existence of a
  reachable cycle of negative total normalised weight can be checked in
  polynomial time---for example by Bellman--Ford negative-cycle detection,
  equivalently by Karp's minimum-cycle-mean algorithm, after restricting
  $\game[f_C]$ to the vertices reachable from $s$---so all quantitative
  conjuncts are verified in time $O(d\cdot|V|\cdot|E|)$ in the size of
  $\game[f_C]$.
  Guessing $f_C$ therefore gives an $\mathrm{NPSPACE}$ procedure, and
  $\mathrm{NPSPACE} = \PSPACE$. Repeating the argument bottom-up for all
  state subformulae preserves the bound.

  \emph{Lower bound.} Take $C = \emptyset$ and $\Lambda = \top$. The empty
  coalition has a unique strategy, and its outcomes are exactly the plays of
  the arena, each obtained by having every agent follow the play along its
  prefixes. Hence $\coal{\emptyset}\psi$ holds at $s$ iff every play from $s$
  satisfies $\psi$, which is universal $\LTL$ model checking and
  $\PSPACE$-hard \cite{SistlaC85}. The same bound is known for memoryless
  $\ATLs$ model checking directly \cite[Theorem~13]{BullingDJ10}.
\end{proof}

\section{Refined bounds for one-dimensional fragments}\label{sec:fragments}

Throughout this section, constraints are one-dimensional. The $\TwoExp$ upper
bound of \autoref{thm:main} is driven by the
determinisation of the temporal objective. We now consider fragments in which
the temporal component admits a smaller monitor or a direct game
construction. \autoref{tab:summary} summarises the resulting bounds.

\label{sec:frag-defs}%
We distinguish two independent restrictions: one on the form of the path
objective, and one on the Boolean and nesting structure of strategic
modalities.

\begin{defi}\label{def:fragments}
  According to the form of the path objective, we distinguish:
  \begin{itemize}
  \item $\ATLsmp$, the full logic of \autoref{def:syntax};
  \item $\ATLmp$, the fragment in which every temporal operator occurs
    immediately within a strategic modality; equivalently, strategic
    modalities have the form
    \[
      \coal{C}_{\Lambda}\Xop\varphi , \qquad
      \coal{C}_{\Lambda}\Gop\varphi , \qquad\text{or}\qquad
      \coal{C}_{\Lambda}(\varphi_1 \Uop \varphi_2) ,
    \]
    where $\varphi, \varphi_1, \varphi_2$ are state formulae;
  \item $\ATLsmpGR$, in which every strategic modality is applied to a $\GR$
    formula over state formulae;
  \item the \emph{quantitative fragment}, consisting of the Boolean
    combinations of atomic propositions and \emph{quantitative strategic
    atoms} $\coal{C}_{\Lambda}\top$.
  \end{itemize}
  Orthogonally, we call a formula a \emph{strategic atom} if it is of the form
  $\coal{C}_{\Lambda}\psi$ with $\psi$ free of strategic modalities;
  \emph{flat} if it is a Boolean combination of atomic propositions and
  strategic atoms; and
  \emph{nested} otherwise.
\end{defi}

Since the quantitative fragment fixes every path formula to $\top$, it is
flat and has no nested case.

We write $\MPG$ for the one-dimensional mean-payoff threshold problem on
two-player turn-based games and $\MPP$ for its mean-payoff parity
counterpart. Both lie in $\NP\cap\coNP$
\cite{ZwickP96,ChatterjeeHJ05,ChatterjeeD12}, and neither is known to be in
$\PTIME$. The notation $\PTIME^{\MPG}$ and $\PTIME^{\MPP}$ denotes
polynomial time with the corresponding oracle.

Since an $\NP\cap\coNP$ atom has polynomial certificates for both truth and
falsity, a polynomial-size Boolean combination of such atoms remains in
$\NP\cap\coNP$: guess the truth value of each atom together with the
corresponding certificate, and then evaluate the Boolean combination. We use
this closure property for flat formulae throughout. The argument does not
apply directly to multi-dimensional atoms, which are treated separately in
\autoref{sec:multi}.

\begin{table}[!ht]
  \centering
  \caption{Refined complexity bounds for one-dimensional fragments. Entries
    are for the fragments of \autoref{def:fragments} with one-dimensional
    constraints, and hold under both perfect-recall and finite-memory
    semantics. $\MPG$ and $\MPP$ denote the threshold problems for
    one-dimensional mean-payoff games and mean-payoff parity games
    respectively.}
  \label{tab:summary}
  \small
  \setlength{\tabcolsep}{4.5pt}
  \begin{tabular}{lccc}
    \toprule
    Fragment & Strategic atom & Flat formula & Nested formula \\
    \midrule
    Quantitative, $\psi = \top$
      & $\NP\cap\coNP$, $\MPG$-equivalent
      & $\NP\cap\coNP$
      & --- \\
    $\ATLmp$
      & $\NP\cap\coNP$, $\MPG$-hard
      & $\NP\cap\coNP$
      & $\PTIME^{\MPP}$ \\
    $\ATLsmpGR$
      & $\NP\cap\coNP$, $\MPG$-hard
      & $\NP\cap\coNP$
      & $\PTIME^{\MPP}$ \\
    $\ATLsmp$
      & $\TwoExp$-complete
      & $\TwoExp$-complete
      & $\TwoExp$-complete \\
    \bottomrule
  \end{tabular}
\end{table}

\subsection{Direct constructions}\label{sec:frag-1d}

For the quantitative fragment no temporal monitor is needed: taking
$\psi = \top$ collapses the product to $\game^{C}$, so an atom is a
one-dimensional mean-payoff threshold query on the sequentialised arena.

In $\ATLmp$ each temporal operator occurs immediately within a strategic
modality, so the doubly exponential $\LTL$ translation is unnecessary: each
of the three operators admits a direct construction.

\begin{lem}\label{lem:atl-constructions}
  Let $\varphi, \varphi_1, \varphi_2$ be state formulae whose extensions are
  already computed and let $\Lambda$ be one-dimensional. Then
  $\coal{C}_{\Lambda}\chi$ can be decided at all states simultaneously by
  solving a single mean-payoff parity game with at most two priorities on an
  arena of size $O(\size{\game}^{2})$, for
  $\chi \in \{\Xop\varphi,\ \Gop\varphi,\ \varphi_1 \Uop \varphi_2\}$.
\end{lem}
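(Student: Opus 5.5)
The plan is to avoid the general DPA translation and build, for each of the three operators, a small deterministic monitor with at most two priorities. We then take the round-preserving product of \autoref{def:rpp} with this monitor instead of $\dpa_{\psi}$. By \autoref{lem:subst}, the extensions of $\varphi,\varphi_1,\varphi_2$ are available as fresh propositions $p,p_1,p_2$, so $\chi$ becomes one of the $\LTL$ formulae $\Xop p$, $\Gop p$ or $p_1 \Uop p_2$. Each of these is recognised by a DPA with a constant number of states, at most three, and priorities in $\{0,1\}$:
\begin{itemize}
\item For $\Xop p$: an initial state, then an accepting or a rejecting sink depending on the second letter.
\item For $\Gop p$: an accepting state (priority $0$) with a rejecting sink (priority $1$) entered on $\neg p$.
\item For $p_1\Uop p_2$: a waiting state and a rejecting sink, both of priority $1$, and an accepting sink of priority $0$. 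We move from waiting to the accepting sink on $p_2$, stay on $p_1\wedge\neg p_2$, and go to the rejecting sink otherwise.
\end{itemize}
The waiting state has odd priority, so staying in it forever, where $p_2$ never occurs, is rejected. This is exactly the semantics of the until operator.

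With $|Q|\le 3$, the product $\game^{C}\otimes\dpa$ has at most $3\,\size{\game}^{2}$ vertices by the polynomial bound after \autoref{def:seq}, which is $O(\size{\game}^{2})$. It carries at most two priorities. \autoref{lem:projection} and \autoref{thm:seq} hold for any DPA recognising the objective, not only the one produced by Safra-style determinisation. So $\game,s\models\coal{C}_{\Lambda}\chi$ iff player~$1$ wins $\mathrm{Par}\wedge\Lambda$ from $(s,q^{0})$. Under finite-memory semantics the same holds by \autoref{thm:fm-correspondence}.

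Deciding \emph{all} states at once requires one game rather than one game per state. This is possible because the winning region of a single mean-payoff parity game is computed for all vertices simultaneously. We read off the vertices $(s,q^{0})$ for every $s\in\St$. The initial automaton state $q^{0}$ is the same for every start state, so no separate instance is needed.

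The main point requiring care is the $\Xop$ case under the round-preserving convention. The automaton reads $\lab(s)$ on the player~$2$ move, so its first transition consumes the label of the starting state. The monitor for $\Xop p$ must therefore ignore the first letter and test the second, which is why it needs an initial state distinct from the sinks. One could alternatively collapse the $\Xop$ case to a one-step reachability check followed by a mean-payoff game. This is not necessary, however: the sinks keep the arena size bounded, and all priorities lie in $\{0,1\}$.

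For $\Lambda=\top$ the game is just a parity (Büchi/co-Büchi) game. This is consistent with the claim, since a trivial constraint is a special case of a one-dimensional one.
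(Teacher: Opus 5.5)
Your proof is correct, but it follows a different route from the paper's. You build a constant-size DPA for each of $\Xop p$, $\Gop p$ and $p_1 \Uop p_2$ and feed it into the round-preserving product. Correctness under both perfect-recall and finite-memory semantics is then inherited unchanged from \autoref{lem:projection}, \autoref{thm:seq} and \autoref{thm:fm-correspondence}, since these only use that the automaton is deterministic and recognises $\psi$. Because $q^{0}$ is the same for every start state, reading off all states from one winning region is immediate.

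The paper instead gives ad hoc constructions on $\game^{C}$ itself:
\begin{itemize}
\item For $\Xop\varphi$, it solves the plain mean-payoff game on $\game^{C}$ once. It then performs a one-step check outside the game: some $\alpha$ with $(s,\alpha)$ in the winning region, all of whose successors lie in $\ext{\varphi}$.
\item For $\Gop\varphi$, it restricts $\game^{C}$ to the safety region computed by an attractor and solves a one-priority mean-payoff game there.
\item Only for $\Uop$ does it use a two-copy arena with a sink. This is essentially your monitor product, except that the bit is updated on the target vertex at every move and the initial copy $\iota(s)$ depends on $s$. Its correctness is argued directly rather than through \autoref{lem:projection}.
\end{itemize}

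The paper's route gives pure mean-payoff games with a single priority for $\Xop$ and $\Gop$, which is slightly sharper and makes the link with $\MPG$ immediate. Your route gains uniformity and avoids a separate correctness argument for each operator.

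One small slip: the monitor for $\Xop p$ cannot have only three states. After the initial state reads the first letter, the run must move to a non-sink state that then branches on the second letter. Indeed, $\Xop p$ has four pairwise distinct residual languages, so every deterministic automaton for it needs at least four states. This only changes your constant $3\size{\game}^{2}$ to $4\size{\game}^{2}$. The $O(\size{\game}^{2})$ bound and the use of at most two priorities are unaffected, since the two transient states can be given priority $0$ or $1$.
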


\begin{proof}
  For $\Xop\varphi$, compute the winning region $R$ of the mean-payoff game
  on $\game^{C}$; a state $s$ is winning iff the coalition has an action
  $\alpha$ with $(s,\alpha) \in R$ and every successor of $(s,\alpha)$ has
  state component in $\ext{\varphi}$. For $\Gop\varphi$, compute the maximal
  region in which player~$1$ can keep the state component inside
  $\ext{\varphi}$, a safety region obtained by attractor computation; every
  strategy enforcing $\Gop\varphi$ remains in this region, so it remains only
  to enforce $\Lambda$ there. For $\varphi_1 \Uop \varphi_2$, use two copies
  recording whether $\varphi_2$ has been reached, with the initial copy
  recording whether $\varphi_2$ already holds at $s$, together with a losing
  sink for leaving $\varphi_1$ beforehand. The last construction uses two
  priorities and the first two use at most two, each on an arena of size
  $O(\size{\game}^{2})$. The constructions are given in \appref{app:frag}.
\end{proof}

$\GR$ is a widely used fragment for reactive synthesis
\cite{BloemJPPS12,KressGazitFP09}. For such objectives the general $\LTL$
determinisation can be replaced by a polynomial-size deterministic parity
monitor.

\begin{lem}[$\GR$ monitor]\label{lem:gr1-dpa}
  Let
  \[
    \theta \;=\;
    \Bigl(\bigwedge_{\ell=1}^{m}\Gop\Fop\psi_{\ell}\Bigr)
    \;\to\;
    \Bigl(\bigwedge_{r=1}^{k}\Gop\Fop\vartheta_{r}\Bigr) ,
  \]
  where each $\psi_{\ell}$ and $\vartheta_{r}$ is a Boolean formula over
  $\AP$ and $m, k \geq 1$. There is a symbolic deterministic parity automaton
  recognising the models of $\theta$, with $O(mk)$ states and three
  priorities. Its transitions are represented by Boolean guards over $\AP$
  and can be constructed in time polynomial in $\size{\theta}$. The cases
  $m = 0$ and $k = 0$ are immediate.
\end{lem}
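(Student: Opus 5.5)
The plan is to build the automaton from two round-robin counters, one over the guarantees and one over the assumptions, with priorities chosen so that a completed round of guarantees is good and a completed round of assumptions is bad. The guarantee counter takes precedence. The automaton reads letters $a \in 2^{\AP}$. Its states are triples $(\ell, r, f)$ with $\ell \in \{1,\dots,m\}$, $r \in \{1,\dots,k\}$ and a flag $f \in \{0,1,2\}$ that records which event happened on the last step. The state space therefore has $3mk = O(mk)$ elements, the initial state is $(1,1,2)$, and the priority of $(\ell,r,f)$ is $f$.

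On reading $a$ from $(\ell,r,f)$, the guarantee pointer moves first. If $a \models \vartheta_r$, set $r' = r+1$, or $r' = 1$ when $r = k$. Otherwise set $r' = r$. The assumption pointer moves in the same way, with $\ell'$ advancing iff $a \models \psi_\ell$ and wrapping after $m$. The new flag is $f' = 0$ if $r$ wrapped from $k$ to $1$; otherwise $f' = 1$ if $\ell$ wrapped from $m$ to $1$; otherwise $f' = 2$. Each transition is thus specified by a guard that is a Boolean combination of $\vartheta_r$ and $\psi_\ell$ (four cases). Writing down all $O(mk)$ states with their guards takes time polynomial in $\size{\theta}$, because each guard is a copy of subformulae of $\theta$.

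For correctness we analyse the least priority seen infinitely often on a run over a word $\xi$. Suppose first that $\xi \models \Gop\Fop\vartheta_r$ for every $r$. Then whenever the guarantee pointer is at $r$, a later position satisfies $\vartheta_r$, so the pointer advances. Hence it wraps infinitely often, priority $0$ recurs, and the run accepts. This agrees with $\theta$, since the consequent holds.

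Suppose instead that some guarantee fails. Then the pointer eventually waits forever at the first $r$ whose $\vartheta_r$ no longer occurs, so priority $0$ occurs only finitely often. A small induction is needed here: the pointer can only get stuck at an index whose formula fails from some point on, and it reaches such an index in finitely many steps. From then on the assumption pointer behaves like the guarantee pointer did. If all $\psi_\ell$ recur, it wraps infinitely often, priority $1$ is the minimum seen infinitely often, and the run rejects. This is correct, since the antecedent holds and the consequent fails. If some $\psi_\ell$ stops occurring, the assumption pointer also gets stuck, only priority $2$ recurs, and the run accepts, again agreeing with $\theta$.

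The cases $m = 0$ and $k = 0$ drop the corresponding counter. With $m = 0$ we take a single-counter Büchi-like automaton with priorities $\{0,1\}$. With $k = 0$ we use only the assumption counter, with priorities $\{1,2\}$.

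I expect the main obstacle to be the "stuck pointer" argument, which must be stated carefully because the two counters advance simultaneously on the same letter. The priority rule gives precedence to the guarantee wrap, so a simultaneous wrap of both counters still emits $0$. This is harmless: in the first case it only helps acceptance, and in the second case the guarantee pointer wraps only finitely often, so such steps are finitely many.
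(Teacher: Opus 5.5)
Your construction for $m,k\geq 1$ is the paper's. It uses two independent cyclic counters whose wrap-around events are recorded in the state, with priority $0$ when the guarantee counter wraps (simultaneous wraps included), $1$ when only the assumption counter wraps, and $2$ otherwise. The paper stores $e\in\{\mathsf{none},a,b,ab\}$ and gives $b$ and $ab$ the same priority, so your three-valued flag yields the same automaton with $3mk$ rather than $4mk$ states, and your analysis of the least recurring priority is the paper's. The stuck-pointer induction you single out is not needed, and the index where the pointer stalls need not be the least index whose guarantee fails. Between consecutive wrap-arounds every $\vartheta_r$ must hold at least once, so infinitely many wraps imply $\bigwedge_r\Gop\Fop\vartheta_r$; the converse is your first case. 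The two counters never interact, so simultaneous updates cause no difficulty.

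What fails is your degenerate case $k=0$. The consequent is then the empty conjunction $\top$, so $\theta\equiv\top$ and the automaton must accept every word; a single state of priority $0$ suffices. Your automaton for this case is the assumption counter alone with priorities $\{1,2\}$. It accepts exactly the words on which that counter wraps finitely often, so it recognises $\neg\bigwedge_{\ell}\Gop\Fop\psi_{\ell}$. It therefore rejects every word satisfying all the assumptions, although such words are models of $\theta$. This also clashes with your $m=0$ case, where you correctly read the empty antecedent as $\top$; there priority $1$ acts as an assumption counter that wraps at every step. Treating an empty counter uniformly as one that wraps at every step repairs the construction: for $k=0$, all priorities after the first step become $0$.
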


\begin{proof}
  Use a cyclic counter over $\{1,\dots,m\}$ to record the next assumption
  $\psi_{\ell}$ to be witnessed. Whenever the current letter satisfies that
  assumption, the counter advances; let $a$ denote the event that the counter
  completes a cycle. Then $\bigwedge_{\ell}\Gop\Fop\psi_{\ell}$ holds iff
  $\Gop\Fop a$. Similarly, a cyclic counter over the guarantees produces an
  event $b$ such that $\bigwedge_{r}\Gop\Fop\vartheta_{r}$ holds iff
  $\Gop\Fop b$. Thus
  \[
    \theta \;\equiv\; \Fop\Gop\neg a \;\vee\; \Gop\Fop b .
  \]

  Take the product of the two counters and record, in each automaton state,
  whether the most recent transition produced $a$, $b$, both, or neither.
  Assign priority $0$ when $b$ occurs, priority $1$ when $a$ but not $b$
  occurs, and priority $2$ otherwise. If $b$ occurs infinitely often,
  priority $0$ is seen infinitely often. If $b$ occurs only finitely often
  but $a$ occurs infinitely often, the least priority seen infinitely often
  is $1$. If both events occur only finitely often, the run eventually sees
  only priority $2$. Hence the parity condition holds exactly when
  $\Fop\Gop\neg a \vee \Gop\Fop b$ holds.

  The two counters give $mk$ combinations, and recording the event adds only
  a constant factor. Transitions are computed by evaluating the relevant
  Boolean guards on the current letter, so the alphabet $2^{\AP}$ need not be
  enumerated. Details are in \appref{app:frag}.
\end{proof}

\subsection{Complexity classification}

\begin{thm}[One-dimensional fragments]\label{thm:frag-1d}
  Let the constraints be one-dimensional. Under both perfect-recall and
  finite-memory semantics, the following hold.
  \begin{enumerate}[\em(1)]
  \item Deciding a quantitative strategic atom $\coal{C}_{\Lambda}\top$ is
    polynomial-time equivalent to $\MPG$, and every formula of the
    quantitative fragment is decidable in $\NP\cap\coNP$.
  \item $\ATLmp$ atoms and flat $\ATLmp$ formulae are decidable in
    $\NP\cap\coNP$ and are $\MPG$-hard; model checking $\ATLmp$ is in
    $\PTIME^{\MPP}$. Each strategic subformula is handled in time polynomial
    in $\size{\game}$ and the numerical weight bound $\hW_{\varphi}$, and hence
    in pseudo-polynomial time.
  \item $\ATLsmpGR$ atoms and flat $\ATLsmpGR$ formulae are decidable in
    $\NP\cap\coNP$ and are $\MPG$-hard; model checking $\ATLsmpGR$ is in
    $\PTIME^{\MPP}$. Each strategic subformula is handled in time polynomial
    in $\size{\game}$, $\size{\varphi}$ and $\hW_{\varphi}$.
  \end{enumerate}
\end{thm}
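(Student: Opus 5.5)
The plan is to treat the three items uniformly. Each strategic atom is reduced, in polynomial time, to a single one-dimensional game of a known class. The complexity claims then come from the bounds recalled in \autoref{sec:mc-1d} together with the closure of $\NP\cap\coNP$ under polynomial Boolean combinations noted above. The nested bounds follow by running the bottom-up procedure of \autoref{sec:mc} with an oracle call per strategic subformula. Finite-memory semantics is handled alongside perfect recall. By \autoref{thm:fm-correspondence} the finite-memory question on the same product is a finite-memory mean-payoff parity question, which \autoref{lem:fm-energy} turns into an energy parity game. That problem is again in $\NP\cap\coNP$ and pseudo-polynomial by \cite{ChatterjeeD12}, and it is polynomially interreducible with $\MPP$.

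For item~(1), with $\psi=\top$ the DPA is trivial, so by \autoref{thm:seq} the atom holds at $s$ iff player~$1$ wins the mean-payoff game $\game^{C}$ from $s$, after normalising the threshold $q^{*}$ to $0$. The arena $\game^{C}$ is polynomial, which gives membership in $\MPG$. For the converse reduction I would embed a turn-based mean-payoff game as a WCGS with $N=\{1,2\}$ and $C=\{1\}$. Player~1 owns a vertex by being the only agent with more than one action there, and player~2 likewise. Stuttering through the sequentialisation preserves mean payoff by \autoref{prop:projection}. One has to check that state weights on the WCGS faithfully encode the game's weights. Edge weights can be moved to states by subdividing edges, but this halves the mean. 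That is harmless for threshold $0$, and $\MPG$ may be assumed at threshold $0$ by normalisation. Under finite memory the same equivalence holds, because player~1 has memoryless optimal strategies in one-dimensional mean-payoff games \cite{EhrenfeuchtM79}. Flat formulae are then in $\NP\cap\coNP$ by the closure argument.

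For item~(2), \autoref{lem:atl-constructions} decides an $\ATLmp$ modality with one mean-payoff parity game with at most two priorities on an $O(\size{\game}^2)$ arena. This gives $\NP\cap\coNP$ per atom and pseudo-polynomial time via \eqref{eq:mpp-time} with $c\le 2$, which is polynomial in $\size{\game}$ and $\hW_{\varphi}$. Hardness comes from $\coal{C}_{\Lambda}\Gop\top$, which is equivalent to the quantitative atom, so item~(1) applies. For nested formulae there are at most $\size{\varphi}$ subformulae. Each needs polynomially many $\MPP$ queries, one per state or a single winning-region computation obtained by querying each vertex, which places the problem in $\PTIME^{\MPP}$.

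Item~(3) is the same, with \autoref{lem:gr1-dpa} supplying a DPA with $O(mk)$ states and three priorities in place of Safra determinisation. The round-preserving product of \autoref{def:rpp} then has size polynomial in $\size{\game}$ and $\size{\varphi}$, with $c=3$, so \eqref{eq:mpp-time} is polynomial in $\size{\game}$, $\size{\varphi}$ and $\hW_{\varphi}$. Membership in $\NP\cap\coNP$ and $\PTIME^{\MPP}$ follow as before, and hardness again uses the trivial $\GR$ formula ($m$ or $k=0$, hence $\top$). The main obstacle is the finite-memory side of the hardness and oracle claims. I need that finite-memory winning in these products is exactly $\MPP$-reducible, where $\MPP$ is stated for perfect recall. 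The plan is to pass through energy parity, which is polynomially equivalent to $\MPP$ by \cite[Theorem~4]{ChatterjeeD12}, via \autoref{lem:fm-energy}. It must also be checked that the symbolic guards of \autoref{lem:gr1-dpa} are evaluated on the substituted propositions, so that no exponential alphabet appears.
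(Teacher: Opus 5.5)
Your proposal is correct and follows essentially the same route as the paper. Quantitative atoms go through the trivial monitor with memoryless determinacy, and hardness comes from the turn-based embedding; \autoref{lem:atl-constructions} and \autoref{lem:gr1-dpa} give mean-payoff parity games with few priorities; closure of $\NP\cap\coNP$ under Boolean combinations covers flat formulae, and bottom-up oracle calls cover nested ones. The only differences are cosmetic: you keep the halved mean from edge subdivision and work at threshold $0$, where the paper doubles the weights, and you spell out the finite-memory oracle route (\autoref{lem:fm-energy} plus the polynomial equivalence of energy parity and $\MPP$) more explicitly than the paper does for item~(2).
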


\begin{proof}
  \emph{(1)} Taking $\psi = \top$ in \autoref{thm:seq} and
  \autoref{thm:fm-correspondence}, the atom holds at $s$ iff the threshold of $\Lambda$
  is enforceable in the one-dimensional mean-payoff game $\game^{C}$;
  perfect-recall and finite-memory winning coincide there because such games
  are memorylessly determined \cite{EhrenfeuchtM79}. The sequentialisation is
  polynomial by \autoref{def:seq}, and the threshold problem is in
  $\NP\cap\coNP$ \cite{ZwickP96}.

  Conversely, a turn-based mean-payoff game embeds as a WCGS by giving every
  inactive player a single action. For an edge-weighted input, first apply
  the edge-splitting construction of \appref{app:weights} and double the
  weights to preserve the threshold. The extension to the whole fragment
  follows by closure, since it is flat.

  \emph{(2)} \autoref{lem:atl-constructions} reduces each strategic atom to
  one mean-payoff parity game with at most two priorities, giving the
  $\NP\cap\coNP$ bound. Closure under Boolean combinations gives the bound
  for flat formulae. Evaluating nested formulae bottom-up requires at most
  one $\MPP$ query per strategic subformula, giving $\PTIME^{\MPP}$. Hardness
  is inherited from~(1) by taking $\chi = \Gop\top$.

  \emph{(3)} Instantiate the procedure of \autoref{sec:mc} with the automaton
  of \autoref{lem:gr1-dpa} in place of the general $\LTL$ determinisation.
  Since the automaton is kept symbolic and its counters are updated by
  evaluating the guards on $\lab(s)$, the product arena has size
  $O(\size{\game}^{2}\cdot m\cdot k)$ and three priorities, the
  one-dimensional game-solving bounds of \autoref{sec:mc-1d}---%
  \eqref{eq:mpp-time} for perfect recall and \autoref{lem:fm-energy} for
  finite memory---give a pseudo-polynomial running time and membership in
  $\NP\cap\coNP$. The bounds for atoms, flat formulae and nested model
  checking then follow as in item~(2). Hardness is inherited from the
  quantitative fragment by taking a tautological $\GR$ objective.
\end{proof}

Consequently the quantitative fragment is in $\PTIME$ if $\MPG$ is, and
both $\ATLmp$ and $\ATLsmpGR$ are in $\PTIME$ if $\MPP$ is.

These bounds retain the one-dimensional restriction and vary the temporal
component. We next lift that restriction and consider constraints over
several weight dimensions.

\FloatBarrier

\section{Multi-dimensional mean-payoff constraints}\label{sec:multi}
The main perfect-recall and finite-memory results above are one-dimensional.
We now allow $\Lambda$ to constrain several
weight dimensions simultaneously. Under memoryless semantics, the full logic
with arbitrary conjunctive constraints is already covered by
\autoref{thm:memoryless}. With conjunctive constraints over several
dimensions, memoryless, finite-memory and perfect-recall semantics yield
different answers even for pure mean-payoff objectives
\cite{VelnerCDHRR15}. We first classify pure quantitative
objectives under all three semantics, and then consider the full logic under
finite-memory semantics. The corresponding perfect-recall problem with a
non-trivial temporal objective remains open.

\subsection{Quantitative objectives}\label{sec:frag-multi}

When $\psi = \top$, the automaton $\dpa_{\top}$ has a single state with an
even priority, and the round-preserving product collapses to the coalition
sequentialisation $\game^{C}$. Writing
\[
  \Lambda = \bigwedge_{j \in J}\mpf_j \geq q_j
  \qquad\text{and}\qquad
  \vec{q} = (q_j)_{j \in J} ,
\]
deciding $\game, s \models \coal{C}_{\Lambda}\top$ is therefore exactly the
problem of whether player~$1$ can enforce the threshold vector $\vec{q}$ in
the multi-mean-payoff game $\game^{C}$, restricted to the dimensions
occurring in $\Lambda$. Under finite-memory semantics, player~$1$ is
correspondingly restricted to finite-memory strategies. Since
player~$1$ moves only at vertices corresponding to states of $\game$, its
positional strategies in $\game^{C}$ correspond exactly to memoryless
$C$-strategies in $\game$.

\begin{prop}[Multi-dimensional quantitative objectives]\label{prop:pure-multi}
  Let $\Lambda$ be an arbitrary conjunctive mean-payoff constraint. Deciding
  $\coal{C}_{\Lambda}\top$ is $\NP$-complete under memoryless semantics and
  $\coNP$-complete under both finite-memory and perfect-recall semantics.
  Under either of the latter two semantics, model checking flat formulae in
  the quantitative fragment is in $\PTIME^{\NP}_{\parallel}$.
\end{prop}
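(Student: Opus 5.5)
The plan is to reduce each case to known results on multi-mean-payoff games \cite{VelnerCDHRR15} via the collapse of the product to $\game^{C}$ noted just before the statement. Positional player~$1$ strategies in $\game^{C}$ correspond to memoryless $C$-strategies in $\game$. By \autoref{thm:fm-correspondence} with $\psi=\top$, finite-memory $C$-strategies correspond to finite-memory player~$1$ strategies, and by \autoref{thm:seq}, perfect-recall strategies correspond to arbitrary ones. For hardness in all three semantics, a turn-based multi-mean-payoff game embeds into a WCGS by giving inactive players a single action. Edge weights are converted via the splitting construction of \appref{app:weights}, with weights doubled as in \autoref{thm:frag-1d}. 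Thresholds are normalised to $0$ by $w_j \mapsto b\,w_j - a$, which is polynomial in the binary encoding.

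\emph{Memoryless.} Membership in $\NP$ follows by guessing $f_C$ and, as in the proof of \autoref{thm:memoryless}, checking for every $j \in J$ that no cycle of negative normalised weight is reachable in $\game[f_C]$, using Bellman--Ford or Karp's algorithm in polynomial time. Hardness is the $\NP$-hardness of deciding positional winning for player~$1$ in multi-mean-payoff games \cite{VelnerCDHRR15}, transferred through the embedding above.

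\emph{Finite memory and perfect recall.} For finite memory I would invoke the result of \cite{VelnerCDHRR15} that finite-memory winning for multi-mean-payoff is equivalent to winning a multi-energy game with some initial credit. That problem is $\coNP$-complete, because player~$2$ has memoryless winning strategies; membership is then certified by guessing a positional player~$2$ strategy and checking, in the one-player graph, that no cycle combination (a polynomial LP/flow feasibility test) is non-negative in all dimensions. For perfect recall with the $\liminf$ semantics, \cite{VelnerCDHRR15} likewise shows $\coNP$-completeness via memoryless counter-strategies. This requires the convention that all dimensions are lower means, which matches \autoref{def:constraint}. Hardness is again transferred by the embedding.

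The main obstacle is making these imports precise. I must check that the cited results are stated for the $\liminf$ objective, for vertex rather than edge weights, and for thresholds encoded in binary. I also need the hardness reductions to survive the restriction that player~$2$ vertices in $\game^{C}$ carry the weight of the preceding state; the doubling trick handles this. For the flat-formula bound, each atom is a $\coNP$ query, so all atoms can be asked non-adaptively in parallel to an $\NP$ oracle, which asks for the complement. The Boolean combination, including atomic propositions, is then evaluated in polynomial time, giving $\PTIME^{\NP}_{\parallel}$.
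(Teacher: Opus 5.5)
Your proposal is correct and takes essentially the same route as the paper. You collapse the product to the multi-mean-payoff game on $\game^{C}$, import the memoryless, finite-memory and perfect-recall atom bounds from \cite{VelnerCDHRR15}, transfer hardness through the single-action embedding with edge splitting and doubled weights, and obtain $\PTIME^{\NP}_{\parallel}$ by querying the falsity of all atoms in parallel. Your extra details are sound, including the direct guess-and-check argument for memoryless membership. The one slip is the remark about player~$2$ vertices: the weight doubling compensates for edge splitting, whereas the duplicated weights at player~$2$ vertices of $\game^{C}$ are already harmless by \autoref{prop:projection} and play no role in the hardness transfer, which can be argued directly on the embedded WCGS.
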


\begin{proof}
  Since the product reduces to the polynomial-size sequentialisation
  $\game^{C}$, the three atom bounds follow from the corresponding
  results for multi-dimensional mean-payoff games
  \cite{VelnerCDHRR15}; hardness transfers through the
  embedding used in \autoref{thm:frag-1d}(1).

  Under finite-memory and perfect-recall semantics each atom is decidable in
  $\coNP$. A polynomial-time machine can query the falsity of all atoms of a
  flat formula in parallel and then evaluate its Boolean structure, giving
  the $\PTIME^{\NP}_{\parallel}$ bound.
\end{proof}

Unless $\NP = \coNP$, the flat fragment is not contained in $\coNP$, since it
contains negations of $\coNP$-hard atoms. The identical $\coNP$ bounds under
finite-memory and perfect-recall semantics do not imply that the two
semantics agree: they already differ in one dimension, as \autoref{exa:sep}
shows.

\subsection{Full logic under finite memory}

Applying the $\coNP$ characterisation of explicit multi-dimensional energy
games to the doubly exponential product would give a triple-exponential
deterministic upper bound. Instead, we use the deterministic algorithm for
the arbitrary-initial-credit problem for multi-dimensional energy parity
games \cite[Corollary~V.5]{ColcombetJLS17}, obtaining the sharper $\TwoExp$
bound below.

\paragraph{From mean payoff to arbitrary initial credit.}
An energy objective treats the accumulated weight in each dimension as a
running balance. Given a play $\rho = v_0v_1\cdots$ of a weighted arena with
weight function $w$ and an initial-credit vector
$\vec{c} \in \mathbb{N}^{d}$, the energy condition requires
\[
  \vec{c} + \sum_{t=0}^{k-1} w(v_t) \;\geq\; \vec{0}
\]
componentwise for every $k \geq 0$. The
\emph{arbitrary-initial-credit problem} asks whether there exists an initial
credit $\vec{c}$ from which player~$1$ can enforce this condition together
with the parity objective. For example, if every reachable cycle has
non-negative weight but a path to such a cycle incurs a temporary loss of $5$
in one dimension, an initial credit of $5$ suffices to cover that loss;
repeating the cycle thereafter cannot decrease the balance further.

Under finite-memory strategies, this problem characterises conjunctive
mean-payoff parity objectives. Fixing a finite-memory player~$1$ strategy
yields a finite graph. The strategy guarantees non-negative mean payoff in
every dimension only if every reachable cycle has non-negative total weight
in every dimension: otherwise player~$2$ can reach and repeat a cycle with
negative weight in some dimension, yielding negative mean payoff in that
dimension and violating the corresponding constraint. Conversely, if every reachable cycle
has componentwise non-negative weight, then every prefix consists of such
cycles together with a residual simple path; the accumulated weight is
therefore bounded below, and some finite initial credit covers every
temporary loss.

Consequently, after normalising the thresholds to zero, each strategic
subformula under finite-memory semantics can be decided by solving an
arbitrary-initial-credit multi-dimensional energy parity game on the product
of \autoref{sec:seq}.

\begin{thm}\label{thm:multi-full}
  Model checking $\ATLsmp$ with arbitrary conjunctive mean-payoff constraints
  under finite-memory semantics is $\TwoExp$-complete.
\end{thm}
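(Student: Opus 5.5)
The plan is to reuse the bottom-up procedure of \autoref{sec:mc}, with the one-dimensional solver replaced by a multi-dimensional energy parity solver. The lower bound is immediate. Setting every constraint to $\top$ and every weight to $0$ gives, by conservativity, a linear reduction from $\ATLs$ model checking. For such formulae finite-memory and perfect-recall semantics coincide, as in the proof of \autoref{thm:main}, so $\TwoExp$-hardness follows from \cite{AlurHK02}.

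For the upper bound, we evaluate state subformulae in increasing order of nesting depth, exactly as before. Fix a strategic subformula $\coal{C}_{\Lambda}\psi$ with $\Lambda = \bigwedge_{j\in J}\mpf_j \geq q_j$. First, substitute fresh propositions for its maximal state subformulae, justified by \autoref{lem:subst}. Then determinise $\psi'$ into a DPA with $2^{2^{O(m)}}$ states and $2^{O(m)}$ priorities, and build the round-preserving product $\Pi$. By \autoref{thm:fm-correspondence}, the subformula holds at $s$ under $\models_{\FM}$ iff player~$1$ has a finite-memory strategy in $\Pi$ winning $\mathrm{Par}\wedge\Lambda$ from $(s,q^0)$. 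Next, normalise each dimension $j\in J$ by replacing $w_j$ with $b_j w_j - a_j$, where $q_j = a_j/b_j$, so that every threshold becomes $0$. We drop the dimensions outside $J$. A repeated dimension is handled either by keeping only its largest threshold or by duplicating the dimension; either way $|J| \le m$.

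The key step is the equivalence, stated informally just before the theorem, between finite-memory winning for $\mathrm{Par}\wedge\bigwedge_j(\mpf_j\ge 0)$ and winning the multi-dimensional energy parity game for some initial credit. I would prove it as in \autoref{lem:fm-energy}. In one direction, an energy-parity winning strategy keeps every partial sum bounded below, so every $\liminf$ average is nonnegative. That strategy may be taken finite-memory, because multi-dimensional energy parity games admit finite-memory winning strategies; this is the one point where I would rely on \cite{ColcombetJLS17} or the earlier energy-parity literature. In the other direction, fix a finite-memory winning strategy. In the induced finite graph, every reachable cycle is componentwise nonnegative: otherwise player~$2$ can pump a bad cycle, and the pumping lasso is realised by a finite-memory opponent, as in \autoref{thm:fm-correspondence}(3). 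The cycle decomposition then bounds every running sum from below by $-|H|\cdot W$, so a finite initial credit suffices.

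The main obstacle is the complexity accounting for the arbitrary-initial-credit solver of \cite[Corollary~V.5]{ColcombetJLS17}. I would use its bound, which is polynomial in the number of vertices and in $W$ and exponential only in a polynomial of the dimension $d$ and the number $c$ of priorities, roughly $(|V|\cdot W)^{\mathrm{poly}(d,c)}$ or $2^{\mathrm{poly}(d,c)}\cdot\mathrm{poly}(|V|,W)^{\dots}$. Substituting $|V| \le \size{\game}^2\,2^{2^{O(m)}}$, $c \le 2^{O(m)}$, $d\le m$ and $\log W \le O(n)$ after normalisation (note that $W_\Lambda$ is at most exponential in the input), each factor is at most $2^{2^{\mathrm{poly}(n)}}$. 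This is the same calculation as in \autoref{thm:main}: a doubly exponential base raised to a singly exponential power remains doubly exponential. Summing over at most $m$ subformulae keeps the overall bound doubly exponential. The delicate point is to verify that the cited bound does not place $d$ or $c$ in a second exponent; if it did, I would instead exploit the fact that $d$ is only linear.
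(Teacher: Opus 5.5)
Your proposal is correct and follows essentially the same route as the paper. Hardness comes from setting every constraint to $\top$. The upper bound uses the round-preserving product, the cycle-decomposition equivalence between finite-memory winning of $\mathrm{Par}\wedge\Lambda$ and the arbitrary-initial-credit energy parity problem, and the solver of \cite[Corollary~V.5]{ColcombetJLS17}.

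That solver's bound, $(N\cdot\hW_{\Lambda})^{O((d+p)^{3}\log(d+p))}$, places $d$ and the number $p$ of even priorities in a single exponent only, which settles the point you flagged. The real risk there was $p$, which is already exponential, rather than $d$. Your explicit remark that the energy-parity witness can be taken finite-memory covers a step the paper treats as immediate, so on that point your argument is slightly more careful than the paper's.
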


\begin{proof}
  Hardness follows from \autoref{thm:main} by taking every constraint to be
  $\top$.

  For the upper bound, evaluate the formula bottom-up as in
  \autoref{sec:mc}. Consider a strategic subformula
  $\coal{C}_{\Lambda}\psi$ after substitution of its maximal state
  subformulae, and form the round-preserving product
  $\Pi = \game^{C}\otimes\dpa_{\psi}$. By \autoref{thm:fm-correspondence},
  the subformula holds at $s$ iff player~$1$ has a finite-memory strategy
  winning $\mathrm{Par}\wedge\Lambda$ in $\Pi$ from $(s,q^{0})$.

  Normalise every threshold in $\Lambda$ to zero and move the state weights
  to the outgoing edges, which changes no mean payoff. We claim that
  player~$1$ has such a finite-memory strategy iff it wins the corresponding
  multi-dimensional energy parity game from some finite initial-credit vector.
  The reverse implication is immediate: a strategy maintaining non-negative
  energy from a finite initial credit has $\liminf$ mean payoff at least $0$
  in every dimension and satisfies the same parity condition. For the forward
  implication, fix a finite-memory winning strategy; the induced graph is finite, and every
  reachable cycle has non-negative total weight in every dimension
  $j \in \dm(\Lambda)$, since otherwise player~$2$ could reach and repeat a
  negative cycle, violating the corresponding mean-payoff constraint
  irrespective of whether the parity condition holds along that play. Every
  prefix decomposes into non-negative cycles and a residual simple path, so
  its accumulated weight is bounded below and some finite initial credit
  suffices. The parity condition is unchanged, since the strategy and its
  consistent paths are the same. This is the componentwise extension of the
  argument in \autoref{lem:fm-energy}; without parity, it is
  \cite[Theorem~3]{ChatterjeeDHR10}.

  It remains to bound the cost of solving that problem on $\Pi$. Let $N$ be
  the number of product vertices, $p$ the number of even priorities and $d$
  the number of dimensions; then
  $N \leq \size{\game}^{2}\cdot 2^{2^{O(\size{\varphi})}}$ and
  $p \leq 2^{O(\size{\varphi})}$. The arbitrary-initial-credit problem for
  multi-dimensional energy parity games is solvable deterministically in
  time
  \[
    (N \cdot \hW_{\Lambda})^{O((d+p)^{3}\log(d+p))}
  \]
  \cite[Corollary~V.5]{ColcombetJLS17}. Let $n$ be the combined size of
  $\game$ and $\varphi$. Then $d \leq n$, $p \leq 2^{O(n)}$,
  $\log N \leq 2^{O(n)}$ and $\log \hW_{\Lambda} \leq \mathrm{poly}(n)$, so the
  logarithm of the displayed bound is
  \[
    O\bigl((d+p)^{3}\log(d+p)\bigr)\cdot
      \bigl(\log N + \log \hW_{\Lambda}\bigr)
    \;=\; 2^{O(n)} .
  \]
  Hence the game can be solved in deterministic time $2^{2^{O(n)}}$.
  Repeating the computation for all product vertices and for all strategic
  subformulae preserves the bound.
\end{proof}

\subsection{Fixed number of dimensions}

The constructions of \autoref{sec:fragments} use only a bounded number of
priorities. When the number of weight dimensions is also fixed, the exponent
in the energy-parity algorithm becomes constant.

\begin{prop}[Fixed dimension]\label{prop:fixed-dim}
  Let the number $d$ of dimensions be fixed. Under finite-memory semantics, model
  checking the quantitative, $\ATLmp$ and $\ATLsmpGR$ fragments with
  conjunctive constraints over those dimensions is decidable in
  pseudo-polynomial time, that is, in time polynomial in $\size{\game}$,
  $\size{\varphi}$ and $\hW_{\varphi}$. In particular these problems are in
  $\PTIME$ when the weights and the numerators and denominators of the
  thresholds are written in unary.
\end{prop}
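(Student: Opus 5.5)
We follow the bottom-up procedure of \autoref{sec:mc} and bound the cost of each strategic subformula separately. There are at most $\size{\varphi}$ such subformulae, and each is solved once on a product arena that decides it at every state. It therefore suffices to show that a single strategic subformula $\coal{C}_{\Lambda}\chi$ of each fragment can be decided at all states in time polynomial in $\size{\game}$, $\size{\varphi}$ and $\hW_{\varphi}$. By \autoref{thm:fm-correspondence}, under finite-memory semantics the subformula holds at $s$ iff player~$1$ has a finite-memory strategy winning $\mathrm{Par}\wedge\Lambda$ in the relevant product arena. After normalising thresholds to zero, the argument preceding \autoref{thm:multi-full} shows this is equivalent to winning the multi-dimensional energy parity game on that arena from some finite initial credit. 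Normalisation replaces $w_j$ by $b w_j - a$, so the absolute weights are bounded by $W_{\Lambda}\le\hW_{\varphi}$.

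Next we bound the arena and priorities in each fragment. In the quantitative fragment the arena is $\game^{C}$, with $O(\size{\game}^{2})$ vertices and one (even) priority. For $\ATLmp$ we reuse the constructions of \autoref{lem:atl-constructions}: these are arenas of size $O(\size{\game}^{2})$ with at most two priorities.

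One point needs checking here. Those constructions were stated for one-dimensional $\Lambda$ and mean-payoff parity winning. I expect them to transfer verbatim, because they only restrict or duplicate the arena. For $\Xop\varphi$ we need the finite-memory winning region of the multi-dimensional game on $\game^{C}$, followed by a one-step check. For $\Gop\varphi$ we restrict to the safety region, which is computed by an attractor. For $\Uop$ we use two copies and a losing sink. None of these steps depends on the dimension.

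For $\ATLsmpGR$ we use the symbolic automaton of \autoref{lem:gr1-dpa}. This gives a product of size $O(\size{\game}^{2}\cdot mk)$ with three priorities; since $mk\le\size{\varphi}^{2}$, the product is polynomial in $\size{\game}$ and $\size{\varphi}$.

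Finally we invoke \cite[Corollary~V.5]{ColcombetJLS17}, which solves the arbitrary-initial-credit problem in time $(N\cdot\hW_{\Lambda})^{O((d+p)^{3}\log(d+p))}$. In every fragment $p\le 2$ and $d$ is fixed, so the exponent is a constant. The running time is therefore polynomial in $N$ and $\hW_{\Lambda}$, hence in $\size{\game}$, $\size{\varphi}$ and $\hW_{\varphi}$. Summing over at most $\size{\varphi}$ subformulae preserves polynomiality, and the Boolean connectives cost linear time each. When weights, numerators and denominators are unary, $\hW_{\varphi}$ is polynomial in the input, which gives membership in $\PTIME$.

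The main obstacle is the $\ATLmp$ case. We must argue that the constructions of \autoref{lem:atl-constructions}, especially the treatment of $\Gop\varphi$ and $\Xop\varphi$, remain correct under finite-memory multi-dimensional semantics. For $\Gop\varphi$ the key observation is the one in that lemma: every strategy enforcing $\Gop\varphi$ stays inside the safety region. Restricting the arena there therefore neither adds nor removes finite-memory winning strategies, and the energy-parity reduction applies to the restricted arena unchanged.

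A secondary point is that the cited bound is stated with $W$ meaning the maximal absolute edge weight. Moving state weights to outgoing edges does not increase this weight, so the bound applies with $\hW_{\Lambda}$ as stated.
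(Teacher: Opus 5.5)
Your proposal is correct and follows essentially the same route as the paper. Each strategic subformula is reduced, via \autoref{thm:fm-correspondence} and the finite-memory energy characterisation used in \autoref{thm:multi-full}, to an arbitrary-initial-credit multi-dimensional energy parity game on the fragment-specific arena (the collapsed product, the constructions of \autoref{lem:atl-constructions}, or the $\GR$ monitor); the number of priorities is then bounded, and \cite[Corollary~V.5]{ColcombetJLS17} applies with a constant exponent. Your remarks on the finite-memory multi-dimensional winning region for $\Xop\varphi$ and the safety restriction for $\Gop\varphi$ spell out what the paper summarises as the constructions applying verbatim to $d$-dimensional constraints.
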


\begin{proof}
  Each fragment yields a game with a bounded number of priorities. The
  quantitative fragment has $\psi = \top$, so the product collapses to
  $\game^{C}$ and the parity condition is trivial. The constructions of
  \autoref{lem:atl-constructions} give at most two priorities on an arena of
  size $O(\size{\game}^{2})$; they do not touch the weights, and the safety
  restriction used for $\Gop\varphi$ relies only on prefix independence, so
  they apply verbatim to a $d$-dimensional constraint. The monitor of
  \autoref{lem:gr1-dpa} has three priorities and a product arena of size
  $O(\size{\game}^{2}\cdot m\cdot k)$.

  In each case, reduce to the arbitrary-initial-credit problem for
  multi-dimensional energy parity games, as in \autoref{thm:multi-full}. With $d$ and the number $p$ of
  even priorities both bounded, the exponent
  $O((d+p)^{3}\log(d+p))$ of \cite[Corollary~V.5]{ColcombetJLS17} is
  constant, so each game is solved in time polynomial in the arena size and
  in $\hW_{\varphi}$, and the model-checking procedure makes at most
  $\size{\varphi}$ such calls. Under a unary encoding $\hW_{\varphi}$ is
  polynomially bounded in the input size, so the running time is polynomial.
\end{proof}

\subsection{Perfect recall}

The argument of \autoref{thm:multi-full} fixes a finite-memory strategy and
reasons about the resulting finite graph. That step is unavailable for
perfect-recall strategies.

\begin{oprob}[Perfect recall in multiple dimensions]\label{oprob:multi-pr}
  The decidability and complexity of multi-dimensional mean-payoff parity
  games under perfect-recall strategies remain unknown. Without parity the
  problem is $\coNP$-complete \cite{VelnerCDHRR15}, as used in
  \autoref{prop:pure-multi}. A solution would yield the corresponding
  model-checking bounds for $\ATLsmp$ by \autoref{thm:seq}. Existing results
  cover finite-memory strategies with parity \cite{ChatterjeeRR14} and
  perfect-recall strategies without parity \cite{VelnerCDHRR15}, but not
  their combination.

The objective $\mathrm{Par}\wedge\Lambda$ is Borel, so the resulting
turn-based perfect-information games are determined \cite{Martin75}; what
is missing is an effective characterisation of the winning region. The
undecidability result of \cite{Velner15} concerns arbitrary Boolean
combinations of limit-average constraints, including disjunction and upper
bounds. Since these operators are excluded from
\autoref{def:constraint}, that result does not directly apply to the fragment
considered here.

\end{oprob}

\section{Memory requirements}\label{sec:memory}

We now study how the available strategy memory affects strategic ability.
Standard $\ATLs$ over finite concurrent game structures does not distinguish
perfect recall from finite memory: every enforceable $\omega$-regular
objective has a finite-memory witness \cite{AlurHK02,BullingJ14}. This
equivalence fails for $\ATLsmp$, even though mean-payoff constraints do not
increase the worst-case model-checking complexity (\autoref{thm:main}).

The three subsections have different scope. The monotonicity and
strict-hierarchy results apply to arbitrary conjunctive mean-payoff
constraints, with strictness already witnessed in one dimension. We then
establish finite-memory approximation for one-dimensional constraints
combined with a temporal objective; for pure quantitative objectives the same
approximation extends componentwise to several dimensions. Finally we give
tight bounds on the memory required by finite-memory witnesses, for
one-dimensional constraints.

\subsection{Strict memory hierarchy}

For positive formulae, enlarging the coalition's strategy class preserves
satisfaction, whereas enlarging the opponents' strategy class may only
destroy it. The argument is independent of the number of weight dimensions
and therefore applies to arbitrary conjunctive constraints $\Lambda$. Hence
\[
  \models_{\ML} \;\Longrightarrow\; \models_{\FM} \;\Longrightarrow\;
  \models .
\]
For the first implication, every memoryless coalition strategy is a
finite-memory strategy, while $\models_{\FM}$ quantifies only over
finite-memory opponents rather than all perfect-recall opponents. For the
second, \autoref{thm:fm-correspondence} shows that $\models^{\FM,\FM}$ agrees
with $\models^{\FM,\PR}$; enlarging the coalition strategy class from finite
memory to perfect recall then preserves satisfaction. These implications
extend from strategic atoms to positive formulae by structural induction. The
restriction to positive formulae is necessary because negation reverses both
directions. The following theorem shows that both implications
are strict.

\begin{thm}[Memory hierarchy]\label{thm:sep}
  There is a two-state, one-player WCGS $\game$ with a single weight
  dimension, a state $s$, and two formulae $\varphi_{2/3}, \varphi_1$ of
  $\ATLsmpGR$ with one-dimensional constraints, such that
  \begin{align*}
    \game, s \models_{\FM} \varphi_{2/3}
      &\qquad\text{and}\qquad
    \game, s \not\models_{\ML} \varphi_{2/3} , \\
    \game, s \models \varphi_{1}
      &\qquad\text{and}\qquad
    \game, s \not\models_{\FM} \varphi_{1} .
  \end{align*}
  Thus, on positive $\ATLsmp$ formulae with arbitrary conjunctive
  constraints,
  \[
    \models_{\ML} \;\Longrightarrow\; \models_{\FM} \;\Longrightarrow\;
    \models ,
  \]
  and both implications are strict, already for one-dimensional
  $\ATLsmpGR$ formulae over a fixed two-state game.
\end{thm}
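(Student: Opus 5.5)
We take $N=C=\{1\}$, $d=1$, $\AP=\{p\}$, and the two-state game with states $s,t$, where $\lab(s)=\{p\}$ and $\lab(t)=\emptyset$. Weights are $w(s)=1$ and $w(t)=0$. At each state agent~$1$ has two actions, ``stay'' and ``switch'', so every one of the four edges $s\to s$, $s\to t$, $t\to t$, $t\to s$ is available. The complement of $C$ is empty, so every coalition strategy determines a unique play. Each semantics therefore reduces to the question of which plays are generated by strategies of the relevant class. We take the $\GR$ objective $\theta=\top\to(\Gop\Fop p\wedge\Gop\Fop\neg p)$, which is the $m=0$ case, or equivalently $\Gop\Fop\top\to\cdots$. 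The two formulae are
\[
  \varphi_{2/3}=\coal{\{1\}}_{\mpf_1\geq 2/3}\,\theta ,
  \qquad
  \varphi_{1}=\coal{\{1\}}_{\mpf_1\geq 1}\,\theta .
\]
The objective $\theta$ requires both states to be visited infinitely often.

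\emph{Memoryless versus finite memory.} A memoryless strategy is a choice of action at $s$ and at $t$. If it stays at either state, the play is eventually constant, and one of $s$ and $t$ is visited only finitely often, so $\theta$ fails. The only remaining strategy switches at both states and yields $(s\,t)^{\omega}$, whose mean payoff is $1/2<2/3$. Hence $\game,s\not\models_{\ML}\varphi_{2/3}$. A two-state transducer produces $(s\,s\,t)^{\omega}$, which satisfies $\theta$ and has mean payoff exactly $2/3$. Hence $\game,s\models_{\FM}\varphi_{2/3}$.

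\emph{Finite memory versus perfect recall.} Any finite-memory strategy in this single-player game generates an ultimately periodic play, because the pair consisting of transducer state and game state eventually repeats. The mean payoff of such a play equals the average weight over its period. If $\theta$ holds, the period contains $t$, so this average is strictly below $1$. Hence $\game,s\not\models_{\FM}\varphi_1$. For perfect recall, let the strategy visit $t$ exactly at the positions $2^k$ and stay in $s$ at all other positions. This is a legal play, since every transition is available, and it satisfies $\theta$. The number of visits to $t$ before position $n$ is $O(\log n)$, so the running average is $1-O(\log n/n)\to1$, and the $\liminf$ equals $1$. Hence $\game,s\models\varphi_1$.

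Finally, we combine these witnesses with the implications $\models_{\ML}\Rightarrow\models_{\FM}\Rightarrow\models$ established just before the theorem for positive formulae. The two witnesses show that neither implication reverses. The main obstacle is the finite-memory impossibility argument. It must explicitly use the fact that the complement is empty, so that the play is the periodic run of the product of the transducer with the game. Once that is fixed, the remaining steps reduce to the averages computed above.
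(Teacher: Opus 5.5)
Your proof is correct and follows the paper's argument: the same two-state stay/switch game, the same exhaustive memoryless case analysis, a period-three finite-memory witness, ultimate periodicity to rule out finite memory at threshold $1$, and a perfect-recall play in which the visits to the weight-$0$ state become exponentially sparse. The only difference is cosmetic: you give weight $1$ to the $p$-labelled state and therefore need the extra conjunct $\Gop\Fop\neg p$, whereas the paper gives weight $1$ to the unlabelled state and uses $\Gop\Fop p$ alone.
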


\begin{exa}[Separating game]\label{exa:sep}
  Let $N = C = \{1\}$, $d = 1$, $\AP = \{p\}$, and let $\game$ have two
  states $s$ and $u$ with $\lab(s) = \{p\}$, $\lab(u) = \emptyset$,
  $w(s) = 0$ and $w(u) = 1$. Agent~$1$ has
  $\Av_1(s) = \Av_1(u) = \{\mathit{stay}, \mathit{go}\}$. At either state,
  $\mathit{stay}$ is a self-loop and $\mathit{go}$ moves to the other state.
  For
  $q \in \mathbb{Q}$ write
  $\varphi_q = \coal{\{1\}}_{\mpf_1 \geq q}\,\Gop\Fop p$.
\end{exa}

\begin{proof}[Proof of \autoref{thm:sep}]
  We use the game of \autoref{exa:sep} for both separations. Since $C = N$,
  every strategy determines a unique play, and the opponent strategy class is
  irrelevant.

  \emph{Memoryless is weaker than finite memory.} A memoryless strategy is a
  choice of action at $s$ and a choice at $u$. If it plays $\mathit{stay}$ at
  $s$, its choice at $u$ is irrelevant and the outcome is $s^{\omega}$, which
  satisfies $\Gop\Fop p$ but has mean payoff $0$. If it plays $\mathit{go}$
  at $s$ and $\mathit{stay}$ at $u$, the play is $s\,u^{\omega}$, which fails
  $\Gop\Fop p$. If it plays $\mathit{go}$ at both, the play is
  $(su)^{\omega}$, which satisfies $\Gop\Fop p$ with mean payoff $\tfrac12$.
  These cases are exhaustive, so the maximum mean payoff attainable by a
  memoryless strategy satisfying $\Gop\Fop p$ is $\tfrac12$, and
  $\game, s \not\models_{\ML}\varphi_{2/3}$. A two-state transducer, on the
  other hand, realises the play $(suu)^{\omega}$: it plays $\mathit{go}$ at
  $s$, then $\mathit{stay}$ on the first visit to $u$ and $\mathit{go}$ on
  the second, its memory recording whether the current visit to $u$ is the
  first or the second since the last visit to $s$. This satisfies
  $\Gop\Fop p$ and has mean payoff $\tfrac23$, so
  $\game, s \models_{\FM}\varphi_{2/3}$.

  \emph{Finite memory is weaker than perfect recall.} On the $i$-th visit to
  $s$, for $i \geq 1$, the strategy moves to $u$, remains there for $2^{i}$
  positions, and then returns to $s$. The resulting play visits $s$
  infinitely often, so $\Gop\Fop p$ holds. The number of visits to $s$ grows
  only linearly, while the total play length grows exponentially, so the
  asymptotic density of the weight-$0$ state tends to $0$. Concretely, after
  the $i$-th excursion the play has length $T_i = i + 2^{i+1} - 2$ and
  accumulated weight $2^{i+1}-2$, so the average over that prefix is
  $1 - i/T_i \to 1$. Between successive excursion endpoints the average can
  decrease only at the single visit to $s$, which contributes weight $0$;
  immediately after that visit the average is $1 - (i{+}1)/(T_i{+}1)$, and it
  increases again for the rest of the excursion. The size of the decrease
  therefore tends to $0$, so $\mpf_1 = 1$ and
  $\game, s \models \varphi_1$.

  Conversely, let $\sigma_1$ be a finite-memory strategy of size $m$. Since
  the game and the transducer are deterministic, the induced sequence of
  pairs consisting of a game state and a memory state is ultimately
  periodic, and so therefore is the induced play $\out$: writing
  $\out = h\,\lambda^{\omega}$, we have
  $|\lambda| = \ell \leq 2m$. If $\out \models \Gop\Fop p$ then $s$ occurs in
  $\lambda$, say $c \geq 1$ times, and by prefix-independence
  \[
    \mpf_1(\out) \;=\; \frac{\ell - c}{\ell} \;\leq\; 1 - \frac{1}{\ell}
    \;<\; 1 ,
  \]
  so the constraint $\mpf_1 \geq 1$ fails. If $s$ does not occur in
  $\lambda$, then $\Gop\Fop p$ fails instead. Either way no finite-memory
  strategy witnesses $\varphi_1$.
\end{proof}

The two strict inclusions have different causes. A memoryless strategy must
use the same action whenever the same state is revisited, and therefore
cannot vary the length of successive excursions to \(u\). A finite-memory
strategy can realise longer periodic excursions, but every induced play is
ultimately periodic; it cannot make the excursion lengths grow without
bound, as required to attain mean payoff \(1\) while still visiting \(s\)
infinitely often.

The difference between finite memory and perfect recall is therefore confined
to the boundary threshold. In \autoref{exa:sep}, \(\varphi_q\) holds under
both \(\models\) and \(\models_{\FM}\) for every \(q<1\): choosing
\(\ell\) sufficiently large and visiting \(s\) once every \(\ell\) positions
gives mean payoff \(1-1/\ell\geq q\). Both semantics fail for \(q>1\), while
only perfect recall satisfies \(\varphi_1\). The next theorem shows that this
approximation from below holds generally in one dimension.

\subsection{Finite-memory approximation}

The following result lifts the finite-memory approximation property of
mean-payoff parity games \cite{ChatterjeeHJ05,ChatterjeeD12} to coalition
abilities in weighted concurrent games, through the strategy correspondence
of \autoref{sec:seq}.

\begin{thm}[Finite-memory approximation]\label{thm:approx}
  Let $\game$ be a WCGS, $s$ a state, $C$ a coalition, $\psi$ an $\LTL$
  formula and $j$ a dimension. For all rationals $q < q'$,
  \[
    \game, s \models \coal{C}_{\mpf_j \geq q'}\psi
    \qquad\Longrightarrow\qquad
    \game, s \models_{\FM} \coal{C}_{\mpf_j \geq q}\psi .
  \]
\end{thm}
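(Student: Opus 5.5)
The plan is to move the problem into the round-preserving product $\Pi = \game^{C}\otimes\dpa_{\psi}$ and use the finite-memory approximation property of one-dimensional mean-payoff parity games there. First, by \autoref{thm:seq}, the hypothesis $\game, s \models \coal{C}_{\mpf_j \geq q'}\psi$ gives a perfect-recall player~$1$ strategy winning $\mathrm{Par}\wedge(\mpf_j\geq q')$ in $\Pi$ from $(s,q^{0})$. Next we normalise the lower threshold $q = a/b$, replacing $\bar w_j$ by $b\bar w_j - a$. Under this normalisation, $\mpf_j \geq q$ becomes $\mpf \geq 0$ and $\mpf_j\geq q'$ becomes $\mpf \geq \varepsilon$ with $\varepsilon = b(q'-q) > 0$. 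So the perfect-recall strategy guarantees $\mathrm{Par}\wedge(\mpf\geq\varepsilon)$ in the normalised one-dimensional weighted parity game.

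The central step is to turn this into a finite-memory strategy winning $\mathrm{Par}\wedge(\mpf\geq 0)$. We first try to invoke \cite[Theorem~4]{ChatterjeeD12}, which relates mean-payoff parity games to energy parity games through a positive shift. In our terms: if player~$1$ wins $\mathrm{Par}\wedge(\mpf \geq \varepsilon)$ for some $\varepsilon>0$, then player~$1$ wins the energy parity game with weights shifted down by a positive amount $\eta \le \varepsilon$ from some finite initial credit. After rescaling to keep integer weights, winning with shifted weights $\bar w-\eta$ from finite credit implies that the same strategy keeps partial sums of $\bar w$ bounded below by $\eta t - c_0$. Hence it achieves $\mpf\geq\eta\geq 0$ and satisfies the parity condition.

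Energy parity games admit finite-memory winning strategies \cite[Theorem~1]{ChatterjeeD12}, so we obtain a finite-memory player~$1$ strategy winning $\mathrm{Par}\wedge(\mpf\geq 0)$ in the normalised $\Pi$. Equivalently, this is a finite-memory strategy winning $\mathrm{Par}\wedge(\mpf_j\geq q)$ in $\Pi$. By the finite-memory half of \autoref{lem:fm-energy} one could also argue directly from the energy formulation. Finally, \autoref{thm:fm-correspondence}(2) translates this into a finite-memory $C$-strategy of size at most $m\cdot|Q|$ enforcing $\psi\wedge(\mpf_j\geq q)$ against every counter-strategy. Items~(3) and the final equivalence of that theorem then give $\game,s\models_{\FM}\coal{C}_{\mpf_j\geq q}\psi$.

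The main obstacle is the middle step: justifying that a perfect-recall strategy winning with a strict margin $\varepsilon$ yields finite credit for a slightly shifted energy parity game. If the cited theorem is only phrased as "value $>0$ implies finite-memory $\varepsilon$-optimal strategies", we would instead argue via determinacy. Suppose player~$1$ loses the energy parity game with weights shifted by $\varepsilon/2$ from every credit. Then player~$2$ has a memoryless strategy against which every play either violates parity or has mean payoff at most $\varepsilon/2$ in the shifted game. The latter would contradict the perfect-recall strategy's guarantee of $\varepsilon$; this relies on memoryless determinacy of player~$2$ in mean-payoff parity games \cite{ChatterjeeHJ05}. Handling the fact that the original threshold is the lower $\liminf$ is routine, since the margin absorbs it.
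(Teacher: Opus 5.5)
Your proof is correct and follows essentially the same route as the paper: you pass to $\Pi$ via \autoref{thm:seq}, use the Chatterjee--Doyen relationship between mean-payoff parity and energy parity games to obtain a finite-memory player~$1$ strategy that still meets the lower threshold $q$, and transfer it back with \autoref{thm:fm-correspondence}(2). The difference is only in presentation: you spell out the shifted energy-parity step, where $\eta > 0$ holds because rescaling the weights enlarges your margin but not the fixed shift of \cite[Theorem~4]{ChatterjeeD12}, whereas the paper directly cites the finite-memory $\delta$-approximation from the proof of that theorem.
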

\begin{proof}
  By \autoref{thm:seq}, player~$1$ can enforce
  $\mathrm{Par}\wedge(\mpf \geq q')$ in $\Pi = \game^{C}\otimes\dpa_{\psi}$
  from $(s,q^{0})$. For every $\delta > 0$, player~$1$ then has a
  finite-memory strategy enforcing the parity condition and mean payoff at
  least $q' - \delta$ \cite[proof of Theorem~4]{ChatterjeeD12}, a property
  established for mean-payoff parity games in \cite{ChatterjeeHJ05}. Choose
  $0 < \delta < q' - q$; the resulting strategy wins for
  $\mathrm{Par}\wedge(\mpf\geq q)$, and
  \autoref{thm:fm-correspondence}(2) transfers it back to a finite-memory
  $C$-strategy witnessing the conclusion.
\end{proof}

Consequently, fix $\game$, $s$, $C$, $\psi$ and $j$, and let
\[
v=\sup\{q\in\mathbb{Q}: \game,s\models\coal{C}_{\mpf_j\geq q}\psi\}.
\]
The same value is obtained under finite-memory semantics. Indeed, every
finite-memory strategy is a perfect-recall strategy, so every threshold
enforceable with finite memory is also enforceable with perfect recall.
Conversely, let $q<v$ be rational. By the definition of $v$, there is a
rational $q'$ with $q<q'$ that is enforceable with perfect recall.
By \autoref{thm:approx}, $q$ is then enforceable with finite memory.

Thus the two semantics agree on every rational threshold strictly below
$v$. They may differ only in whether $v$ itself is attained, when
$v\in\mathbb{Q}$: in \autoref{thm:sep}, for example, both semantics enforce
every $q<1$, but only perfect recall enforces the boundary threshold $1$.

\autoref{thm:approx} is stated for a one-dimensional constraint because its
proof uses the finite-memory approximation property of one-dimensional
mean-payoff parity games. For pure quantitative objectives the statement
extends componentwise. If player~$1$ wins a multi-dimensional mean-payoff
game for the threshold vector $\vec{0}$, then for every $\alpha > 0$ it has a
finite-memory strategy securing at least $-\alpha$ in every dimension
\cite[Lemma~15]{VelnerCDHRR15}. Taking $\psi = \top$, so that the product
collapses to $\game^{C}$ as in \autoref{sec:frag-multi}, and normalising the
thresholds, this gives
\[
  \game, s \models \coal{C}_{\bigwedge_{j \in J}\mpf_j \geq q'_j}\top
  \qquad\Longrightarrow\qquad
  \game, s \models_{\FM} \coal{C}_{\bigwedge_{j \in J}\mpf_j \geq q_j}\top
\]
for all rational vectors $\vec{q} < \vec{q}\,'$; choose
$0 < \alpha \leq \min_{j}(q'_j - q_j)$ and transfer the strategy back with
\autoref{thm:fm-correspondence}(2). Consequently the perfect-recall and
finite-memory threshold regions of a quantitative strategic atom have the
same downward closure, and can differ only on their Pareto boundary. We do
not establish the corresponding statement for multi-dimensional constraints
combined with a non-trivial temporal objective; that case would need the
approximation property for multi-dimensional mean-payoff parity games, which
is also the setting of \autoref{oprob:multi-pr}.

\subsection{One-dimensional memory bounds}

Constraints are one-dimensional throughout this subsection. Even when the
game and temporal objective are fixed, the memory required by a
finite-memory witness may depend on the numerical threshold. As in
\autoref{sec:mc}, a one-dimensional conjunction is equivalent to its
strongest conjunct, so it suffices to consider constraints of the form
$\mpf_j \geq q$.

\begin{thm}[Tight memory bounds]\label{thm:memory-bounds}
  The following hold.
  \begin{enumerate}[\em(1)]
  \item Let $\game$ be the fixed two-state game of \autoref{exa:sep} and, for
    every integer $b \geq 2$, put
    \[
      \varphi_b \;=\; \coal{\{1\}}_{\mpf_1 \geq 1-1/b}\,\Gop\Fop p .
    \]
    Then $\game, s \models_{\FM}\varphi_b$, but every witnessing
    finite-memory strategy has at least $\lceil b/2 \rceil$ memory states.
    Thus the required memory may be exponential in the binary encoding of the
    threshold, even though the game and temporal objective are fixed.
  \item Suppose $\game, s \models_{\FM}\coal{C}_{\mpf_j \geq q}\psi$, where
    $q = a/b$ is in lowest terms, and let $\dpa_{\psi}$ have state space $Q$
    and $c$ priorities. Then there is a witnessing finite-memory
    $C$-strategy with
    \[
      O\bigl(\size{\game}^{2}\cdot|Q|^{2}\cdot c\cdot W_q\bigr)
    \]
    memory states, where $W_q = bW + |a|$.
  \end{enumerate}
\end{thm}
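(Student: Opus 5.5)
For item~(1), existence of a witness follows from the remark after \autoref{thm:sep}: the transducer that plays $\mathit{go}$ at $s$, stays at $u$ for $b-1$ positions and then returns realises $(s\,u^{b-1})^{\omega}$. This play satisfies $\Gop\Fop p$ and has mean payoff exactly $1-1/b$. To make this concrete I will give an explicit transducer counting the visits to $u$; it needs about $b-1$ memory states, which is enough for existence.

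For the lower bound I reuse the lasso argument from the proof of \autoref{thm:sep}. A strategy of size $m$ induces an ultimately periodic play $h\lambda^{\omega}$ with $|\lambda| = \ell \leq 2m$, because the pair (game state, memory state) takes at most $2m$ values. If the strategy is a witness, $\Gop\Fop p$ forces $s$ to occur in $\lambda$, say $c\geq 1$ times. Then
\[
  \mpf_1 = 1 - c/\ell \leq 1-1/\ell ,
\]
and the constraint $1-1/\ell \geq 1-1/b$ gives $\ell \geq b$, hence $m \geq \ell/2 \geq \lceil b/2\rceil$. The exponential claim follows because $b$ is exponential in $\|1-1/b\|$.

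For item~(2), I normalise $w_j$ to $b\,w_j - a$, which turns the threshold into $0$ and makes the largest absolute weight $W_q$. By \autoref{thm:fm-correspondence} the hypothesis gives player~$1$ a finite-memory winning strategy for $\mathrm{Par}\wedge(\mpf\geq 0)$ in $\Pi=\game^{C}\otimes\dpa_{\psi}$. \autoref{lem:fm-energy} then yields such a strategy with memory $O(|V_\Pi|\cdot c\cdot W_q)$, where $|V_\Pi| \leq \size{\game}^{2}|Q|$. Transferring this back with \autoref{thm:fm-correspondence}(2) multiplies the memory by $|Q|$, which gives $O(\size{\game}^{2}|Q|^{2} c W_q)$.

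The main point to check is that the priority count of $\Pi$ is still $c$ and that the normalised weight bound of $\Pi$ is exactly $W_q$. Both hold because $\bar{\pri}$ copies $\pri$ and $\bar w$ copies $w$. I must also check that the bound in \cite[Theorem~1]{ChatterjeeD12} refers to the vertex count and the maximal absolute weight in the form used in \autoref{lem:fm-energy}. If it is stated for edge weights, moving the state weights to outgoing edges preserves both quantities.
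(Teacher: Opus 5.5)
Your proposal is correct and follows the paper's proof essentially step for step: an explicit counter transducer for existence (your $b-1$ memory states versus the paper's $b$ makes no difference), the bound $\ell \leq 2m$ on the period of the induced lasso for the lower bound, and, for the upper bound, normalisation, then \autoref{lem:fm-energy} applied to $\Pi$, then the extra factor $|Q|$ from \autoref{thm:fm-correspondence}(2). One small slip: the step $\ell/2 \geq \lceil b/2\rceil$ fails when $\ell = b$ is odd, so conclude instead from $2m \geq \ell \geq b$ together with the fact that $m$ is an integer.
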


\begin{proof}
  \emph{Lower bound.} A $b$-state counter realises the periodic play
  $(s\,u^{b-1})^{\omega}$. This play satisfies $\Gop\Fop p$ and has mean
  payoff $(b-1)/b = 1-1/b$, so it witnesses $\varphi_b$.

  Conversely, let a finite-memory witness have $m$ memory states. Since the
  game and the strategy are deterministic, the induced play is ultimately
  periodic, with a period $\lambda$ of length $\ell \leq 2m$. The objective
  $\Gop\Fop p$ requires $s$ to occur in $\lambda$; if it occurs $r \geq 1$
  times, then
  \[
    \mpf_1 \;=\; \frac{\ell - r}{\ell} \;\leq\; 1 - \frac{1}{\ell} .
  \]
  Satisfying the threshold $1-1/b$ therefore requires $\ell \geq b$. Hence
  $2m \geq b$, and thus $m \geq \lceil b/2 \rceil$.

  \emph{Upper bound.} By \autoref{thm:fm-correspondence}, player~$1$ has a
  finite-memory strategy winning $\mathrm{Par}\wedge(\mpf \geq q)$ in the
  product $\Pi = \game^{C}\otimes\dpa_{\psi}$. This product has
  $|V| = O(\size{\game}^{2}|Q|)$ vertices and $c$ priorities. Normalising the
  threshold replaces the weights by $b\cdot w - a$, whose absolute value is
  at most $W_q$. By \autoref{lem:fm-energy}, player~$1$ therefore has a
  winning strategy in $\Pi$ with
  $O(|V|\cdot c\cdot W_q) = O(\size{\game}^{2}|Q|\cdot c\cdot W_q)$ memory
  states. Translating this strategy back through
  \autoref{thm:fm-correspondence} contributes a further factor of $|Q|$,
  giving the claimed bound.

  Each member of $C$ may implement a private copy of the resulting joint
  transducer and project its joint output to its own action.
\end{proof}

Any satisfiable non-trivial threshold may be taken in $[-W,W]$: thresholds
below $-W$ are vacuous, while thresholds above $W$ are unenforceable. For
$q = a/b \in [-W,W]$ we have $|a| \leq bW$, and hence $W_q \leq 2bW$.
Consequently, when the game and temporal monitor are fixed, the upper bound
is $O(b)$, matching the $\Omega(b)$ lower bound in item~\emph{(1)}. Thus the
worst-case memory requirement is $\Theta(b)$ and may be exponential in the
binary size of the threshold.

\section{Applications and expressive limits}\label{sec:apps}

This section gives two direct applications of the quantitative modality:
temporal synthesis with performance guarantees and multi-criteria coalition
objectives. We then compare the logic with cooperative rational verification.
The comparison exposes an expressive limitation: the logic can impose fixed
payoff thresholds, but cannot compare the payoffs induced by two strategy
profiles.

\subsection{Synthesis and optimisation}

Let $C$ be the controlled components and $N \setminus C$ the environment.
Then $\coal{C}_{\Lambda}\psi$ asks whether the controller can enforce the
temporal specification $\psi$ while guaranteeing the long-run performance
bounds $\Lambda$ against every environment strategy. Taking
$\Lambda = \top$ gives qualitative temporal synthesis over the game
structure, while taking $\psi = \top$ gives a multi-mean-payoff game. When
both are non-trivial, this captures reactive-controller synthesis in which
safety or liveness requirements must coexist with long-run performance
guarantees, for example on resource use, throughput or service reward. It is
the coalition-indexed concurrent-game counterpart of the
qualitative--quantitative synthesis problems studied in
\cite{BloemCHJ09,BohyBFJR13}.

Repeated threshold queries also yield an optimisation procedure. For fixed
$\game$, $s$, $C$, $\psi$ and a dimension $j$, define
\[
  v_C(s,\psi,j) \;=\;
  \sup\{ q \in \mathbb{Q} :
    \game, s \models \coal{C}_{\mpf_j \geq q}\,\psi \} ,
\]
with $\sup\emptyset = -\infty$. If the set is non-empty, it is downward
closed and its supremum lies in $[-W,W]$. Bisection on this interval, using
the procedure of \autoref{thm:main} as a threshold oracle, therefore
approximates $v_C(s,\psi,j)$ to any prescribed additive error. By
\autoref{thm:approx}, perfect-recall and finite-memory semantics yield the
same supremum, although they may differ in whether it is attained;
\autoref{exa:sep} gives such a case.

\subsection{Aggregate and multi-criteria guarantees}

Dimensions need not represent individual agents. They may also record
system-level quantities, such as total reward, energy consumption or
throughput \cite{ChatterjeeRR14,VelnerCDHRR15}. In the player-indexed case
$d = n$, dimension $i$ represents the utility of agent $i$. The sum and
minimum of the individual utilities give utilitarian and egalitarian
social-welfare criteria, respectively; these forms of aggregation have been
considered in quantitative rational synthesis and temporal equilibrium
analysis \cite{AlmagorKP18,GutierrezNPW20}.

An egalitarian guarantee is then expressed directly by
\[
  \coal{C}_{\bigwedge_{i \in N}\mpf_i \geq q}\,\psi ,
\]
which requires $C$ to enforce $\psi$ while guaranteeing every agent a payoff
of at least $q$. For a
utilitarian guarantee, add a dimension
$w_{\mathit{util}}(s) = \sum_{i \in N} w_i(s)$ and require
$\mpf_{\mathit{util}} \geq q$. Since the lower mean payoff is superadditive,
\[
  \mpf_{\mathit{util}}(\out) \;\geq\; \sum_{i \in N}\mpf_i(\out) ,
\]
with equality when all individual running averages converge. Separate lower
bounds therefore imply a corresponding aggregate bound, but the converse need
not hold: high payoff in one dimension may compensate for low payoff in
another.

For example, suppose that a group of service robots must repeatedly complete
a collection of tasks. An egalitarian guarantee requires every robot to
receive a long-run average net payment of at least $q$, ruling out a policy
that achieves high aggregate performance by systematically under-rewarding
one robot. A utilitarian guarantee instead requires the robots' combined
long-run utility to exceed a given threshold, allowing gains to one robot to
compensate for losses to another. Aggregate dimensions can also express
system constraints. If $p_i(s)$ is the payment made to robot $i$ at state
$s$, defining
\[
  w_{\mathit{bud}}(s) \;=\; -\sum_{i \in N} p_i(s)
\]
and requiring $\mpf_{\mathit{bud}} \geq -B$ ensures that the long-run average
total payment does not exceed $B$, while the temporal objective $\psi$
requires the tasks themselves to be completed. We call this an
\emph{aggregate budget} guarantee rather than a utilitarian one: utilitarian
welfare aggregates the agents' utilities, whereas the budget dimension
measures the system's expenditure. Note that, since mean payoff ignores
finite prefixes, this is a constraint on the long-run rate of payment and not
on a finite cumulative budget.

For fixed $C$, $s$ and $\psi$, let
\[
  V_C(s,\psi) \;=\;
  \Bigl\{ \vec{x} \in \mathbb{Q}^{d} \;:\;
    \game, s \models
      \coal{C}_{\bigwedge_{j \in D}\mpf_j \geq x_j}\,\psi \Bigr\} .
\]
This set is downward closed: a strategy enforcing $\vec{x}$ also enforces
every $\vec{x}\,' \leq \vec{x}$. Its maximal points describe the trade-offs
among the dimensions; \autoref{exa:service} illustrates this in two
dimensions. For $\psi = \top$, the corresponding Pareto problem is studied in
\cite{BrenguierR15}. With a non-trivial temporal objective,
\autoref{thm:multi-full} gives membership queries under finite-memory
semantics, while the corresponding perfect-recall problem remains open
(\autoref{oprob:multi-pr}). We do not consider the computation of the Pareto
boundary itself.

\subsection{Cooperative rational verification}\label{sec:rv}

Cooperative rational verification asks which temporal properties hold in
profiles from which no coalition has a beneficial deviation
\cite{GutierrezKW23,GutierrezLNSW24}. Since a beneficial deviation is itself
a coalition-ability query, this suggests a connection with $\ATLsmp$. For
dichotomous preferences, the standard $\ATLs$ encoding of core existence uses
an improvement threshold fixed by the objectives. Under mean-payoff
preferences, the threshold instead depends on the payoff of the candidate
profile. Consequently $\ATLsmp$ expresses deviations from a fixed payoff
baseline, but does not directly reproduce the core encoding.

The \emph{core} originates in cooperative game theory
\cite{Gillies59,OsborneR94}. In the concurrent-game formulation of
\cite{GutierrezKW23}, a strategy profile is in the core if no coalition has a
deviation making every member of that coalition strictly better off, whatever
the agents outside it do. With dichotomous preferences
the core is always non-empty \cite[Theorem~1]{GutierrezKW23}, whereas under
mean-payoff preferences it may be empty \cite{GutierrezLNSW24}. We follow
\cite{GutierrezLNSW24} and specialise to the player-indexed case $d = n$,
where $w_i$ is the utility of agent $i$ and
\[
  \pay_i(\vec{\sigma}) \;=\; \mpf_i(\out(\vec{\sigma}, \sinit)) .
\]
We use finite-memory semantics throughout the subsection.

\paragraph{Dichotomous preferences.}
Let $\gamma_i$ be an $\LTL$ objective for agent $i$, with payoff $1$ when
$\gamma_i$ holds along the outcome and $0$ otherwise. Strict improvement is
therefore equivalent to changing $\gamma_i$ from false to true. A coalition
$C$ has a beneficial deviation precisely when every member of $C$ currently
loses and $C$ can jointly enforce all of their objectives. The latter
condition is expressed by
$\coal{C}\bigwedge_{i \in C}\gamma_i$
\cite[Theorem~2]{GutierrezKW23}.

Gutierrez et al.\ \cite[Theorem~3]{GutierrezKW23} use this observation to
encode core existence. Writing $N$ for the set of agents, the existence of a core profile
whose outcome satisfies a given $\LTL$ formula $\varphi$ is expressed by
\[
  \Phi_{\Core}(\varphi) \;=\;
  \bigvee_{W \subseteq N}
  \Bigl(
    \coal{N}\bigl(\varphi \wedge
      \textstyle\bigwedge_{i \in W}\gamma_i \wedge
      \bigwedge_{j \in N\setminus W}\neg\gamma_j \bigr)
    \;\wedge
    \bigwedge_{\emptyset \neq L \subseteq N\setminus W}
      \neg\coal{L}\textstyle\bigwedge_{j \in L}\gamma_j
  \Bigr) .
\]
Each disjunct fixes a set $W$ of winners. The grand-coalition modality
selects a profile whose outcome satisfies $\varphi$ and whose winners are
exactly $W$. The remaining conjuncts exclude a beneficial deviation by every
coalition of losers. Core non-emptiness is the case $\varphi = \top$.

The encoding works because the deviation test
$\coal{L}\bigwedge_{j \in L}\gamma_j$ is independent of the profile selected
by the outer modality: the winning threshold is fixed by the objectives
themselves. The formula expresses the existence of a core profile. It does
not express membership of an externally supplied profile, since neither
$\ATLs$ nor $\ATLsmp$ can name that profile.

\paragraph{Mean-payoff preferences.}
Define $x_i = \pay_i(\vec{\sigma})$ for $i \in C$. With mean-payoff
preferences, a coalition $C$ has a beneficial deviation from $\vec{\sigma}$
when it has a $C$-strategy $\vec{\sigma}\,'_C$ with
\[
  \pay_i\bigl(\out((\vec{\sigma}\,'_C, \vec{\tau}_{-C}), \sinit)\bigr)
  \;>\; x_i
  \qquad\text{for every } i \in C
\]
and every counter-strategy $\vec{\tau}_{-C}$.

The baseline $\vec{x}$ is the payoff vector of $\vec{\sigma}$, which varies
with the profile the outer modality selects. An encoding in the style of
$\Phi_{\Core}$ would therefore need the thresholds of an inner modality to
depend on the payoff of the outer profile. The syntax of $\ATLsmp$ provides
only fixed rational thresholds: it cannot name a strategy profile or compare
the payoffs of two outcomes. This rules out the direct encoding; it does not
establish that core non-emptiness is inexpressible in $\ATLsmp$.

\paragraph{Fixed-baseline deviations.}
Once the baseline is fixed, the deviation test is expressible: under
finite-memory semantics, a strict guarantee above the baseline is equivalent
to a non-strict guarantee at some larger rational threshold.

\begin{prop}[Fixed-baseline deviations]\label{prop:fixed-baseline}
  Let $C \subseteq N$, $s \in \St$, $\psi$ an $\LTL$ formula and
  $\vec{x} \in \mathbb{Q}^{C}$. The following are equivalent.
  \begin{enumerate}
    \item There is a finite-memory $C$-strategy that enforces $\psi$ and
      guarantees every member $i \in C$ a payoff strictly greater than $x_i$
      against every counter-strategy.
    \item There is a rational vector $\vec{x}\,' > \vec{x}$ such that
      \[
        \game, s \models_{\FM}
          \coal{C}_{\bigwedge_{i \in C}\mpf_i \geq x'_i}\psi .
      \]
  \end{enumerate}
\end{prop}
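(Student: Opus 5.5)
The direction (2)$\Rightarrow$(1) is immediate. Suppose the finite-memory strategy $\vec{\sigma}_C$ witnesses $\game, s \models_{\FM}\coal{C}_{\bigwedge_i \mpf_i \geq x'_i}\psi$. By \autoref{thm:fm-correspondence}, $\models^{\FM,\FM}$ and $\models^{\FM,\PR}$ agree, so $\vec{\sigma}_C$ enforces $\psi$ and $\mpf_i \geq x'_i > x_i$ against \emph{every} counter-strategy. This is exactly~(1).

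For (1)$\Rightarrow$(2), I follow the cycle argument already used in \autoref{thm:fm-correspondence}(3) and \autoref{thm:multi-full}.

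\emph{Step 1: build the finite graph.} Fix the finite-memory $C$-strategy $\vec{\sigma}_C$ from~(1), with memory set $M$. Form the finite graph $H$ on $\St\times M$. It has an edge $(s',r)\to(s'',r')$ whenever $r'=\delta(r,s')$ and $s''=\tr(s',(\tau(r,s'),\beta))$ for some joint action $\beta\in\vec{\Ac}_{-C}(s')$. Restrict $H$ to the part reachable from $(s,r^0)$. By the quantifier-order argument of \autoref{rem:quantifier}, the infinite paths of $H$ from $(s,r^0)$ are exactly the outcomes of $\vec{\sigma}_C$, each one realised by a suitable counter-strategy.

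\emph{Step 2: extract the margin.} For each $i\in C$, every simple cycle of $H$ reachable from $(s,r^0)$ must have mean weight strictly greater than $x_i$ in dimension $i$. Otherwise the environment can reach that cycle and repeat it forever, producing a lasso outcome with $\mpf_i$ equal to the cycle mean, hence $\leq x_i$, which contradicts~(1). There are finitely many simple cycles, and each has a rational mean. So $\mu_i=\min\{\text{mean}_i(\text{simple cycle})\}$ is a rational with $\mu_i>x_i$. Put $x'_i=\mu_i$; if $H$ had no reachable cycle, any value would do, but $H$ is finite with infinite paths, so cycles exist.

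\emph{Step 3: verify the non-strict bound.} I must show that every infinite path of $H$ has $\mpf_i\geq\mu_i$. After replacing $w_i$ by $w_i-\mu_i$, every simple cycle has non-negative total weight. Every finite path then decomposes into simple cycles plus a residual simple path of length at most $|H|$. Hence partial sums are bounded below by $-|H|\cdot\max|w_i-\mu_i|$, and the lower mean is at least $0$. The temporal objective $\psi$ holds on all outcomes by assumption, so the same $\vec{\sigma}_C$ witnesses~(2) with $\vec{x}\,'=(\mu_i)_{i\in C}>\vec{x}$.

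I expect the main subtlety to be Step 2, specifically the justification that the minimum over cycles is attained and that the minimising cycle is realisable by some counter-strategy. Finiteness of $H$ settles the first point: the minimum ranges over simple cycles only, since any cycle mean is a convex combination of simple-cycle means. For the second point, a lasso is realised by a finite-memory counter-strategy, exactly as in \autoref{thm:fm-correspondence}(3). This is where finite memory is essential. Under perfect recall the infimum over outcomes need not be attained, as in \autoref{exa:sep}, and the strict guarantee could then fail to yield a rational margin.
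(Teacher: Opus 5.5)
Your proof is correct and follows the paper's argument. For (2)$\Rightarrow$(1) you appeal to \autoref{thm:fm-correspondence}; for (1)$\Rightarrow$(2) you take $x'_i$ to be the minimum mean weight in dimension $i$ over the reachable simple cycles of the finite graph induced by the fixed strategy, use the lasso argument for strictness and the cycle decomposition for the non-strict bound. The only difference is that you build this graph directly on $\St\times M$, whereas the paper builds it on the product $\game^{C}\otimes\dpa_{\psi}$ with the induced player~1 strategy; this is a harmless simplification, since $\psi$ already holds on every outcome by~(1).
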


\begin{proof}
  Item~(2) implies item~(1): we have $\vec{x}\,' > \vec{x}$, and by
  \autoref{thm:fm-correspondence} a finite-memory witness against all
  finite-memory counter-strategies withstands all counter-strategies.

  Conversely, fix a finite-memory $C$-strategy $\vec{\sigma}_C$ satisfying
  item~(1). By \autoref{thm:fm-correspondence}(1) it induces a finite-memory
  player~$1$ strategy in $\game^{C}\otimes\dpa_{\psi}$. Let $H$ be the finite
  graph obtained by fixing that strategy. The projections of the infinite
  paths of $H$ from $(s,q^{0})$ are exactly the outcomes consistent with
  $\vec{\sigma}_C$, and every such path satisfies the parity condition of
  $\dpa_{\psi}$.

  Fix $i \in C$ and let $y_i$ be the minimum mean weight in dimension $i$
  among the reachable simple cycles of $H$. Since every simple cycle has
  length at most $|H|$, this is a minimum over finitely many rationals with
  denominator at most $|H|$. Every finite prefix of a path of $H$ decomposes
  into simple cycles together with a residual simple path, whose total weight
  is bounded by $|H|W$ independently of the length of the prefix. As each
  simple cycle has mean at least $y_i$, every infinite path of $H$ has lower
  mean payoff at least $y_i$. Conversely, a lasso reaching and repeating a
  simple cycle attaining $y_i$ has mean payoff exactly $y_i$ in dimension
  $i$, and is the outcome of some counter-strategy. By item~(1) that payoff
  is strictly greater than $x_i$, so $y_i > x_i$. Taking $x'_i = y_i$ for
  every $i \in C$ proves item~(2).
\end{proof}

For fixed $C$, a fixed vector $\vec{x}\,'$ and $\psi = \top$, the formula in
item~(2) is a quantitative strategic atom, whose model-checking problem is
$\coNP$-complete by \autoref{prop:pure-multi}. The fixed-baseline deviation
problem additionally existentially quantifies over a rational vector
$\vec{x}\,' > \vec{x}$, which the syntax of $\ATLsmp$ cannot do.

The problem \Dominated\ of \cite{GutierrezLNSW24} existentially chooses the
deviating coalition as well as the improved payoff vector. Coalitions could
be enumerated by a disjunction over the $2^{n}-1$ non-empty subsets of $N$,
at exponential cost in formula size, but $\ATLsmp$ has no quantifier over
rational threshold vectors. \autoref{prop:fixed-baseline} therefore captures
deviations from a fixed coalition and a fixed payoff baseline. Core existence
additionally quantifies over a candidate profile and binds the deviation
thresholds to the payoff of that profile. A direct encoding would require
strategy variables and bindings, as provided by Strategy Logic
\cite{ChatterjeeHP10,MogaveroMPV14}, extended with payoff terms referring to
the outcomes induced by named strategy profiles.

\section{Concluding remarks}\label{sec:conclusion}

We have introduced $\ATLsmp$, in which a strategic modality requires one
coalition strategy to satisfy a temporal objective and a mean-payoff
guarantee simultaneously. This combined ability is not reducible to separate
qualitative and quantitative modalities (\autoref{prop:nondecomp}).

A round-preserving sequentialisation reduces each strategic subformula to a
mean-payoff parity game while ensuring that the temporal automaton advances
once per original concurrent round (\autoref{thm:seq}). With one-dimensional
constraints, model checking is $\TwoExp$-complete (\autoref{thm:main}), matching $\ATLs$;
the upper bound is driven by $\LTL$ determinisation. With arbitrary
conjunctive constraints, it remains $\TwoExp$-complete under finite-memory
semantics (\autoref{thm:multi-full}). Memoryless,
finite-memory and perfect-recall abilities form a strict hierarchy
(\autoref{thm:sep}), while finite memory achieves every threshold strictly
below the perfect-recall supremum (\autoref{thm:approx}). The required memory
may be exponential in the binary size of the threshold
(\autoref{thm:memory-bounds}).

For dichotomous $\LTL$ preferences, $\ATLs$ can encode core existence because
the improvement test is fixed independently of the candidate profile. Under
mean-payoff preferences, the deviation threshold is the payoff of that
profile, which $\ATLsmp$ cannot name. The logic nevertheless expresses
deviations from a fixed payoff baseline (\autoref{prop:fixed-baseline}).

The main open question is the decidability of multi-dimensional mean-payoff
parity games under perfect-recall strategies (\autoref{oprob:multi-pr}). The
strategy correspondence also relies on perfect information and on
deterministic strategies and transitions (\autoref{rem:quantifier}); it does
not directly extend to imperfect-information, randomised or stochastic
models.

\section*{Acknowledgement}

\noindent I thank Julian Gutierrez, Anthony~W.~Lin, Thomas Steeples and
Michael~Wooldridge for the joint work on cooperative concurrent mean-payoff games from
which this paper grew.

\bibliographystyle{alphaurl}
\bibliography{refs}

\clearpage
\appendix

\section{State and transition weights}
\label{app:weights}

\begin{rem}[State weights versus transition weights]
  We attach weights to states rather than to transitions. State and
  transition weights are not interchangeable without adjusting the
  construction. The usual
  encoding of a transition-weighted game splits each transition $s \to s'$
  through a fresh state carrying the weight of the transition, with the
  original states given weight $\vec{0}$. A play of the original game of
  length $n$ then becomes a play of length $2n$ carrying the same total
  weight, so the mean payoff is \emph{halved}, not preserved; thresholds must
  be scaled accordingly, or, equivalently, the transition weights doubled.
  Moreover the inserted states introduce stuttering and therefore do not
  preserve arbitrary $\LTL$ formulae containing $\Xop$. The scaling can be
  corrected by doubling the transition weights, or equivalently by scaling
  the thresholds; the stuttering can be handled by updating the temporal
  automaton only at the original states, as in the round-preserving product
  of \autoref{sec:seq}. We avoid both adjustments by taking state weights as
  primitive.
\end{rem}

\section{Proofs for the strategy correspondence}
\label{app:seq}

\begin{proof}[Proof of \autoref{prop:projection}]
  We argue the three claims separately. \emph{Projection.} By
  \autoref{def:seq}, each pair of consecutive edges
  $s^{t} \to (s^{t},\alpha^{t}) \to s^{t+1}$ of a play $\rho$ witnesses the
  existence of $\beta^{t} \in \vec{\Ac}_{-C}(s^{t})$ with
  $\tr(s^{t},(\alpha^{t},\beta^{t})) = s^{t+1}$, so every consecutive pair of
  $\proj(\rho)$ is a legal transition of $\game$.

  \emph{Surjectivity.} Given a play $\out = s^{0}s^{1}\cdots$ of $\game$,
  choose
  for each $t$ an action profile $\acv^{t} \in \vec{\Ac}(s^{t})$ with
  $\tr(s^{t},\acv^{t}) = s^{t+1}$ and put $\alpha^{t} = \acv^{t}_{C}$; then
  $s^{0}(s^{0},\alpha^{0})s^{1}\cdots$ is a play of $\game^{C}$ projecting to
  $\out$. If a player~$1$ strategy $\sigma_1$ is fixed, the $V_2$-vertices of
  a play consistent with $\sigma_1$ are determined by $\sigma_1$ applied to
  the prefixes already built, so the preimage is unique by induction on
  $t$.

  \emph{Mean payoff.} Write $\out = \proj(\rho) = s^{0}s^{1}\cdots$. By
  construction the weight sequence of $\rho$ is
  $w_j(s^{0})\,w_j(s^{0})\,w_j(s^{1})\,w_j(s^{1})\cdots$, each value repeated
  twice, so for an even prefix length $2n$,
  \[
    \frac{1}{2n}\sum_{t < 2n} \hat{w}_j(\rho[t])
    \;=\; \frac{1}{2n}\cdot 2\sum_{t<n} w_j(s^{t})
    \;=\; \frac{1}{n}\sum_{t<n} w_j(s^{t}) ,
  \]
  and for an odd prefix length the two quantities differ by at most
  $2W/(2n{+}1)$, which tends to $0$. Hence the two sequences of running
  averages have the same $\liminf$.
\end{proof}

\begin{proof}[Proof of \autoref{lem:projection}]
  (1) By induction on $t$. The $U_1$-vertices of $\bar{\rho}$ are
  $(s^{0},q^{0}), (s^{1},q^{1}), \dots$ where, by \autoref{def:rpp},
  $q^{t+1} = \delta(q^{t}, \lab(s^{t}))$; this is the defining recursion of
  the run of $\dpa_{\psi}$ on $\lab(s^{0})\lab(s^{1})\cdots = \lab(\out)$.

  (2) By (1), the set of automaton states visited infinitely often along
  $\bar{\rho}$ coincides with the set visited infinitely often by the run of
  $\dpa_{\psi}$ on $\lab(\out)$; the passage from $U_1$- to $U_2$-vertices
  duplicates each priority but introduces and removes none. Hence the least
  priority occurring infinitely often is the same on both sides, and since
  $\dpa_{\psi}$ is deterministic and recognises exactly the models of $\psi$,
  the two conditions agree.

  (3) The weight of a product vertex depends only on its $\St$-component, so
  $\bar{w}_j(\bar{\rho}) = \hat{w}_j(\rho)$ and \autoref{prop:projection}
  applies.
\end{proof}

\begin{proof}[Proof of \autoref{thm:seq}]
  Write $\Pi = \game^{C} \otimes \dpa_{\psi}$ and let $\mathrm{Obj}$ denote
  $\mathrm{Par}\wedge\Lambda$. We write $\proj$ also for the composite
  projection from plays of $\Pi$ to $\St^{\omega}$. By
  \autoref{lem:projection}, a play $\bar{\rho}$ of $\Pi$ lies in
  $\mathrm{Obj}$ iff $\proj(\bar{\rho}) \models \psi \wedge \Lambda$.

  \smallskip\noindent\emph{From $\game$ to $\Pi$.}
  Let $\vec{\sigma}_C$ be a $C$-strategy such that every counter-strategy
  yields an outcome satisfying $\psi$ and $\Lambda$. Define $\sigma_1$ in
  $\Pi$ as follows: for a history $\bar{h}$ of $\Pi$ ending in a
  $U_1$-vertex with automaton component $q$ and state component $s'$, put
  $\sigma_1(\bar{h}) = ((s',\alpha), q)$ where
  $\alpha = \vec{\sigma}_C(\proj(\bar{h}))$. This is a legal move, since
  $\alpha \in \vec{\Ac}_C(s')$.

  Let $\sigma_2$ be any player~$2$ strategy, let $\bar{\rho}$ be the
  resulting play from $(s,q^{0})$ and put $\out = \proj(\bar{\rho})$. By
  \autoref{prop:projection}, $\out$ is a play of $\game$, and by
  construction it is consistent with $\vec{\sigma}_C$: the joint action taken
  at round $t$ is $\vec{\sigma}_C(\out[0]\cdots\out[t])$. It remains to
  exhibit a counter-strategy realising the complement's moves. Since $\game$
  and $\vec{\sigma}_C$ are deterministic, the histories
  $h^{t} = \out[0]\cdots\out[t]$ are pairwise distinct, having distinct
  lengths; define $\vec{\sigma}_{-C}(h^{t}) = \beta^{t}$, where
  $\beta^{t} \in \vec{\Ac}_{-C}(\out[t])$ is any profile witnessing the edge
  $((\out[t], \vec{\sigma}_C(h^{t})), \out[t{+}1]) \in E$, and let
  $\vec{\sigma}_{-C}$ be arbitrary on all other histories. Then
  $\out_{s}(\vec{\sigma}_C, \vec{\sigma}_{-C}) = \out$, so by hypothesis
  $\out \models \psi \wedge \Lambda$ and hence
  $\bar{\rho} \in \mathrm{Obj}$. As $\sigma_2$ was arbitrary, $\sigma_1$ is
  winning.

  \smallskip\noindent\emph{From $\Pi$ to $\game$.}
  Let $\sigma_1$ be a winning player~$1$ strategy in $\Pi$ from $(s,q^{0})$.
  By \autoref{prop:projection}, every history $h = s^{0}\cdots s^{t}$ of $\game$
  from $s$ has at most one preimage in $\Pi$ consistent with
  $\sigma_1$ and ending in a $U_1$-vertex; write $\pre_{\sigma_1}(h)$ for
  that preimage when it exists. Explicitly,
  $\pre_{\sigma_1}(s^{0}) = (s^{0},q^{0})$ and
  \[
    \pre_{\sigma_1}(h \cdot s^{t+1}) \;=\;
      \pre_{\sigma_1}(h)\cdot \sigma_1(\pre_{\sigma_1}(h)) \cdot
      (s^{t+1}, \delta(q, \lab(s^{t}))) ,
  \]
  where $q$ is the automaton component of the last vertex of
  $\pre_{\sigma_1}(h)$; this is defined exactly when $s^{t+1}$ is a
  $\tr$-successor of $s^{t}$ under some completion of the joint action chosen
  by $\sigma_1$. Now set $\vec{\sigma}_C(h) = \alpha$, where
  $\sigma_1(\pre_{\sigma_1}(h)) = ((\mathit{last}(h), \alpha), q)$, taking
  $\vec{\sigma}_C(h)$ arbitrary on histories with no preimage.

  Let $\vec{\sigma}_{-C}$ be any counter-strategy and put
  $\out = \out_{s}(\vec{\sigma}_C, \vec{\sigma}_{-C})$. Every prefix of
  $\out$ has a preimage: at round $t$ the coalition plays
  $\alpha^{t} = \vec{\sigma}_C(h^{t})$, the complement plays some
  $\beta^{t}$, and $\tr(\out[t],(\alpha^{t},\beta^{t})) = \out[t{+}1]$
  witnesses the player~$2$ edge in $\Pi$. Hence there is a player~$2$
  strategy $\sigma_2$ whose play against $\sigma_1$ is $\pre_{\sigma_1}$
  applied to the prefixes of $\out$. Since $\sigma_1$ is winning, that play
  lies in $\mathrm{Obj}$, so $\out \models \psi \wedge \Lambda$ by
  \autoref{lem:projection}. As $\vec{\sigma}_{-C}$ was arbitrary,
  $\vec{\sigma}_C$ witnesses $\game, s \models \coal{C}_{\Lambda}\psi$.
\end{proof}

\begin{proof}[Proof of \autoref{thm:fm-correspondence}, items (1) and (2)]
  (1) The strategy $\sigma_1$ built in the proof of \autoref{thm:seq}
  consults $\vec{\sigma}_C$ on the projected history and otherwise needs only
  the current vertex, so a transducer of size $m$ for $\vec{\sigma}_C$ yields
  one of size $m$ for $\sigma_1$: the memory update reads the
  $\St$-component of the current $U_1$-vertex, which is exactly the letter
  $\vec{\sigma}_C$ would read.

  (2) In the second half of the proof of \autoref{thm:seq}, the joint action
  $\vec{\sigma}_C(h)$ is determined by the memory state of $\sigma_1$ after
  $\pre_{\sigma_1}(h)$ together with the automaton component reached. The
  latter is not a function of $\mathit{last}(h)$, so $\vec{\sigma}_C$ must
  track $\dpa_{\psi}$ itself; the product of the two gives a transducer of
  size $m\cdot|Q|$. Each agent of $C$ can implement a private copy of the transducer.

\end{proof}

\begin{proof}[Proof of \autoref{thm:fm-correspondence}, item (3)]
  Suppose some counter-strategy $\vec{\sigma}_{-C}$ produces an outcome
  $\out$ with $\out \not\models \psi\wedge\Lambda$. Let $\sigma_1$ be the
  player~$1$ strategy of \autoref{thm:fm-correspondence}(1), of memory size $m$, and
  let $\Pi[\sigma_1]$ be the finite graph obtained from $\Pi$ by fixing
  $\sigma_1$; its vertices are pairs of a $\Pi$-vertex and a memory state,
  so $|\Pi[\sigma_1]| \leq m\cdot|U_1 \cup U_2|$. The preimage of $\out$ is a
  path in $\Pi[\sigma_1]$ violating $\mathrm{Par}\wedge\Lambda$, that is,
  satisfying $\neg\mathrm{Par} \vee \neg\Lambda$. Each disjunct admits an
  ultimately periodic witness in a finite graph. For $\neg\mathrm{Par}$ this
  holds because the condition is $\omega$-regular. For $\neg\Lambda$, choose
  a conjunct $\mpf_j \geq q_j$ of $\Lambda$ that this path violates;
  if every reachable cycle of $\Pi[\sigma_1]$ had mean at least $q_j$ in
  dimension $j$, then decomposing each finite prefix into cycles and a
  residual simple path would give every infinite path $\liminf$ mean at
  least $q_j$; so such a path has $\liminf$ mean below $q_j$ only if some
  reachable cycle of $\Pi[\sigma_1]$ has mean below $q_j$ in that dimension,
  and traversing that cycle forever is an ultimately periodic witness. Fix
  such a lasso in
  $\Pi[\sigma_1]$; it is realised by a player~$2$ strategy of memory size at
  most $|\Pi[\sigma_1]|$. Translating that strategy back into a
  counter-strategy in $\game$ requires computing, at each round, the joint
  action $\vec{\sigma}_C$ prescribes; since $\vec{\sigma}_C$ has memory
  $m$, a counter-strategy of memory at most $m\cdot|\Pi[\sigma_1]|$
  suffices. This is a finite-memory counter-strategy defeating
  $\vec{\sigma}_C$, contrary to hypothesis.
\end{proof}

\begin{proof}[Proof of the final claim of \autoref{thm:fm-correspondence}]
  Item~(3) concerns a single strategic formula with
  a fixed $\LTL$ path formula, so the passage to arbitrary $\varphi$ requires
  an induction. We argue by structural induction on $\varphi$, showing that the
  two relations give it the same extension $\ext{\varphi} \subseteq \St$.

  The atomic and Boolean cases are immediate, since neither relation
  quantifies over strategies there. Let $\varphi = \coal{C}_{\Lambda}\psi$
  and let $\varphi_1,\dots,\varphi_k$ be the maximal state subformulae of
  $\psi$. By the induction hypothesis each $\ext{\varphi_i}$ is the same
  under both relations, so the extended labelling, and hence the $\LTL$
  formula $\psi'$ of \autoref{lem:subst}, are the same under both. By that
  lemma, satisfaction of $\psi$ along a play is then the fixed
  $\omega$-regular condition $\lab(\out) \models \psi'$, independent of the
  strategy classes. Applying item~(3) of \autoref{thm:fm-correspondence} to $\psi'$
  gives that a finite-memory $C$-strategy defeats every finite-memory
  counter-strategy if and only if it defeats every counter-strategy, so the
  two relations agree at $\varphi$. For the second claim, combine this with
  \autoref{thm:fm-correspondence}(1) and~(2).
\end{proof}

\section{Finite-memory mean-payoff parity games}
\label{app:energy}

\begin{proof}[Proof of \autoref{lem:fm-energy}]
  ($\Leftarrow$) A strategy winning the energy parity game with initial
  credit $c_0$ keeps every partial sum of weights at least $-c_0$ and
  satisfies the parity condition, so along any consistent play
  $\frac{1}{n}\sum_{t<n}w(v^{t}) \geq -c_0/n \to 0$ and hence
  $\mpf \geq 0$. Player~$1$ has such a strategy of memory size
  $O(|V|\cdot c\cdot W)$ \cite{ChatterjeeD12}.

  ($\Rightarrow$) Let $\sigma_1$ be a finite-memory strategy of size $m$
  winning $\mathrm{Par}\wedge(\mpf\geq0)$, and let $H$ be the finite graph
  $\game^{\bullet}[\sigma_1]$ obtained by fixing $\sigma_1$, with at most
  $|V|\cdot m$ vertices. Every reachable cycle of $H$ has non-negative
  weight sum: a cycle of negative sum could be traversed forever by
  player~$2$, yielding a consistent play of mean payoff equal to that cycle's
  mean, which is negative, contradicting the hypothesis. Hence along any path
  of $H$ the running sum never drops below $-|V|\cdot m\cdot W$, since a
  prefix decomposes into simple cycles, each of non-negative sum, and a
  simple path of length at most $|V|\cdot m$. Hence $\sigma_1$ wins the energy
  parity game with initial credit $|V|\cdot m \cdot W$.

  Membership in $\NP\cap\coNP$ and the memory bound are then those of energy
  parity games \cite{ChatterjeeD12}.
\end{proof}

\section{Direct constructions for restricted fragments}
\label{app:frag}

\begin{proof}[Proof of \autoref{lem:atl-constructions}]
  We use throughout that $\Lambda$ is prefix-independent.

  \emph{Next.} Solve the one-priority mean-payoff game $\game^{C}$ for the
  threshold of $\Lambda$ and let $R$ be player~$1$'s winning region. Then
  $\game, s \models \coal{C}_{\Lambda}\Xop\varphi$ iff there is
  $\alpha \in \vec{\Ac}_C(s)$ such that $(s,\alpha) \in R$ and every $s'$
  with $((s,\alpha),s') \in E$ satisfies $\varphi$. Indeed, such an $\alpha$
  lets the coalition force the next state into $\ext{\varphi}$ while
  retaining a strategy enforcing $\Lambda$ from $(s,\alpha)$ onwards, and
  conversely any witness prescribes such an $\alpha$ at $s$.

  \emph{Safety.} Compute by attractor computation the maximal region $R'$
  from which player~$1$ can keep the play inside $\ext{\varphi}$ forever in
  $\game^{C}$; this is a safety game, solvable in linear time. Restrict
  $\game^{C}$ to $R'$, removing player~$1$ moves that leave it. Any strategy
  enforcing $\Gop\varphi$ stays in $R'$, and any strategy staying in $R'$
  enforces $\Gop\varphi$; since $\Lambda$ is prefix-independent it is
  unaffected by the restriction. The result is a one-priority mean-payoff
  game.

  \emph{Until.} Take two copies of $\game^{C}$ indexed by a bit recording
  whether a $\varphi_2$-state has already been visited, together with a
  self-looping sink. Throughout, a vertex of $\game^{C}$ is said to satisfy a
  state formula when its $\St$-component does, which lifts
  $\ext{\varphi_1}$ and $\ext{\varphi_2}$ to the intermediate vertices as
  well.

  The vertices are $V \times \{0,1\}$ together with $\bot$, and the initial
  vertex for a state $s$ is
  \[
    \iota(s) \;=\;
    \begin{cases}
      (s, 1) & \text{if } s \in \ext{\varphi_2}, \\
      (s, 0) & \text{if } s \in \ext{\varphi_1}\setminus\ext{\varphi_2},\\
      \bot   & \text{otherwise.}
    \end{cases}
  \]
  The case distinction is what makes the construction correct when
  $\varphi_2$ already holds at $s$, so that $\varphi_1 \Uop \varphi_2$ is
  satisfied with $k = 0$. From $(v,0)$ the play moves to $(v',1)$ if $v'$
  satisfies $\varphi_2$, to $(v',0)$ if $v'$ satisfies $\varphi_1$ but not
  $\varphi_2$, and to $\bot$ otherwise; from $(v,1)$ it moves to $(v',1)$ for
  every successor $v'$. Priorities are
  $\bar{\pri}(v,0) = \bar{\pri}(\bot) = 1$ and $\bar{\pri}(v,1) = 0$, and
  weights are inherited from the $\St$-component, with $\bot$ given weight
  $0$.

  A play satisfies the parity condition iff it eventually enters the
  $1$-indexed copy, that is, iff its projection satisfies
  $\varphi_1 \Uop \varphi_2$: a play remaining in the $0$-indexed copy or
  entering $\bot$ has least priority $1$ occurring infinitely often and is
  losing, while a play entering the $1$-indexed copy has priority $0$
  recurring. Note that the parity condition alone does the work; no
  additional penalty on $\bot$ is needed, and none is imposed. Since
  $\Lambda$ is prefix-independent and the tail of a winning play lies
  entirely in the $1$-indexed copy, the mean payoff of a play equals that of
  its projection. The arena has $2|V| + 1$ vertices and two priorities.
\end{proof}

\begin{proof}[Proof of \autoref{lem:gr1-dpa}]
  Write $A = \bigwedge_{\ell\leq m}\Gop\Fop\psi_{\ell}$ and
  $B = \bigwedge_{r\leq k}\Gop\Fop\vartheta_{r}$, so
  $\theta = \neg A \vee B$. A counter $\ell \in \{1,\dots,m\}$ is incremented
  modulo $m$ whenever $\psi_{\ell}$ holds, and the event $a$ is its
  wrap-around; then $A$ holds iff $a$ occurs infinitely often. The
  guarantees are treated the same way, with counter $r$ and event $b$.

  The events $a$ and $b$ are produced by transitions, whereas
  \autoref{def:rpp} reads priorities off states, so the event is recorded in
  the state: a control state is a triple $(\ell, r, e)$ with
  $e \in \{\mathsf{none}, a, b, ab\}$ recording which wrap-arounds were
  caused by the letter just read. This multiplies the state count by four and
  leaves the bound $O(mk)$ unchanged. Assign priority $0$ to states with
  $e \in \{b, ab\}$, priority $1$ to states with $e = a$, and priority $2$ to
  states with $e = \mathsf{none}$.

  The transition function is not tabulated over the alphabet $2^{\AP}$, which
  would be exponential in $|\AP|$. It is kept symbolic: each transition out
  of a control state is guarded by one of the Boolean formulae $\psi_{\ell}$
  or $\vartheta_{r}$, and in the product of \autoref{def:rpp} the two
  counters are updated by evaluating those formulae on $\lab(s)$, in time
  polynomial in $\size{\theta}$ per transition.
\end{proof}

\end{document}